\newtheorem{theorem}{Theorem}
\newtheorem{lemma}{Lemma}
\newtheorem{proposition}{Proposition}
\newtheorem{assumption}{Assumption}
\newtheorem{cond}{Condition}
\theoremstyle{remark}
\newcommand{\R}{\mathbb{R}}
\newcommand{\E}{\mathbb{E}}
\newcommand{\K}{\mathbb{K}}
\newcommand{\1}{\mathbb{I}}
\title{Semiparametric robust estimation of population location}
\author{
%\href{https://orcid.org/0009-0003-0811-5179}
{Ananyabrata Barua}\thanks{Pre-Doctoral research scholar supported by Walmart Centre for Tech Excellence.\\Formerly at Indian Statistical Institute, Kolkata}\\
Department of Computer Science and Automation\\
Indian Institute of Science\\
Bengaluru, India\\
\texttt{ananyabratab@iisc.ac.in}
\And
%\href{https://orcid.org/0000-0003-1416-9109}
%{\includegraphics[scale=0.06]{orcid.pdf}\hspace{1mm}Ayanendranath Basu}\\
{Ayanendranath Basu}\\
Interdisciplinary Statistical Research Unit\\
Indian Statistical Institute\\
Kolkata, India\\
\texttt{ayanbasu@isical.ac.in}
}
\date{} % empty date
\begin{document}
\maketitle

\begin{abstract}
Real‑world measurements often comprise a dominant signal contaminated by a noisy background. Robustly estimating the dominant signal in practice has been a fundamental statistical problem. Classically, mixture models have been used to cluster the heterogeneous population into homogeneous components. Modeling such data with fully parametric models risks bias under misspecification, while fully nonparametric approaches can dissipate power and computational resources. We propose a middle path: a \emph{semiparametric} method that models only the dominant component parametrically and leaves the background completely nonparametric, yet remains computationally scalable and statistically robust. So instead of outlier downweighting, traditionally done in robust statistics literature, we maximize the observed likelihood such that the noisy background is absorbed by the nonparametric component.
Computationally, we propose a new FFT-accelerated approximate likelihood maximization algorithm. Empirically, this FFT plug‑in achieves order‑of‑magnitude speedups over vanilla weighted EM while preserving statistical accuracy and large sample properties.

% Under mild regularity and an undersmoothing regime, we prove uniform convergence of the plug–in log–likelihood, consistency of the maximizer \(\hat\theta_n\), and a \(\sqrt{n}\)–CLT with sandwich covariance identical to the oracle parametric score. We further give a sequential “peel‑off’’ scheme that repeatedly fits the dominant mode on the residual to recover secondary peaks in highly heterogeneous populations. Empirically, the FFT plug‑in achieves order‑of‑magnitude speedups over vanilla weighted‑KDE EM while preserving statistical accuracy.
\end{abstract}

\section{Introduction}
Finite mixture models have been a standard tool for modeling heterogeneous populations, discovering subpopulations, and performing model‑based clustering. Classical formulations impose a common parametric family on all components (e.g., Gaussian mixtures) and rely on identifiability assumptions or parsimonious parameterizations to render the estimation problem well‑posed \parencite{Bordes2006, ArcidiaconoJones2003, Dasgupta1999, DebSahaGuntuboyinaSen2022}. The EM algorithm is the workhorse in practice \parencite{Titterington1985,McLachlanPeel2000,Dempster1977,FraleyRaftery2002}. Yet the assumptions that make parametric mixtures tractable—common family, structured covariances, or a fixed number of components—are almost never aligned with actual noisy datasets. In many applications the investigator primarily cares about the \emph{most dominant signal peak} (its location and mass), while the remainder of the distribution is a complicated amalgam of weaker modes, tails, or background phenomena that defy clear parametric specification. For such tasks, fitting a fully parametric $K$‑component model can be brittle: identifiability hinges on modeling \emph{all} components correctly, and estimation can deteriorate due to over‑fitting or weak identifiability \parencite{Teicher1963,Chen1995}.

A parallel line of work relaxes parametric assumptions on \emph{one} component. In two‑component mixtures where one distribution is known and the other is nonparametric (sometimes with symmetry or shape constraints), there are strong identifiability and estimation results for the \emph{mixing proportion} and for features of the unknown component; see, e.g., \parencite{Bordes2006} and \parencite{patra2016estimation}. In machine learning, the related \emph{mixture‑proportion estimation} problem has been studied in semi‑supervised novelty detection and positive‑unlabeled learning, with estimators and rates under separability or irreducibility conditions \parencite{BlanchardLeeScott2010,Scott2015,RamaswamyScottTewari2016}. These strands underscore a theme: if the inferential target is a \emph{weight} or a \emph{single component}, strict distributional modeling of the remainder can be avoided.

Our paper embraces this philosophy and pushes it in a direction that is \emph{distinct from classical robust statistics}. Robust methods \`a la Huber and Hampel downweight or reweight observations to resist the effect of outliers—with M‑estimators, influence‑function design, or divergence‑based criteria—but still typically posit a parametric data‑generating model and optimize a single loss over all data. Robust clustering extensions either replace Gaussian errors by heavier‑tailed $t$ components or trim a fraction of points to curb spurious solutions. While powerful for outlier resistance, these approaches differ in motivation and technique from our objective of \emph{extracting the dominant peak and its mass without modeling the rest} \parencite{Huber1964,Hampel1986,Basu1998}.

We consider a semiparametric “contamination’’ mixture
$
f(y)\;=\;\pi_0\,f_0(y;\theta)+(1-\pi_0)\,f_{\mathrm{bg}}(y),
\qquad \pi_0\in(0,1),
$
in which $f_0(\cdot;\theta)$ is a \emph{parametric family} for a dominant component (e.g., a location or location–scale family) and the background density $f_{\mathrm{bg}}$ is left \emph{completely unrestricted}. The inferential target is the dominant component’s location parameter $\theta$. We do \emph{not} attempt full identifiability of a multi‑component parametric mixture; rather, our goal is targeted recovery of the dominant component under certain regularity. This formulation sits at the intersection of semiparametric mixtures “with one unknown component’’ and mixture‑proportion estimation \parencite{Bordes2006,BlanchardLeeScott2010,Scott2015,patra2016estimation,RamaswamyScottTewari2016}.

Algorithmically, we maximize a \emph{plug–in} log‑likelihood where the unknown background $f_{\mathrm{bg}}$ is replaced by a kernel density estimator (KDE) computed once per EM iteration via an FFT‑accelerated convolution on a uniform grid with linear binning and zero padding \parencite{Silverman1982,Silverman1986,Wand1994,WandJones1995}. This reduces the nonparametric update from $O(nG)$ to $O(G\log G)$ for $G$ grid points, while off‑grid evaluations use linear interpolation whose error we bound by the usual $C^2$ truncation rate $O(\Delta^2/h^3)$ (here $\Delta$ is the grid spacing and $h$ the bandwidth), so the FFT/binned criterion differs from the exact plug–in criterion by a uniform $o_p(1)$ term.

% \paragraph{Algorithmic skeleton.}
% At iteration $t$, the E–step uses the FFT‑KDE $\tilde f_{1,n}^{\mathrm{FFT},(t)}$ to form posterior responsibilities $r_i^{(t)}$ for the background, and the M–step updates $(\theta,\pi_0)$ by maximizing the usual weighted complete‑data objective; the $\theta$‑update is a weighted MLE for $f_0(\cdot;\theta)$. Because the background is updated via a fixed (single‑pass) nonparametric map per iteration, the procedure is a generalized EM and inherits monotone ascent up to the $o_p(1)$ discretization error \parencite{Dempster1977,Wu1983,McLachlanKrishnan2007}. Computationally, linear binning plus FFT convolution (with adequate zero padding to avoid circular wrap‑around) delivers near‑linear complexity in $G$; the number of grid points can be chosen $G=O(n)$ without changing asymptotics, yet in practice $G\ll n$ suffices.

On the statistical side, we establish (i) \emph{consistency} of the FFT plug–in estimator $\hat\theta_n$ under (a) standard smoothness/undersmoothing for KDE, (b) a windowed body–tail control for the log‑likelihood, and (c) a local curvature (strong concavity) condition for the population objective around $\theta^*$; and (ii) a \emph{CLT} for smooth functionals after strengthening the discretization to $\sqrt{n}\,\varepsilon_n\!\to\!0$, where $\varepsilon_n\!=\!\tfrac{\|\mathbb K''\|_\infty}{4}\Delta_n^2/h_n^3$ is the binned‑FFT interpolation error. The proofs combine a uniform oracle LLN for the parametric part \parencite{vaart1998} with (a) uniform control of the FFT/KDE approximation on a growing body window and (b) crude but sufficient tail envelopes that shrink against $\log(1/h_n)$. The resulting consistency and CLT coincide with “what one would expect’’ if the background density were known, modulo the small plug–in remainder.

Classical robust statistics mitigates sensitivity by modifying the loss, bounding the influence function, or minimizing a robust divergence \parencite{Huber1981,Hampel1974,Maronna2019}. Our approach differs in these aspects: (1) we do not posit a global parametric model for all observations—only the dominant component is parametric, the rest is nonparametric; (2) the \emph{weights} arise from posterior membership in a mixture rather than from a downweighting scheme engineered via an influence curve. In this sense our method is closer to the semiparametric mixture literature, except here the unknown component is neither known nor shape‑constrained, and we leverage a fast nonparametric plug‑in evaluated inside the EM algorithm.

\section{Estimating a single dominant component}
We first consider the problem of estimating a single dominating subpopulation from a heterogeneous population. Towards this, we model the data $Y_1,\dots,Y_n$ through a two component mixture $f,$ which can be written as
\begin{equation}
\label{eq:model}
 f(y) \;=\; \pi_0 f_0(y;\theta) + (1-\pi_0) f_1(y),
\end{equation}
where $f_0(\cdot\,;\theta)$ is our main component of interest and it is modeled through a parametric family with unknown parameter $\theta\in\Theta\subset\R^q,$ and $f_1$ is kept unspecified, modeling the remaining mixture of the other background components. The mixing weights $\pi_0$ and $1-\pi_0$ are unknown but constant. We observe $n$ i.i.d. samples $\mathbf y = (y_1,\cdots, y_n)$ from this mixture.\\

\medskip
\noindent
The average log-likelihood function of \eqref{eq:model} is  
\begin{equation}
\mathcal L(\Theta ; \mathbf{y}) 
= \frac{1}{n} \sum_{i=1}^n 
\log \Bigl\{ 
\pi_0f_0(y_i; \theta) 
+ (1-\pi_0)f_1(y_i) 
\Bigr\},\qquad \Theta=(\pi_0,\theta)
\end{equation}
We cannot optimize the above expression directly because $f_1(y)$ is unknown. For this reason we approximate $f_1$ via a kernel density estimator 
\begin{equation}
\tilde f^{\mathrm{exact}}_{1,n}(y; \omega) 
= \frac{1}{h_n} \sum_{j=1}^n 
\omega_j \, \mathbb{K}\!\left( \frac{y_j - y}{h_n} \right),
\end{equation}
where $\mathbb K(\cdot)$ is a kernel function satisfying Condition \ref{cond:kernel}, $\omega_j$'s are weighting parameters such that $\sum_{i=1}^{n}\omega_j = 1$ and $h_n > 0$ is the bandwidth. $\omega_j$ is a function of the observed data $y_j.$ For notational simplicity we drop this explicit dependence in our expressions.\\

\medskip
\noindent
\begin{cond}[Kernel properties]\label{cond:kernel}
The kernel function $\mathbb{K}(z)$ satisfies
\[
\int \mathbb{K}(z)\,dz = 1, 
\quad 
\int \mathbb{K}(z) \, z \, dz = 0, 
\quad 
\int \mathbb{K}(z) \, z^2 \, dz < \infty.
\]
\end{cond}

\noindent
The \emph{working log-likelihood} using this plug-in estimate for the nonparametric density $\tilde f_1$ is
\[
 \tilde{ \mathcal{L}}_n(\Theta,\omega; \mathbf y) = \frac1n\sum_{i=1}^n \log\Big\{\pi_0 f_0(y_i;\theta) + (1-\pi_0)\tilde f^{\mathrm{exact}}_{1,n}(y_i)\Big\},\qquad \Theta=(\pi_0,\theta).
\]
Let $\Theta^*$ denote the truth of $\Theta$, and $f_1^*$ denote the truth of $f_1$, then from \cite{You2024sequential} we have the following theorem\\
\medskip
\noindent
\begin{theorem}
\label{thm:consistency}
If condition \ref{cond:kernel} holds and $h_n \to 0$, $n h_n \to \infty$ as $n \to \infty$, 
then under some mild regularity conditions, there exists $\hat{\omega}$ such that the maximizer $\hat{\Theta}$ of 
$\tilde{\mathcal{L}}(\Theta,\hat{\omega})$ converges to 
$\Theta^\ast$ in probability.
\end{theorem}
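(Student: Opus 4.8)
The plan is to cast the result as a standard M-estimation consistency argument in which $\tilde{\mathcal L}_n(\Theta,\hat\omega)$ is treated as a small perturbation of an \emph{oracle} criterion that knows the background $f_1^*$. First I would introduce the oracle average log-likelihood $\mathcal L_n^{\mathrm{or}}(\Theta)=\tfrac1n\sum_i\log\{\pi_0 f_0(y_i;\theta)+(1-\pi_0)f_1^*(y_i)\}$ and its population counterpart
\[
M(\Theta)=\E_{f^*}\bigl[\log\{\pi_0 f_0(Y;\theta)+(1-\pi_0)f_1^*(Y)\}\bigr],
\]
where $f^*$ is the true mixture. By Jensen's (information) inequality, $M$ is maximized at $\Theta^*$, and I would fold an identifiability/irreducibility hypothesis (in the spirit of \parencite{Bordes2006,patra2016estimation}) into the ``mild regularity conditions'' so that the population maximizer $\Theta^*$ is \emph{well-separated}, i.e.\ $\sup_{\|\Theta-\Theta^*\|\ge\delta}M(\Theta)<M(\Theta^*)$.

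The decomposition I would then use is
\[
\tilde{\mathcal L}_n(\Theta,\hat\omega)-M(\Theta)
=\bigl[\tilde{\mathcal L}_n(\Theta,\hat\omega)-\mathcal L_n^{\mathrm{or}}(\Theta)\bigr]
+\bigl[\mathcal L_n^{\mathrm{or}}(\Theta)-M(\Theta)\bigr].
\]
The second bracket vanishes uniformly over a compact $\Theta$ by a Glivenko--Cantelli argument: under continuity of $\theta\mapsto f_0(y;\theta)$ and an integrable envelope for the log-density class, the family is Glivenko--Cantelli, so $\sup_\Theta|\mathcal L_n^{\mathrm{or}}-M|=o_p(1)$. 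For the first bracket I would exploit that, with $\pi_0$ bounded away from $0$, the denominators are bounded below by $\pi_0 f_0(y;\theta)$, so $\log$ is locally Lipschitz and the integrand difference is controlled by $(1-\pi_0)|\tilde f_{1,n}(y;\hat\omega)-f_1^*(y)|$ divided by that denominator. Since the Lipschitz factor blows up where $f_0$ is tiny, I cannot bound this by a global sup; instead I would split into a growing ``body'' window, where $\sup|\tilde f_{1,n}-f_1^*|=o_p(1)$ follows from Condition~\ref{cond:kernel} together with $h_n\to0,\ nh_n\to\infty$, and a tail where a crude envelope for $\log\tilde f_{1,n}$ shrinks relative to $\log(1/h_n)$ against the vanishing tail mass.

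The crux, and the step I expect to be the main obstacle, is the construction of weights $\hat\omega$ for which the \emph{weighted} KDE targets the background $f_1^*$ rather than the full mixture $f^*$. Because the $y_j$ are drawn from $f^*$, an unweighted KDE is consistent for $f^*$; the weights must reweight each observation by (an estimate of) its posterior probability of belonging to the background, so that the weighted empirical measure approximates the sub-distribution $(1-\pi_0)f_1^*$. This creates a circular dependence, as the ideal weights depend on the unknown $(\pi_0,\theta)$ and on $f_1^*$ itself. I would resolve this by taking $\hat\omega$ to be the posterior-membership weights evaluated at the weighted-EM fixed point, and then argue that these converge to the oracle posterior weights, delivering $\tilde f_{1,n}(\cdot;\hat\omega)\xrightarrow{p}f_1^*$ uniformly on compacta and hence $\sup_\Theta\bigl|\tilde{\mathcal L}_n(\Theta,\hat\omega)-\mathcal L_n^{\mathrm{or}}(\Theta)\bigr|=o_p(1)$.

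Combining the two brackets gives $\sup_\Theta|\tilde{\mathcal L}_n(\Theta,\hat\omega)-M(\Theta)|=o_p(1)$, and together with the well-separated maximum of $M$ at $\Theta^*$ the argmax continuity theorem \parencite{vaart1998} yields $\hat\Theta\xrightarrow{p}\Theta^*$, which is the claimed consistency.
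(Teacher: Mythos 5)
Your proposal follows essentially the same route as the paper. Note first that the paper does not actually prove Theorem~\ref{thm:consistency} itself --- it imports it from \cite{You2024sequential} --- but the paper's own proof of the analogous FFT plug-in result (Theorem~\ref{thm:consistency-fft-nocf}, via Proposition~\ref{prop:UC-nocf}) has exactly your architecture: an oracle criterion with the true $f_1^*$ plugged in, the two-term decomposition into a plug-in error $R_{1,n}$ and an oracle ULLN term $R_{2,n}$, a dominated Glivenko--Cantelli argument for the latter, a body/tail split with the mean-value theorem for $\log$ and a denominator lower bound $\pi_0 c(M_n)$ on the body, crude tail envelopes of order $\log(1/h_n)$ beaten by a vanishing tail mass, and finally Wald's argmax theorem under a well-separated maximum. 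The one substantive point of divergence is the step you correctly flag as the crux: making the weighted KDE target $f_1^*$ rather than the full mixture $f$. The paper does not prove this; it is packaged as an assumption (the uniform-on-a-window consistency of the weighted KDE in Assumption~\ref{ass:kde-sup}, with pointers to the KDE literature), and for Theorem~\ref{thm:consistency} it is absorbed into the hypothesis ``there exists $\hat\omega$\dots under some mild regularity conditions.'' Your proposed resolution --- take $\hat\omega$ to be the posterior weights at the weighted-EM fixed point and argue they converge to the oracle posteriors --- is the right idea but is only a sketch: the circular dependence of the ideal weights on $(\pi_0,\theta,f_1^*)$ is the genuinely hard part, and you assert rather than establish that the fixed-point weights break it. Since the theorem is stated existentially and conditionally, this does not invalidate your argument relative to what is being claimed, but you should be aware that the load-bearing step is the one neither you nor the paper actually proves here.
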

\noindent
Under standard regularity and boundedness conditions ~\cite{You2024sequential} guarantee the identifiability of $\hat{\Theta}$ and one can only identify the component that is large enough. The component with very small weight cannot be identified. A concavity assumption further guarantees the uniqueness of the parameter estimate.
\\~\\
The algorithm for the likelihood maximization proposed in \cite{You2024sequential} requires evaluating $\tilde f^{\mathrm{exact}}_{1,n}(y; \omega) 
= \frac{1}{h_n} \sum_{j=1}^n 
\omega_j \, \mathbb{K}\!\left( \frac{y_j - y}{h_n} \right)$ at all $Y_i$. This leads to a computational complexity of $\mathcal O(n^2)$ per iteration. In finite samples their algorithm can also suffer from role flips where the parametric block absorbs the tight minority while the KDE models the diffuse majority. To counter such issues, we propose the following ideas.
%
% ============================================================
% Fast weighted KDE via binning and FFT
% ============================================================
%
\subsection{Fast weighted KDE via binning and FFT}
The kernel density estimator of the unknown component $f_1$ is
\[
\tilde{f}_1(y) = \sum_{j=1}^n \omega_j\, \mathbb K_{h_n}(y - Y_j),
\qquad \mathbb K_{h_n}(u) = \frac{1}{h_n} \mathbb K\!\left(\frac{u}{h_n}\right).
\]
where $\sum_i \omega_i = 1.$ A direct evaluation of this weighted KDE at all \(n\) data points costs \(O(n^2)\) per EM iteration. We exploit that a binned KDE is a discrete convolution and can be computed in \(O(M \log M)\) time on a grid through Fast Fourier Transform (FFT).

\paragraph{Linear binning.} Fix an equally spaced grid $x_1<\cdots<x_M$ with spacing $\Delta$ such that $x_m = x_1 + (m-1)\Delta$, $m=1,\dots,M$ and an array $c[1:M]$ initialized to zeros.
For each $Y_i$, define $p_i = (Y_i - x_1)/\Delta$, $j_i = \lfloor p_i\rfloor$, $\alpha_i = p_i - j_i \in [0,1)$.  
We split the weights linearly as:
\[
c[j_i] \!+\!= \omega_i(1-\alpha_i),
\qquad
c[j_i+1] \!+\!= \omega_i \alpha_i.
\]
Equivalently,
\[
c[m]\ :=\ \sum_{i=1}^n \omega_i\Big[(1-\alpha_i)\mathbf 1\{m=j(i)\}+\alpha_i\mathbf 1\{m=j(i)+1\}\Big],
\qquad m=0,\dots,M-1.
\]
Then $c[m]\ge0$ and $\sum_{m=0}^{M-1}c[m]=1$.

\paragraph{Discrete convolution on the grid via FFT.}
At grid points $x_m,$ we have $x_m=x_0+m\Delta$ for $m=0,\ldots,M-1$. Define the sampled kernel
$k_h[r]:=\mathbb K_{h_n}(r\Delta)$ for integer lags $r\in\mathbb Z$.
The binned KDE on the grid is the \emph{linear convolution}
\[
\tilde f^{\mathrm{FFT}}_{1,n}(x_m)
=\sum_{\ell=0}^{M-1} c[\ell]\;k_h[m-\ell],
\qquad m=0,\ldots,M-1.
\]

To compute this with an FFT, choose a kernel
radius $R\in\mathbb N$,
set $L:=2R+1$, and pick an FFT length
$P\ \ge\ M+L-1
\quad\text{(in practice, }P=2^{\lceil \log_2(M+L-1)\rceil}\text{).}$
Construct two length‑$P$ arrays:
\[
\tilde c[p]=
\begin{cases}
c[p], & 0\le p\le M-1,\\
0, & M\le p\le P-1,
\end{cases}
\qquad
\tilde k[p]=
\begin{cases}
k_h[0], & p=0,\\
k_h[p], & 1\le p\le R,\\
k_h[-(P-p)], & P-R\le p\le P-1,\\
0, & \text{otherwise}.
\end{cases}
\]
The negative lags $k_h[-1],\ldots,k_h[-R]$ are stored at positions
$P\!-\!1,\ldots,P\!-\!R$. If $K$ is symmetric then $k_h[-r]=k_h[r]$.

Compute the $P$‑point FFTs and inverse FFT:
\[
C=\mathrm{FFT}_P(\tilde c),\qquad
K=\mathrm{FFT}_P(\tilde k),\qquad
Z=C\odot K,\qquad
s=\mathrm{IFFT}_P(Z),
\]
where $\odot$ denotes elementwise multiplication. With the above padding and
placement, the first $M$ entries of $s$ equal the desired \emph{linear}
convolution values,
\[
\ \tilde f^{\mathrm{FFT}}_{1,n}(x_m)=s[m],\qquad m=0,\ldots,M-1.\
\]

\paragraph{Off-grid evaluation.}  
For sample points $y_i$, linearly interpolate between the nearest grid values $\tilde{f}(x_m)$ values. Define
\[\tilde f^{\mathrm{FFT}}_{1,n}(y_i) := (1 - \alpha_i) \tilde f^{\mathrm{FFT}}_{1,n}(x_{j(i)}) + \alpha_i \tilde f^{\mathrm{FFT}}_{1,n}(x_{j(i)+1}).\]

\paragraph{Per–iteration cost.}
One EM iteration has three steps: (i) linear binning of $n$ weighted points into $M$ grid bins ($O(n)$); (ii) one zero–padded FFT to compute the grid convolution ($O(P\log P)$); (iii) linear interpolation back to the $n$ data points ($O(n)$). The FFT length only needs to satisfy $P\ge M+L-1$, where $L$ is the number of relevant kernel taps on the grid. Because the kernel is localized, $L\ll M$, so $P=M+O(L)=\Theta(M)$ and the FFT stage is $O(M\log M)$. Hence the per–iteration time is
\[
T_{\mathrm{iter}}=O(n)+O(M\log M)+O(n)=O\!\big(n+M\log M\big),
\]
with $O(M)$ memory. The grid size scales as $M\approx \mathrm{range}/\Delta_n$, so $M$ grows with $n$ only through the chosen grid step $\Delta_n$. In practical choices $M=o(n)$, making the overall iteration effectively linear in $n$.

\paragraph{Approximation error bound.}In this FFT implementation due to first \emph{binning} the data onto a grid, convolving these binned counts with the rescaled kernel $\mathbb K_{h_n}$ on the grid via FFT, and recovering $\tilde f^{\mathrm{FFT}}_{1,n}(y)$ for arbitrary $y$ by linear interpolation, we encounter approximation error in two stages. A second–order Taylor expansion shows that both the binning step and the interpolation step introduce at most a $O(\Delta_n^2/h_n^3)$ error per kernel term. Summing over the weights and using $\sum_i\hat\omega_i=1$ leads to a uniform bound on the discrepancy between the exact weighted KDE $\tilde f^{\mathrm{exact}}_{1,n}$ and its FFT approximation $\tilde f^{\mathrm{FFT}}_{1,n}$ over both the sample points $Y_i$ and the whole grid interval.
\medskip
\begin{lemma}
\label{lem:fft-binning-interp-0based}
Assume $\mathbb K\in C^2(\R)$ be bounded with $\|\mathbb K''\|_\infty<\infty$ and $\K_{h_n}(u):=h_n^{-1}\K(u/h_n)$. Then
\[
\max_{1\le i\le n}\big|\tilde f^{\mathrm{FFT}}_{1,n}(Y_i)-\tilde f^{\mathrm{exact}}_{1,n}(Y_i)\big|
\ \le\ \frac{1}{4}\,\|\mathbb K''\|_\infty\,\frac{\Delta_n^2}{h_n^3}.
\]
Moreover, from definition of $\tilde f^{\mathrm{FFT}}_{1,n}(y)$ through linear interpolation between neighboring grid values, we have
\[
\sup_{y\in[x_0,x_{M-1}]}\big|\tilde f^{\mathrm{FFT}}_{1,n}(y)-\tilde f^{\mathrm{exact}}_{1,n}(y)\big|
\ \le\ \frac{1}{4}\,\|\mathbb K''\|_\infty\,\frac{\Delta_n^2}{h_n^3}.
\]
\end{lemma}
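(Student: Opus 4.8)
The plan is to reduce both approximation stages to a single elementary fact about linear interpolation of a $C^2$ function and then add the two contributions. The tool is the pointwise chord–interpolation remainder: for $g\in C^2(\R)$ and nodes $a,a+\Delta_n$,
\[
\sup_{t\in[0,1]}\Big|g(a+t\Delta_n)-\big[(1-t)g(a)+t\,g(a+\Delta_n)\big]\Big|\ \le\ \frac{\Delta_n^2}{8}\,\|g''\|_\infty,
\]
which follows because the error equals $-\tfrac12\,t(1-t)\,\Delta_n^2\,g''(\xi)$ for some $\xi$ and $\max_{t\in[0,1]}t(1-t)=\tfrac14$. The scaling that produces the $\Delta_n^2/h_n^3$ rate is $\|\mathbb K_{h_n}''\|_\infty=h_n^{-3}\|\mathbb K''\|_\infty$, from the prefactor $h_n^{-1}$ and two chain–rule factors $h_n^{-1}$. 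I will apply this estimate twice: once to the kernel as a function of a data point's location inside its bin (the binning error), and once to $\tilde f^{\mathrm{exact}}_{1,n}$ itself as a function of the evaluation point between grid nodes (the off–grid interpolation error).

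First I would control the binning error on the grid. Fix a node $x_m$, write $Y_i=x_{j(i)}+\alpha_i\Delta_n$, and unfold the definition of $c[\cdot]$ so that the binned convolution value $\tilde f^{\mathrm{FFT}}_{1,n}(x_m)=\sum_\ell c[\ell]\,k_h[m-\ell]$ minus the exact value $\sum_i\omega_i\,\mathbb K_{h_n}(x_m-Y_i)$ is a weighted sum of per–point discrepancies
\[
(1-\alpha_i)\,\mathbb K_{h_n}\!\big(x_m-x_{j(i)}\big)+\alpha_i\,\mathbb K_{h_n}\!\big(x_m-x_{j(i)+1}\big)-\mathbb K_{h_n}\!\big(x_m-Y_i\big).
\]
Each discrepancy is exactly the chord error of $v\mapsto\mathbb K_{h_n}(x_m-v)$ at $v=Y_i$ over $[x_{j(i)},x_{j(i)+1}]$, hence bounded by $\tfrac{\Delta_n^2}{8}\|\mathbb K_{h_n}''\|_\infty=\tfrac{\Delta_n^2}{8h_n^3}\|\mathbb K''\|_\infty$. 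Summing against the weights and using $\omega_i\ge0,\ \sum_i\omega_i=1$ gives a uniform grid bound $|e_m|\le\tfrac{\Delta_n^2}{8h_n^3}\|\mathbb K''\|_\infty$ on the binning error $e_m:=\tilde f^{\mathrm{FFT}}_{1,n}(x_m)-\tilde f^{\mathrm{exact}}_{1,n}(x_m)$.

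Next I would take an arbitrary $y\in[x_{j(y)},x_{j(y)+1}]$ with fractional coordinate $\alpha$ and use the off–grid definition together with $\tilde f^{\mathrm{FFT}}_{1,n}(x_m)=\tilde f^{\mathrm{exact}}_{1,n}(x_m)+e_m$ to split
\[
\tilde f^{\mathrm{FFT}}_{1,n}(y)-\tilde f^{\mathrm{exact}}_{1,n}(y)=\underbrace{\big[(1-\alpha)\,\tilde f^{\mathrm{exact}}_{1,n}(x_{j(y)})+\alpha\,\tilde f^{\mathrm{exact}}_{1,n}(x_{j(y)+1})-\tilde f^{\mathrm{exact}}_{1,n}(y)\big]}_{\text{interpolation remainder}}+\underbrace{\big[(1-\alpha)e_{j(y)}+\alpha\,e_{j(y)+1}\big]}_{\text{averaged binning error}}.
\]
The first bracket is the chord error of $\tilde f^{\mathrm{exact}}_{1,n}$; since $(\tilde f^{\mathrm{exact}}_{1,n})''=\sum_i\omega_i\,\mathbb K_{h_n}''(\cdot-Y_i)$ has sup-norm at most $\|\mathbb K_{h_n}''\|_\infty$, the interpolation estimate bounds it by $\tfrac{\Delta_n^2}{8h_n^3}\|\mathbb K''\|_\infty$. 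The second bracket is a convex combination of grid binning errors, hence also $\le\tfrac{\Delta_n^2}{8h_n^3}\|\mathbb K''\|_\infty$ by the previous step. Adding the two halves via the triangle inequality yields the constant $\tfrac14$ uniformly over $y\in[x_0,x_{M-1}]$, and specializing $y=Y_i$ gives the sample–point statement. The only delicate point is bookkeeping rather than analysis: one must check that the linear binning weights $(1-\alpha_i,\alpha_i)$ line up with the chord parameter so each per–point term is genuinely an interpolation remainder, and confirm that the zero–padded FFT returns exactly the \emph{linear} convolution values $\tilde f^{\mathrm{FFT}}_{1,n}(x_m)$ (no wrap–around contamination), as asserted in the construction. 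Both are routine once the index conventions are fixed, so I expect no substantive obstacle beyond careful indexing.
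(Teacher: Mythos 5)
Your proposal is correct and follows essentially the same route as the paper: the chord--interpolation remainder $\tfrac{\Delta_n^2}{8}\|g''\|_\infty$ applied once to the binning step (via $\|\mathbb K_{h_n}''\|_\infty=h_n^{-3}\|\mathbb K''\|_\infty$ and $\sum_i\omega_i=1$) and once to the off--grid evaluation, combined by the triangle inequality to yield the constant $\tfrac14$. The only cosmetic difference is that you take the second chord error with respect to $\tilde f^{\mathrm{exact}}_{1,n}$ and carry the grid binning errors as a convex combination, whereas the paper takes it with respect to the binned function and invokes the sup--norm binning bound at $y$; since both second derivatives obey the same $h_n^{-3}\|\mathbb K''\|_\infty$ bound, the two decompositions are interchangeable.
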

\subsection{Consistency and large sample properties}
\label{subsec:consistency-fft-nocf}
As stated before, we observe \emph{i.i.d.}\ data $Y_1,\dots,Y_n$ from the one–dimensional mixture
\[
f(y)\;=\;\pi_0 f_0(y;\theta)+\pi_1 f_1(y),\qquad
\pi_0\in(0,1),\ \pi_1:=1-\pi_0,
\]
where $\theta^*\in\mathcal A\subset\R$ and we maximize over a compact $\mathcal C \subset \mathcal A$ containing $\theta^*$.

\begin{theorem}
\label{thm:consistency-fft-nocf}
Under Assumptions~\ref{ass:model-id}, \ref{ass:K-h-delta}, \ref{ass:kde-sup}, \ref{ass:window-nocf},
the location estimate obtained from maximizing the FFT plug-in log likelihood
\[
\hat\theta_n\in\arg\max_{\theta\in\mathcal C} \tilde L_n(\theta).
\]
converges to the true location parameter $\theta^*$ in probability, i.e.,
\[
\hat\theta_n\ \xrightarrow{p}\ \theta^*.
\]
\end{theorem}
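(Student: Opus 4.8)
The plan is to prove Theorem~\ref{thm:consistency-fft-nocf} by the classical argmax route for $M$-estimators (\parencite{vaart1998}): identify a population criterion $L(\theta):=\E\big[\log\{\pi_0 f_0(Y;\theta)+\pi_1 f_1(Y)\}\big]$ that is uniquely and well-separatedly maximized at $\theta^*$, show that the FFT plug-in criterion $\tilde L_n(\theta)$ converges to $L(\theta)$ uniformly over the compact set $\mathcal C$ in probability, and then conclude $\hat\theta_n\xrightarrow{p}\theta^*$. The well-separation of the maximizer is exactly what the identifiability-plus-curvature content of Assumption~\ref{ass:model-id} buys: the information inequality makes $\theta^*$ a maximizer, and local strong concavity together with compactness of $\mathcal C$ upgrades this to $\sup_{\theta:\,|\theta-\theta^*|\ge\epsilon}L(\theta)<L(\theta^*)$ for every $\epsilon>0$. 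Hence the entire burden of the proof is the uniform convergence $\sup_{\theta\in\mathcal C}|\tilde L_n(\theta)-L(\theta)|\xrightarrow{p}0$.

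To obtain that, I would insert two intermediate criteria and telescope. Write $L_n(\theta):=\frac1n\sum_{i}\log\{\pi_0 f_0(Y_i;\theta)+\pi_1 f_1(Y_i)\}$ for the oracle empirical criterion built from the true background, and $\tilde L_n^{\mathrm{exact}}(\theta)$ for the same object with $f_1$ replaced by the weighted KDE $\tilde f_{1,n}^{\mathrm{exact}}$. Then
\[
\tilde L_n-L=\underbrace{(\tilde L_n-\tilde L_n^{\mathrm{exact}})}_{\text{(I) FFT error}}+\underbrace{(\tilde L_n^{\mathrm{exact}}-L_n)}_{\text{(II) KDE plug-in error}}+\underbrace{(L_n-L)}_{\text{(III) sampling error}}.
\]
Term (III) is handled by a uniform law of large numbers for the parametric part with the background \emph{known}: under the envelope/continuity content of Assumption~\ref{ass:model-id} the class $\{y\mapsto\log(\pi_0 f_0(y;\theta)+\pi_1 f_1(y)):\theta\in\mathcal C\}$ is Glivenko--Cantelli, giving $\sup_{\mathcal C}|L_n-L|\xrightarrow{p}0$. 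Terms (I) and (II) are where the nonparametric approximation enters, and both are controlled through the same Lipschitz-in-the-density mechanism.

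The key analytic device is that, for $a>0$ and $u,v\ge0$, $|\log(a+u)-\log(a+v)|\le|u-v|/a$. Applying this pointwise with $a=\pi_0 f_0(Y_i;\theta)$ shows that (I) is dominated by $\frac{\pi_1}{n}\sum_i |\tilde f_{1,n}^{\mathrm{FFT}}(Y_i)-\tilde f_{1,n}^{\mathrm{exact}}(Y_i)|/(\pi_0 f_0(Y_i;\theta))$ and (II) by the analogous expression with $|\tilde f_{1,n}^{\mathrm{exact}}-f_1|$ in the numerator. For (I), Lemma~\ref{lem:fft-binning-interp-0based} bounds the numerator uniformly by $\tfrac14\|\mathbb K''\|_\infty\Delta_n^2/h_n^3$, while for (II) Assumption~\ref{ass:kde-sup} supplies the sup-norm KDE rate $\|\tilde f_{1,n}^{\mathrm{exact}}-f_1\|_\infty\xrightarrow{p}0$. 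The obstacle is that $a=\pi_0 f_0(Y_i;\theta)$ is \emph{not} bounded below over all of $\R$: for a location family it decays in the tails, so $1/a$ can be enormous exactly where data can land.

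This is resolved by the windowed body--tail split of Assumption~\ref{ass:window-nocf}. Fix a growing window $[-B_n,B_n]$ and set $\delta_n:=\inf_{|y|\le B_n,\,\theta\in\mathcal C}\pi_0 f_0(y;\theta)>0$. On the body the floor is positive, so (I) and (II) restricted to $\{|Y_i|\le B_n\}$ are at most $(\pi_1/\delta_n)$ times the respective uniform density errors, and Assumption~\ref{ass:K-h-delta} guarantees $\delta_n^{-1}\Delta_n^2/h_n^3\to0$ and that $\delta_n^{-1}$ times the KDE sup-error is $o_p(1)$. Off the body we cannot divide by $a$; instead Assumption~\ref{ass:window-nocf} supplies a crude envelope of order $\log(1/h_n)$ for the log-integrand (using that each plug-in density is pinched between a multiple of $1/h_n$ above and its kernel self-term below), and the empirical tail mass $\frac1n\#\{|Y_i|>B_n\}$ concentrates around $\Prb(|Y|>B_n)$, so the tail contribution is $O_p\big(\log(1/h_n)\,\Prb(|Y|>B_n)\big)$. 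Assumption~\ref{ass:window-nocf} is precisely the requirement that $B_n$ grow fast enough to beat $\log(1/h_n)$ in the tail yet slowly enough that $\delta_n^{-1}$ does not spoil the body bound; this body--tail balance is the crux of the argument, since the lower floor $\delta_n$ degrades as the window widens while the tail mass only improves. With both regimes $o_p(1)$, terms (I) and (II) are uniformly $o_p(1)$, hence $\sup_{\mathcal C}|\tilde L_n-L|\xrightarrow{p}0$, and the argmax theorem delivers $\hat\theta_n\xrightarrow{p}\theta^*$.
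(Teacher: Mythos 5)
Your proposal follows essentially the same route as the paper: uniform convergence of the plug-in criterion over $\mathcal C$ via a body--tail split (body controlled by the positivity floor $c(M_n)$ from Assumption~\ref{ass:model-id}(iii) together with the sup-norm density error, tail controlled by a $\log(1/h_n)$ envelope multiplied against the shrinking tail mass from Assumption~\ref{ass:window-nocf}), followed by the argmax theorem; your three-term telescope merely separates the FFT/binning error from the KDE estimation error, whereas the paper folds both into the single rate $r_n=h_n^2+\sqrt{\log(1/h_n)/(nh_n)}+\varepsilon_n$ of Assumption~\ref{ass:kde-sup}. One caveat: your tail lower bound on the plug-in log-integrand via the ``kernel self-term'' of $\tilde f_{1,n}$ does not work, since the weight $\omega_i$ attached to $Y_i$ can be arbitrarily small (or zero); the paper instead lower-bounds the integrand by $\log\{\pi_0 f_0(y;\theta)\}$ and invokes an integrable envelope for $\sup_{\theta\in\mathcal C}|\log f_0(Y;\theta)|$ (its Assumption~\ref{ass:log-f0-envelope}, which is used in Proposition~\ref{prop:UC-nocf} but omitted from the theorem's stated hypothesis list), and you would need the same device.
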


% ============================================================
% ASYMPTOTIC NORMALITY OF THE FFT PLUG–IN MLE
% ============================================================

\begin{theorem}
\label{thm:clt-fft-nocf}
Assume the conditions of Theorem~\ref{thm:consistency-fft-nocf} together with
Assumptions~\ref{ass:diff-oracle-nocf},
\ref{ass:plugin-tail-nocf},
\ref{ass:targeted-nocf} and \ref{ass:remainder-sep-nocf}.
Then
\[
\sqrt{n}\,(\hat\theta_n-\theta^*)\ \rightsquigarrow\ \mathcal N\!\big(0,\,I(\theta^*)^{-1}\big).
\]
\end{theorem}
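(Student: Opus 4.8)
\noindent
The plan is to follow the classical M-estimator route: treat $\hat\theta_n$ as an approximate zero of the plug-in score, expand around $\theta^*$, and show that the plug-in nuisance contributes nothing at the $\sqrt n$ scale so that the limiting variance reduces to the oracle (known-$f_1$) information $I(\theta^*)^{-1}$. By the consistency in Theorem~\ref{thm:consistency-fft-nocf}, $\hat\theta_n$ lies in the interior of $\mathcal C$ with probability tending to one, so the stationarity condition $\tilde L_n'(\hat\theta_n)=0$ holds eventually (here $'=\partial_\theta$). A one-term Taylor expansion gives $0=\tilde L_n'(\theta^*)+\tilde L_n''(\bar\theta_n)(\hat\theta_n-\theta^*)$ for an intermediate $\bar\theta_n$, so
\[
\sqrt n\,(\hat\theta_n-\theta^*)=-\big[\tilde L_n''(\bar\theta_n)\big]^{-1}\,\sqrt n\,\tilde L_n'(\theta^*).
\]
I would dispatch the Hessian factor first: combining the smoothness in Assumption~\ref{ass:diff-oracle-nocf} with the uniform plug-in control of Lemma~\ref{lem:fft-binning-interp-0based} and the KDE sup-bound (Assumption~\ref{ass:kde-sup}), a uniform law of large numbers over a shrinking neighborhood of $\theta^*$ yields $\tilde L_n''(\bar\theta_n)\xrightarrow{p}-I(\theta^*)$, the limit coinciding with the oracle Hessian because the plug-in density and its $\theta$-derivatives agree with the oracle ones uniformly on the body window.

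\noindent
The substance is the asymptotic normality of $\sqrt n\,\tilde L_n'(\theta^*)$. Writing $\psi(y):=\pi_0\,\partial_\theta f_0(y;\theta^*)$ and $f=\pi_0 f_0+\pi_1 f_1$, I decompose
\[
\sqrt n\,\tilde L_n'(\theta^*)=\underbrace{\sqrt n\,L_n^{\mathrm{or}\prime}(\theta^*)}_{\text{oracle score}}+\underbrace{\sqrt n\big[\tilde L_n'(\theta^*)-\tilde L_n^{\mathrm{exact}\prime}(\theta^*)\big]}_{\text{FFT remainder}}+\underbrace{\sqrt n\big[\tilde L_n^{\mathrm{exact}\prime}(\theta^*)-L_n^{\mathrm{or}\prime}(\theta^*)\big]}_{\text{KDE remainder}},
\]
where the oracle score uses the true $f_1$. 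The oracle score is an i.i.d.\ average of the mean-zero, finite-variance variables $\psi(Y_i)/f(Y_i)$, so the Lindeberg CLT gives $\sqrt n\,L_n^{\mathrm{or}\prime}(\theta^*)\rightsquigarrow\mathcal N(0,I(\theta^*))$ with $I(\theta^*)=\E[(\psi(Y)/f(Y))^2]$. For the FFT remainder, inserting the uniform bound $|\tilde f^{\mathrm{FFT}}_{1,n}-\tilde f^{\mathrm{exact}}_{1,n}|\le\varepsilon_n$ from Lemma~\ref{lem:fft-binning-interp-0based} into the difference of reciprocated denominators and using the body-window lower bound together with the tail envelope of Assumption~\ref{ass:plugin-tail-nocf} bounds this term by $O_p(\sqrt n\,\varepsilon_n)$, which vanishes under the discretization strengthening $\sqrt n\,\varepsilon_n\to0$ supplied by Assumption~\ref{ass:remainder-sep-nocf}.

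\noindent
The KDE remainder is the crux and the expected main obstacle. Linearizing the reciprocal denominator, it equals
\[
-\sqrt n\,\frac1n\sum_{i=1}^n\frac{\pi_1\,\psi(Y_i)}{f(Y_i)^2}\big(\tilde f^{\mathrm{exact}}_{1,n}(Y_i)-f_1(Y_i)\big)+o_p(1),
\]
the discarded piece being quadratic in the KDE error and hence negligible under the sup-bounds. I would split the KDE error into its deterministic bias, which is $O(h_n^2)$ by the second-moment kernel condition (Condition~\ref{cond:kernel}) and contributes $O(\sqrt n\,h_n^2)=o(1)$ under undersmoothing, and its centered stochastic part, a weighted $V$-statistic in the $Y_i$ whose H\'ajek projection carries the only $\sqrt n$-order piece. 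The targeting built into Assumption~\ref{ass:targeted-nocf} forces the population derivative of the mixture score in the direction of $f_1$-perturbations to vanish, so this projection is mean-zero and the whole term is $o_p(1)$; the delicate point is the tail region, where $1/f$ can blow up while the KDE is small and uniform control fails, so one must pair the orthogonality cancellation on the body window with the crude tail envelopes of Assumption~\ref{ass:plugin-tail-nocf} (shrinking against $\log(1/h_n)$, as in the consistency proof) to confirm that no $\sqrt n$-order mass escapes into the tails. Granting this, Slutsky's theorem applied to the displayed Taylor identity gives $\sqrt n\,(\hat\theta_n-\theta^*)\rightsquigarrow I(\theta^*)^{-1}\,\mathcal N\!\big(0,I(\theta^*)\big)=\mathcal N\!\big(0,I(\theta^*)^{-1}\big)$.
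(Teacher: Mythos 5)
Your proof follows the same route as the paper's: interior stationarity plus a one-term Taylor expansion of the plug-in score, an oracle-score CLT (the paper's Lemma~\ref{lem:oracle-clt-nocf}), negligibility of the plug-in-minus-oracle score at the $\sqrt n$ scale via a body/tail split and linearization in the nuisance (Lemma~\ref{lem:plugin-diff-nocf}), a Hessian ULLN (Lemma~\ref{lem:oracle-hess-ulln-nocf}), and Slutsky. The one place you diverge is the linearized KDE term, which you correctly identify as the crux: the paper's Assumption~\ref{ass:targeted-nocf} literally \emph{assumes} that the windowed linear functional $\Lambda_n=\tfrac1n\sum_i\Phi(Y_i)\bigl(\tilde f^{\mathrm{FFT}}_{1,n}(Y_i)-f_1^*(Y_i)\bigr)\mathbf 1_{\{|Y_i|\le D_n\}}$ satisfies $\sqrt n\,\|\Lambda_n\|\to_p 0$, so the bias-plus-H\'ajek-projection analysis you sketch (and flag as delicate in the tails) is not required by the theorem as stated --- you may cite the assumption and stop; your attempted derivation, if kept, would need the extra conditions $\sqrt n\,h_n^2\to0$ and an explicit orthogonality statement that are not in the paper's hypothesis list. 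Two smaller misattributions: $\sqrt n\,\varepsilon_n\to0$ is not what Assumption~\ref{ass:remainder-sep-nocf} says (that assumption controls only the quadratic remainder $\sqrt n\,r_n^2/c(D_n)^3$); the linear FFT-discretization contribution is instead absorbed into $\Lambda_n$, since $\tilde f^{\mathrm{FFT}}_{1,n}-f_1^*$ already contains the binning/interpolation error, and is therefore covered by Assumption~\ref{ass:targeted-nocf} as well. With those substitutions your argument coincides with the paper's proof.
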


\subsection{Estimation algorithm}
In order to obtain the FFT plug–in estimator $\hat\theta$ and the associated weights $\hat\omega=(\hat\omega_1,\ldots,\hat\omega_n)^\top$ for the nonparametric component, we use the EM algorithm on this approximate FFT plug-in log-likelihood.
\begin{proposition}
\label{prop:fft-em-iterates-final}
At iteration $t$, given $(\theta^{(t)},\pi_0^{(t)})$ and a current background approximation
$\tilde f^{\mathrm{FFT},(t)}_{1,n}$, define $\pi_1^{(t)}:=1-\pi_0^{(t)}$ and perform:

\medskip\noindent
\textbf{E--step (posterior probabilities).}
\[
r_i^{(t)}\ :=\
\frac{\pi_0^{(t)}\,f_0(Y_i;\theta^{(t)})}
     {\pi_0^{(t)}\,f_0(Y_i;\theta^{(t)})+\pi_1^{(t)}\,\tilde f^{\mathrm{FFT},(t)}_{1,n}(Y_i)},
\qquad
s_i^{(t)}:=1-r_i^{(t)},
\]
where the off--grid evaluation is
\[
\tilde f^{\mathrm{FFT},(t)}_{1,n}(Y_i)
=(1-\alpha_i)\,\tilde f^{\mathrm{FFT},(t)}_{1,n}(x_{j(i)})
+\alpha_i\,\tilde f^{\mathrm{FFT},(t)}_{1,n}(x_{j(i)+1}).
\]

\medskip\noindent
\textbf{M--step for $\pi_0$ and $\theta$.}
Let
\[
Q^{(t)}(\pi_0,\theta)\ :=\ \frac1n\sum_{i=1}^n
\Big\{ r_i^{(t)}\,[\log \pi_0+\log f_0(Y_i;\theta)]
     + s_i^{(t)}\,\log(1-\pi_0)\Big\}.
\]
Then
\[
\pi_0^{(t+1)}=\frac{1}{n}\sum_{i=1}^n r_i^{(t)},\qquad
\theta^{(t+1)}\in\arg\max_{\theta}\ \frac1n\sum_{i=1}^n r_i^{(t)}\,\log f_0(Y_i;\theta).
\]
Equivalently, $\theta^{(t+1)}$ solves the weighted score equation
\[
\frac1n\sum_{i=1}^n r_i^{(t)}\,\partial_\theta\log f_0(Y_i;\theta)=0.
\]
\medskip\noindent
\textbf{Background update (for the next E--step).}
Normalize the posterior probabilities
\[
\omega_i^{(t)}:=\frac{s_i^{(t)}}{\sum_{j=1}^n s_j^{(t)}},\qquad \omega_i^{(t)}\ge0,\ \sum_i\omega_i^{(t)}=1,
\]
bin them to the grid
\[
c^{(t)}[m]\ :=\ \sum_{i=1}^n \omega_i^{(t)}
\Big[(1-\alpha_i)\mathbf 1\{m=j(i)\}+\alpha_i\mathbf 1\{m=j(i)+1\}\Big],
\quad m=0,\ldots,M-1,\quad \sum_m c^{(t)}[m]=1,
\]
and compute the grid convolution by zero--padded FFT (linear convolution)
\[
\tilde f^{\mathrm{FFT},(t+1)}_{1,n}(x_m)=\sum_{\ell=0}^{M-1} c^{(t)}[\ell]\;\mathbb K_{h_n}(x_m-x_\ell),\qquad m=0,\ldots,M-1.
\]
For off--grid points $Y_i$ in the next E--step use the linear interpolation
\[
\tilde f^{\mathrm{FFT},(t+1)}_{1,n}(Y_i)
=(1-\alpha_i)\,\tilde f^{\mathrm{FFT},(t+1)}_{1,n}(x_{j(i)})
+\alpha_i\,\tilde f^{\mathrm{FFT},(t+1)}_{1,n}(x_{j(i)+1}).
\]
\end{proposition}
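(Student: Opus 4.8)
The plan is to read the three displayed blocks as the E- and M-steps of the EM algorithm applied to the \emph{plug-in} two-component mixture in which the background density is held fixed at its current kernel estimate $\tilde f^{\mathrm{FFT},(t)}_{1,n}$ during the E-step. First I would augment the data with latent membership indicators $Z_i\in\{0,1\}$, where $Z_i=1$ encodes that $Y_i$ was drawn from the parametric component $f_0(\cdot;\theta)$ and $Z_i=0$ that it came from the background. With $\pi_1=1-\pi_0$, the complete-data log-likelihood is $\sum_i\{Z_i[\log\pi_0+\log f_0(Y_i;\theta)]+(1-Z_i)[\log\pi_1+\log\tilde f_{1,n}(Y_i)]\}$. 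Taking the conditional expectation of $Z_i$ given $Y_i$ and the current iterate, Bayes' rule gives exactly $r_i^{(t)}=\E[Z_i\mid Y_i]$ in the stated form, with $s_i^{(t)}=1-r_i^{(t)}$; this identifies the E-step and assembles the objective $Q^{(t)}(\pi_0,\theta)$ displayed in the proposition, the residual term $\sum_i s_i^{(t)}\log\tilde f_{1,n}(Y_i)$ being constant in $(\pi_0,\theta)$ and hence dropped.

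For the finite-dimensional M-step I would maximize $Q^{(t)}$ in $(\pi_0,\theta)$ by first-order conditions. Differentiating in $\pi_0$ gives $\frac1n\sum_i(r_i^{(t)}/\pi_0-s_i^{(t)}/(1-\pi_0))=0$; substituting $s_i^{(t)}=1-r_i^{(t)}$ and solving yields the closed form $\pi_0^{(t+1)}=\frac1n\sum_i r_i^{(t)}$. Since $\pi_0$ and $\theta$ separate additively in $Q^{(t)}$, the $\theta$-update maximizes the weighted log-likelihood $\frac1n\sum_i r_i^{(t)}\log f_0(Y_i;\theta)$, whose stationarity condition is the weighted score equation $\frac1n\sum_i r_i^{(t)}\partial_\theta\log f_0(Y_i;\theta)=0$. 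These steps are routine once the $Q$ function is in hand.

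The step I expect to be the main obstacle is justifying the \emph{background update} as a legitimate ascent/maximization step, since the background is infinite-dimensional and is not optimized by the same $Q^{(t)}$. The argument I would give is the nonparametric analog of the M-step. Viewing the working log-likelihood $\tilde{\mathcal L}_n(\Theta,\omega)$ as a function of the KDE weights $\omega$ as well, the background contribution is maximized by re-estimating the density from the responsibility-weighted sample; the natural coordinate update smooths the weighted empirical distribution by $\mathbb K_{h_n}$ with normalized weights $\omega_i^{(t)}=s_i^{(t)}/\sum_j s_j^{(t)}$. This is precisely the kernel-smoothed (npEM-type) density update for mixtures with one unspecified component, and normalization guarantees $\sum_i\omega_i^{(t)}=1$, so $\tilde f^{\mathrm{FFT},(t+1)}_{1,n}$ integrates to one and remains a valid density. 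I would then invoke the FFT construction established earlier: with the stated zero padding and wrap-around kernel placement, the first $M$ entries of the inverse transform equal the linear convolution $\sum_\ell c^{(t)}[\ell]\,\mathbb K_{h_n}(x_m-x_\ell)$ on the grid, so the grid values are computed exactly, and the off-grid values $\tilde f^{\mathrm{FFT},(t+1)}_{1,n}(Y_i)$ are the linear interpolants by definition.

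Finally I would record the two provisos that make the block a coherent EM pass. First, binning and interpolation replace the exact weighted KDE only up to the uniform error of Lemma~\ref{lem:fft-binning-interp-0based}, so the computed background agrees with the exact plug-in to $O(\Delta_n^2/h_n^3)$; hence the iteration maximizes the FFT plug-in criterion rather than the exact one, as intended. Second, the ascent property holds because freezing the current background in the E-step makes the $(\pi_0,\theta)$-update a genuine EM step for the plug-in likelihood, while the subsequent kernel re-estimation is the corresponding coordinate update for the nonparametric block in $\omega$. Assembling these observations reproduces the E-step, the $\pi_0$ and $\theta$ updates, and the background update exactly as stated.
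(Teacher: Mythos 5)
Your proposal follows essentially the same route as the paper: latent indicators $Z_i$, the complete-data log-likelihood, Bayes' rule for the E-step, first-order conditions giving the closed-form $\pi_0$ update and the weighted score equation, and the background update as a kernel re-smoothing of the normalized nonparametric responsibilities realized on the sieve/grid via binned FFT and linear interpolation. The paper's own proof is no more rigorous about the infinite-dimensional background step than you are (it simply adopts the sieve class and describes the construction), so your additional remarks on the ascent property and the $O(\Delta_n^2/h_n^3)$ proviso are a harmless elaboration rather than a divergence.
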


Our FFT-accelerated approximate EM algorithm for locating the dominant subpopulation location is presented below.

\begin{algorithm}[ht]
\caption{FFT accelerated EM for a two–component semiparametric mixture}
\label{alg:fft-plugin-em-compact}
\begin{algorithmic}
\Require Data $\{Y_i\}_{i=1}^n$; kernel $\mathbb K$; bandwidth $h_n$; FFT grid $\{x_j\}_{j=0}^{M-1}$; tolerance $\texttt{tol}$; maximum iterations $\texttt{maxit}$.
Randomly split the data by assigning $r_i^{(0)} \sim Ber(0.5).$ Initialize $\pi_0^{(0)}, \, \theta^{(0)}$ from the \textbf{M-step} in Proposition~\ref{prop:fft-em-iterates-final}
%\Require Parametric $f_0(\cdot;\theta)$ with $\theta\in\mathcal C$; initialise $\theta^{(0)}\in\mathcal C$, $\pi_0^{(0)}\in(0,1)$, and $s_i^{(0)}\in(0,1)$.
\Repeat
  \State \textbf{Background update:} Normalise weights
         $\omega_i^{(t)} \gets s_i^{(t)}\big/\sum_{k=1}^n s_k^{(t)}$;
         bin onto $\{x_j\}$ and convolve with $\mathbb K_{h_n}$ via FFT to get 
         $\{\tilde f_{1}^{(t)}(x_j)\}_{j=0}^{M-1}$
  \State \textbf{Off–grid interpolation:} For each $i$, find $j(i)$ with
         $Y_i\in[x_{j(i)},x_{j(i)+1})$. $\alpha_i = (Y_i-x_{j(i)})/\Delta$ and set\\
         $\tilde f_{1}^{(t)}(Y_i) \gets (1-\alpha_i)\tilde f_{1}^{(t)}(x_{j(i)})+\alpha_i\tilde f_{1}^{(t)}(x_{j(i)+1})$
  \State \textbf{E–step (posterior probs):}
         \[
           r_i^{(t)} \gets 
           \frac{\pi_0^{(t)} f_0(Y_i;\theta^{(t)})}{\pi_0^{(t)} f_0(Y_i;\theta^{(t)})+(1-\pi_0^{(t)})\,\tilde f_{1}^{(t)}(Y_i)},
           \qquad s_i^{(t)} \gets 1-r_i^{(t)} \quad (i=1,\dots,n)
         \]
  \State \textbf{M–step:}
         \[
           \pi_0^{(t+1)} \gets \frac{1}{n}\sum_{i=1}^n r_i^{(t)},
           \qquad
           \theta^{(t+1)} \in \arg\max_{\theta\in\mathcal C}
           \sum_{i=1}^n r_i^{(t)}\log f_0(Y_i;\theta)
         \]
  \State \textbf{Working log–likelihood:}
         \[
           \tilde{\mathcal L}^{(t+1)} \gets \frac{1}{n}\sum_{i=1}^n 
           \log\!\big\{\pi_0^{(t+1)} f_0(Y_i;\theta^{(t+1)}) + (1-\pi_0^{(t+1)})\,\tilde f_{1}^{(t)}(Y_i)\big\}
         \]
  \State $t \gets t+1$
\Until{$|L^{(t)}-L^{(t-1)}| < \texttt{tol}$ \textbf{ or } $t=\texttt{maxit}$}
\State \textbf{Output:} $\hat\theta \gets \theta^{(t)}$, \; $\hat\pi_0 \gets \pi_0^{(t)}$.
\end{algorithmic}
\end{algorithm}

\paragraph{Notes on practical usage.}
To prevent the nonparametric component from latching onto a diffuse majority or a sharp minority, we recommend the following heuristic. First, apply a simple one–dimensional $k$–means clustering to the data and split it into ``left’’ and ``right’’ clusters. Then initialize the posterior probabilities $r_i^{(0)}$ twice: once using the proportion in the smaller cluster and once using the proportion in the larger cluster. Run Algorithm 1 under both initializations and retain the estimate $\hat\theta$ corresponding to the fit with the larger estimated mixing weight $\hat\pi_0$ on the parametric component.

\section{Estimating more than 2 components}
\label{subsec:sequential-partitioning}

In a practical, more general setting, there could be multiple significant subpopulations of interest in a heterogeneous population. Motivated from ~\cite{You2024sequential}, we consider a semiparametric mixture in which some (unknown) number of 
\emph{parametric} peaks from a known family $f_0(\cdot;\theta)$ are embedded in an 
otherwise unrestricted background. The goal is to recover the parametric peaks 
sequentially, one stage at a time, using a two–component FFT plug–in EM at each stage, 
and to provide guarantees that the stage–$\ell$ procedure consistently estimates 
the $\ell$–th true peak locations.

We consider our data $Y_1,\dots,Y_n$ coming from a density $f$ on $\R$ of the form
\[
f(y)
\;=\;\sum_{j=1}^{J} \alpha_j f_0(y;\theta_j)
   \;+\;
   \Big(1-\sum_{j=1}^{J}\alpha_j\Big)\,f_{\mathrm{bg}}(y),
\qquad y\in\R,
\]
with $\alpha_j\in(0,1)$, $\sum_{j=1}^{J}\alpha_j<1$, and
$\theta_j\in\mathcal A\subset\R$. Assume that for some
$\ell\ge2$ we have already obtained consistent estimators
$(\hat\theta_j,\hat\alpha_j)$ of $(\theta_j,\alpha_j)$ for $j<\ell$.

% ------------------------------------------------------------
% Residual mixture and stage--ℓ target
% ------------------------------------------------------------

\paragraph{Residual mixture at stage $\ell$.}
Define the stage--$\ell$ residual density
\[
f^{(\ell)}(y)
:=\frac{f(y)-\sum_{j<\ell}\alpha_j f_0(y;\theta_j)}
        {1-\sum_{j<\ell}\alpha_j},\qquad y\in\R.
\]
Then $f^{(\ell)}$ is a probability density. $f^{(\ell)}$
admits a further two--component decomposition
\begin{equation}\label{eq:residual-two-comp}
f^{(\ell)}(y)
=\alpha_\ell^{(\ell)} f_0(y;\theta_\ell)
 +(1-\alpha_\ell^{(\ell)}) f_{\mathrm{bg}}^{(\ell)}(y),
\end{equation}
where
\[
\alpha_\ell^{(\ell)}:=\frac{\alpha_\ell}{1-\sum_{j<\ell}\alpha_j}\in(0,1),
\]
and $f_{\mathrm{bg}}^{(\ell)}$ is a (completely unspecified) background density
at stage~$\ell$. Our target parameter at stage~$\ell$ is
\[
\eta_\ell:=(\theta_\ell,\alpha_\ell^{(\ell)})\in\mathcal C\times\Pi_{\min},
\]
where $\mathcal C\subset\mathcal A$ is compact with
$\theta_\ell\in\mathcal C$ and
$\Pi_{\min}:=[\underline\alpha_\ell,1-\underline\alpha_\ell]$ for some
$0<\underline\alpha_\ell<\alpha_\ell^{(\ell)}<1$.

% ------------------------------------------------------------
% Estimated residual weights and stage criterion
% ------------------------------------------------------------

\paragraph{Estimated residual weights and plug--in criterion.}
Let $(\hat\theta_j,\hat\alpha_j)$, $j<\ell$, denote the stage--$(1,\dots,\ell-1)$
estimators, and let $\hat f$ be an estimator of $f$ (for instance, the fitted
mixture density from earlier stages). Define the unnormalised residual weights
\[
\widehat v_i^{(\ell)}
:=1-\sum_{j<\ell}\hat\alpha_j\,
  \frac{f_0(Y_i;\hat\theta_j)}{\hat f(Y_i)},
\qquad i=1,\dots,n,
\]
and assume $\widehat v_i^{(\ell)}\ge0$ a.s.\ for all $i$ and large $n$. Normalize
to obtain probability weights
\[
\widehat w_i^{(\ell)}
:=\frac{\widehat v_i^{(\ell)}}{\sum_{k=1}^n \widehat v_k^{(\ell)}},
\qquad i=1,\dots,n.
\]
These weights approximate sampling from $f^{(\ell)}$:
they downweight observations explained by earlier peaks.

Let $\tilde f_{\ell,n}^{\mathrm{FFT}}$ be the stage--$\ell$ background density
estimator, constructed by a weighted KDE using $\{\widehat w_i^{(\ell)}\}$ as in
Section~\ref{subsec:consistency-fft-nocf}, accelerated by FFT and linear binning.
Define, for $\eta=(\theta,\alpha)\in\mathcal C\times\Pi_{\min}$,
\[
\ell^{(\ell)}(y;\eta,u)
:=\log\{\alpha f_0(y;\theta)+(1-\alpha)u(y)\},
\]
and the stage--$\ell$ plug--in criterion
\[
\tilde {\mathcal{L}}_n^{(\ell)}(\eta)
:=\sum_{i=1}^n \widehat w_i^{(\ell)}
  \,\ell^{(\ell)}\big(Y_i;\eta,\tilde f_{\ell,n}^{\mathrm{FFT}}\big).
\]
The stage--$\ell$ estimator is any maximiser
\[
\hat\eta_{\ell,n}\in\arg\max_{\eta\in\mathcal C\times\Pi_{\min}}
\tilde L_n^{(\ell)}(\eta).
\]

The corresponding oracle criterion replaces $\tilde f_{\ell,n}^{\mathrm{FFT}}$
by $f_{\mathrm{bg}}^{(\ell)}$ and $P_n^{(\ell)}:=\sum_i\widehat w_i^{(\ell)}\delta_{Y_i}$
by $f^{(\ell)}$:
\[
\mathcal L^{(\ell)}(\eta)
:=\E_{f^{(\ell)}}\big[\ell^{(\ell)}(Y;\eta,f_{\mathrm{bg}}^{(\ell)})\big],
\qquad \eta\in\mathcal C\times\Pi_{\min}.
\]

As before, denote $\mathbf{\theta}^* = (\theta_1^*, \cdots, \theta_J^*) $ as the truth of $\mathbf{\theta}$. Then the parameter estimates $\hat\theta_1, \cdots, \hat \theta_J$ enjoy the following statistical properties.

\begin{theorem}[Stage--$\ell$ consistency]\label{thm:consistency-stage}
Under Assumptions~\ref{ass:model-id-stage}--\ref{ass:log-env-plugin-stage},
the stage--$\ell$ FFT plug--in MLE is consistent:
\[
\hat\eta_{\ell,n}=(\hat\theta_{\ell,n},\hat\alpha_{\ell,n})
\ \xrightarrow{p}\ \eta_\ell^*=(\theta_\ell^*,\alpha_\ell^{(\ell)}).
\]
\end{theorem}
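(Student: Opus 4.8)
The plan is to read Theorem~\ref{thm:consistency-stage} as a standard M--estimation consistency statement for the stage--$\ell$ plug--in criterion $\tilde{\mathcal L}_n^{(\ell)}$ and to argue by induction on the stage index $\ell$. The base case $\ell=1$ is exactly the two--component result of Theorem~\ref{thm:consistency-fft-nocf}, so I may assume that the earlier--stage estimators $(\hat\theta_j,\hat\alpha_j)$, $j<\ell$, are consistent and that $\hat f$ is uniformly consistent for $f$ on the relevant body window. Granting this, consistency follows from the argmax theorem (\cite{vaart1998}, Thm.~5.7) once two ingredients are in hand: (i) the \emph{well--separated maximum} property, namely that the oracle criterion $\mathcal L^{(\ell)}$ is uniquely maximised at $\eta_\ell^*=(\theta_\ell^*,\alpha_\ell^{(\ell)})$ over the compact set $\mathcal C\times\Pi_{\min}$; and (ii) the \emph{uniform convergence} $\sup_{\eta\in\mathcal C\times\Pi_{\min}}|\tilde{\mathcal L}_n^{(\ell)}(\eta)-\mathcal L^{(\ell)}(\eta)|\xrightarrow{p}0$.

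For ingredient (i), I would observe that the residual decomposition~\eqref{eq:residual-two-comp} presents $f^{(\ell)}$ as a genuine two--component semiparametric mixture with dominant parametric peak $f_0(\cdot;\theta_\ell)$ and unrestricted background $f_{\mathrm{bg}}^{(\ell)}$. Hence $\mathcal L^{(\ell)}$ is precisely the population objective of the single--stage problem already analysed, and the identifiability/concavity hypotheses bundled into Assumption~\ref{ass:model-id-stage} pin down $\eta_\ell^*$ as the unique maximiser; well--separation is then automatic from compactness of $\mathcal C\times\Pi_{\min}$ and upper semicontinuity of $\eta\mapsto\mathcal L^{(\ell)}(\eta)$.

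For ingredient (ii), the central step is to split the discrepancy as $\tilde{\mathcal L}_n^{(\ell)}(\eta)-\mathcal L^{(\ell)}(\eta)=A_n(\eta)+B_n(\eta)+C_n(\eta)$, where $A_n$ replaces the true background $f_{\mathrm{bg}}^{(\ell)}$ inside the log by the FFT/KDE plug--in $\tilde f_{\ell,n}^{\mathrm{FFT}}$, $B_n$ replaces the ideal residual weights $w_i^{(\ell)}$ by their estimates $\hat w_i^{(\ell)}$, and $C_n=\sum_i w_i^{(\ell)}\ell^{(\ell)}(Y_i;\eta,f_{\mathrm{bg}}^{(\ell)})-\E_{f^{(\ell)}}[\ell^{(\ell)}(Y;\eta,f_{\mathrm{bg}}^{(\ell)})]$ is the ideal--weighted law of large numbers. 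The term $A_n$ is handled by a mean--value expansion of $\log$: on the body window the mixture denominator is bounded below (using $\alpha\ge\underline\alpha_\ell$ and the lower bound on $f_0$), so $|A_n|\lesssim\sup_y|\tilde f_{\ell,n}^{\mathrm{FFT}}(y)-f_{\mathrm{bg}}^{(\ell)}(y)|$, which the triangle inequality splits into the FFT interpolation/binning error controlled by Lemma~\ref{lem:fft-binning-interp-0based} (of order $\Delta_n^2/h_n^3\to0$) and the uniform weighted--KDE rate, the tail being absorbed by the windowing inside Assumptions~\ref{ass:model-id-stage}--\ref{ass:log-env-plugin-stage}. The term $C_n$ is a uniform weighted LLN over the Glivenko--Cantelli class $\{\ell^{(\ell)}(\cdot;\eta,f_{\mathrm{bg}}^{(\ell)}):\eta\in\mathcal C\times\Pi_{\min}\}$; here I exploit the identity $v_i^{(\ell)}=(1-\sum_{j<\ell}\alpha_j)\,f^{(\ell)}(Y_i)/f(Y_i)$, so that $\sum_i w_i^{(\ell)}g(Y_i)$ is a self--normalised importance--sampling estimator of $\E_{f^{(\ell)}}[g]$ converging uniformly under the envelope integrability of Assumption~\ref{ass:log-env-plugin-stage}.

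The main obstacle is the weight--error term $B_n$, the genuinely new difficulty relative to the two--component case: the estimated weights $\hat w_i^{(\ell)}$ are built from the ratios $\hat\alpha_j f_0(Y_i;\hat\theta_j)/\hat f(Y_i)$, which couple the previous--stage estimation error into the current objective through a denominator that must be kept away from zero. I would first secure, with high probability, a uniform lower bound $\hat f\ge c>0$ on the body window (from uniform consistency of $\hat f$ together with the body--tail control), so that the map $(\hat\theta_j,\hat\alpha_j,\hat f)\mapsto \hat v_i^{(\ell)}$ is Lipschitz; the induction hypothesis $(\hat\theta_j,\hat\alpha_j)\xrightarrow{p}(\theta_j,\alpha_j)$ and $\sup|\hat f-f|\xrightarrow{p}0$ then yield $\max_i|\hat v_i^{(\ell)}-v_i^{(\ell)}|=o_p(1)$, which survives normalisation to give $\sum_i|\hat w_i^{(\ell)}-w_i^{(\ell)}|=o_p(1)$. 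Pairing this with the uniform log--envelope bounds $\sup_\eta|B_n(\eta)|=o_p(1)$. One must also verify that the \emph{same} weight error does not spoil $\tilde f_{\ell,n}^{\mathrm{FFT}}$ inside $A_n$; this is absorbed by stating the KDE--sup hypothesis for the weighted KDE built from $\{\hat w_i^{(\ell)}\}$, so that $A_n$, $B_n$, and $C_n$ are each uniformly $o_p(1)$ and (ii) holds. Combining (i) and (ii) through the argmax theorem closes the induction and delivers $\hat\eta_{\ell,n}\xrightarrow{p}\eta_\ell^*$ for every $\ell\le J$.
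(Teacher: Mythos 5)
Your overall architecture---uniform convergence of $\tilde{\mathcal L}_n^{(\ell)}$ to $\mathcal L^{(\ell)}$ plus a well-separated unique maximiser plus the argmax theorem---is exactly the paper's route (Proposition~\ref{prop:UC-stage} followed by Wald's theorem), and your treatment of the plug-in term $A_n$ (body/tail split, mean-value bound using $\alpha\ge\underline\alpha_\ell$ and $c_\ell(M_n)$, Lemma~\ref{lem:fft-binning-interp-0based} for the FFT error) mirrors the paper's control of $R_{1,n}$ essentially line by line. Where you genuinely diverge is in the weight error. The paper does not isolate a $B_n$ term at all: it folds the estimated weights directly into the empirical operator $P_n^{(\ell)}\Phi=\sum_i\widehat w_i^{(\ell)}\Phi(Y_i)$ and simply \emph{assumes}, via Assumption~\ref{ass:weights-stage}(i)--(ii), that this operator satisfies a weighted LLN toward $\E_{f^{(\ell)}}$ with weights bounded by $C_\ell/n$; your $B_n+C_n$ is then a single term $R_{2,n}$ handled by that assumption together with an $\varepsilon$-net under the envelope $H^{(\ell)}$. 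Your approach instead tries to \emph{derive} the weighted LLN by induction on $\ell$, using the identity $v_i^{(\ell)}=(1-\sum_{j<\ell}\alpha_j)f^{(\ell)}(Y_i)/f(Y_i)$ and Lipschitz continuity of $(\hat\theta_j,\hat\alpha_j,\hat f)\mapsto\hat v_i^{(\ell)}$ to get $\sum_i|\hat w_i^{(\ell)}-w_i^{(\ell)}|=o_p(1)$. That is more informative---it explains \emph{why} Assumption~\ref{ass:weights-stage} should hold and makes the sequential error propagation explicit---but it is not a proof under the theorem's stated hypotheses: the uniform consistency of $\hat f$ and the high-probability lower bound $\hat f\ge c>0$ on the body window that your Lipschitz argument needs appear nowhere in Assumptions~\ref{ass:model-id-stage}--\ref{ass:log-env-plugin-stage}, and the paper deliberately avoids them by postulating the weighted LLN directly. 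So read as a proof of the theorem as stated, you are importing extra conditions; read as a justification of Assumption~\ref{ass:weights-stage}, your induction is a useful complement the paper leaves implicit.
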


% ------------------------------------------------------------
% Additional assumptions for the stage--ℓ CLT
% ------------------------------------------------------------

% ------------------------------------------------------------
% Stage--ℓ CLT
% ------------------------------------------------------------

\begin{theorem}[Stage--$\ell$ CLT]
\label{thm:clt-stage}
Suppose the conditions of Theorem~\ref{thm:consistency-stage} hold, together with
Assumptions~\ref{ass:diff-info-stage}--\ref{ass:plugin-oracle-stage}.
Then
\[
\sqrt{n}\big(\hat\eta_{\ell,n}-\eta_\ell^*\big)
\ \rightsquigarrow\ \mathcal N\big(0,I^{(\ell)}(\eta_\ell^*)^{-1}\big).
\]
In particular, the marginal distribution of
$\sqrt{n}(\hat\theta_{\ell,n}-\theta_\ell^*)$ is normal with covariance given by
the $(\theta,\theta)$--block of $I^{(\ell)}(\eta_\ell^*)^{-1}$.
\end{theorem}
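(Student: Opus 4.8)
The plan is to treat $\hat\eta_{\ell,n}$ as an M-estimator of the stage-$\ell$ oracle criterion $\mathcal L^{(\ell)}$ and run the classical expansion-and-Slutsky argument on the residual two-component model \eqref{eq:residual-two-comp}, so that the skeleton mirrors Theorem~\ref{thm:clt-fft-nocf} with extra bookkeeping for the estimated residual weights $\widehat w_i^{(\ell)}$. Write $\psi(y;\eta,u):=\nabla_\eta\,\ell^{(\ell)}(y;\eta,u)$ for the stage-$\ell$ score and let $\dot\psi$ denote its $\eta$-Hessian. By Theorem~\ref{thm:consistency-stage}, $\hat\eta_{\ell,n}\xrightarrow{p}\eta_\ell^*$, and since $\eta_\ell^*$ is interior to $\mathcal C\times\Pi_{\min}$, with probability tending to one the maximizer is stationary, i.e. $\sum_{i=1}^n\widehat w_i^{(\ell)}\,\psi(Y_i;\hat\eta_{\ell,n},\tilde f_{\ell,n}^{\mathrm{FFT}})=0$. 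A coordinatewise mean-value expansion of this estimating equation around $\eta_\ell^*$ then yields
\[
\sqrt n\,(\hat\eta_{\ell,n}-\eta_\ell^*)=-\,\widehat H_n^{-1}\,\sqrt n\sum_{i=1}^n\widehat w_i^{(\ell)}\,\psi\big(Y_i;\eta_\ell^*,\tilde f_{\ell,n}^{\mathrm{FFT}}\big),
\]
with $\widehat H_n:=\sum_i\widehat w_i^{(\ell)}\,\dot\psi(Y_i;\bar\eta_n,\tilde f_{\ell,n}^{\mathrm{FFT}})$ and $\bar\eta_n$ on the segment joining $\hat\eta_{\ell,n}$ and $\eta_\ell^*$. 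The two factors are analyzed separately.

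First I would show $\widehat H_n\xrightarrow{p}-I^{(\ell)}(\eta_\ell^*)$ via three replacements. Consistency gives $\bar\eta_n\xrightarrow{p}\eta_\ell^*$, so a uniform law of large numbers for $\dot\psi$ over a neighborhood of $\eta_\ell^*$ (Assumption~\ref{ass:diff-info-stage} supplying integrable envelopes and continuity) freezes the argument at $\eta_\ell^*$. The plug-in background is replaced by the truth using the uniform KDE control of Assumption~\ref{ass:kde-sup} together with the FFT/binning bound of Lemma~\ref{lem:fft-binning-interp-0based}, which forces the denominators $\alpha f_0+(1-\alpha)u$ to converge uniformly and hence $\dot\psi(\cdot;\cdot,\tilde f_{\ell,n}^{\mathrm{FFT}})\to\dot\psi(\cdot;\cdot,f_{\mathrm{bg}}^{(\ell)})$ uniformly on the body window. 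Finally the estimated weights are replaced by the population importance weights $w_i^*\propto f^{(\ell)}(Y_i)/f(Y_i)$: because $(\hat\theta_j,\hat\alpha_j)_{j<\ell}$ are consistent and $\hat f\xrightarrow{p}f$, one has $\sum_i\widehat w_i^{(\ell)}\,g(Y_i)\xrightarrow{p}\E_{f^{(\ell)}}[g]$ for the relevant bounded integrable $g$ (the normalizer $\tfrac1n\sum_i(f^{(\ell)}/f)(Y_i)\to1$). Together these give $\widehat H_n\xrightarrow{p}\E_{f^{(\ell)}}[\dot\psi(\cdot;\eta_\ell^*,f_{\mathrm{bg}}^{(\ell)})]=-I^{(\ell)}(\eta_\ell^*)$ by the information identity.

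Next I would treat the score $S_n:=\sqrt n\sum_i\widehat w_i^{(\ell)}\psi(Y_i;\eta_\ell^*,\tilde f_{\ell,n}^{\mathrm{FFT}})$ through the decomposition
\[
S_n=\underbrace{\sqrt n\sum_i w_i^*\,\psi(Y_i;\eta_\ell^*,f_{\mathrm{bg}}^{(\ell)})}_{S_n^{\mathrm{orc}}}+\underbrace{\sqrt n\sum_i\widehat w_i^{(\ell)}\big[\psi(Y_i;\eta_\ell^*,\tilde f_{\ell,n}^{\mathrm{FFT}})-\psi(Y_i;\eta_\ell^*,f_{\mathrm{bg}}^{(\ell)})\big]}_{R_n^{\mathrm{plug}}}+\underbrace{\sqrt n\sum_i(\widehat w_i^{(\ell)}-w_i^*)\,\psi(Y_i;\eta_\ell^*,f_{\mathrm{bg}}^{(\ell)})}_{R_n^{\mathrm{wt}}}.
\]
After factoring the normalizer (which tends to one), $S_n^{\mathrm{orc}}$ equals $n^{-1/2}\sum_i(f^{(\ell)}/f)(Y_i)\psi(Y_i;\eta_\ell^*,f_{\mathrm{bg}}^{(\ell)})$, a sum of i.i.d. mean-zero vectors (mean zero because $\E_{f^{(\ell)}}[\psi(\cdot;\eta_\ell^*,f_{\mathrm{bg}}^{(\ell)})]=0$), so the Lindeberg CLT gives $S_n^{\mathrm{orc}}\rightsquigarrow\mathcal N(0,\Sigma)$. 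The hard part is $R_n^{\mathrm{plug}}=o_p(1)$ and $R_n^{\mathrm{wt}}=o_p(1)$. For $R_n^{\mathrm{plug}}$, the first-order contribution of perturbing the background is annihilated by the targeted/Neyman-orthogonality structure (Assumption~\ref{ass:plugin-oracle-stage}), leaving a genuinely second-order term controlled by $\sup|\tilde f_{\ell,n}^{\mathrm{FFT}}-f_{\mathrm{bg}}^{(\ell)}|$; this is driven to $o_p(n^{-1/2})$ by undersmoothing together with the binned-FFT rate $\sqrt n\,\varepsilon_n\to0$, $\varepsilon_n=\tfrac14\|\mathbb K''\|_\infty\Delta_n^2/h_n^3$, from Lemma~\ref{lem:fft-binning-interp-0based}. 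For $R_n^{\mathrm{wt}}$ I would linearize $\widehat w_i^{(\ell)}-w_i^*$ in the earlier-stage errors $(\hat\theta_j-\theta_j,\hat\alpha_j-\alpha_j)_{j<\ell}$ and $\hat f-f$; since the stage-$\ell$ score is orthogonal to these nuisance directions at $\eta_\ell^*$, their propagated contribution vanishes at first order, leaving an $o_p(1)$ remainder.

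Finally I would combine by Slutsky: $\widehat H_n^{-1}\xrightarrow{p}I^{(\ell)}(\eta_\ell^*)^{-1}$ and $S_n\rightsquigarrow\mathcal N(0,\Sigma)$ give $\sqrt n(\hat\eta_{\ell,n}-\eta_\ell^*)\rightsquigarrow\mathcal N\big(0,\,I^{(\ell)}(\eta_\ell^*)^{-1}\Sigma\,I^{(\ell)}(\eta_\ell^*)^{-1}\big)$. The information identity for the correctly specified residual model \eqref{eq:residual-two-comp}, together with the definition of the stage-$\ell$ information $I^{(\ell)}(\eta_\ell^*)$ as the curvature of the oracle criterion $\mathcal L^{(\ell)}$ (Assumption~\ref{ass:diff-info-stage}), collapses the sandwich to $I^{(\ell)}(\eta_\ell^*)^{-1}$, yielding the claim; the marginal statement for $\hat\theta_{\ell,n}$ follows by reading off the $(\theta,\theta)$-block. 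I expect the genuine obstacle to be the joint $\sqrt n$-negligibility of the previous paragraph — simultaneously neutralizing the nonparametric FFT plug-in and the inherited weight estimation — since this is where the orthogonality structure and the discretization rate $\sqrt n\,\varepsilon_n\to0$ must interlock; carefully tracking the importance-weight normalizer so that the variance identification stays consistent is the accompanying technical nuisance.
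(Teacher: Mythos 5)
Your skeleton is exactly the paper's: consistency from Theorem~\ref{thm:consistency-stage}, the first--order condition $\tilde S_n^{(\ell)}(\hat\eta_{\ell,n})=0$, a mean--value expansion giving $\sqrt n(\hat\eta_{\ell,n}-\eta_\ell^*)=-\tilde H_n^{(\ell)}(\bar\eta_n)^{-1}\sqrt n\,\tilde S_n^{(\ell)}(\eta_\ell^*)$, and Slutsky. The difference is that the paper's proof is deliberately short: Assumption~\ref{ass:score-clt-stage} directly \emph{assumes} the weighted oracle score CLT $\sqrt n\,S_n^{(\ell)}(\eta_\ell^*)\rightsquigarrow\mathcal N(0,I^{(\ell)}(\eta_\ell^*))$, and Assumption~\ref{ass:plugin-oracle-stage} directly \emph{assumes} $\sqrt n\|\tilde S_n^{(\ell)}-S_n^{(\ell)}\|\to_p0$ and the uniform Hessian closeness. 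You instead try to derive these from more primitive ingredients (importance--weight linearization, Neyman orthogonality, undersmoothing plus $\sqrt n\,\varepsilon_n\to0$). That is more ambitious and in the spirit of the single--stage Lemma~\ref{lem:plugin-diff-nocf}, but it goes beyond what the paper establishes at stage $\ell$.

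There is one genuine gap in your derivation: the variance identification. Writing $w_i^*\propto (f^{(\ell)}/f)(Y_i)$ and applying the i.i.d.\ CLT to $n^{-1/2}\sum_i (f^{(\ell)}/f)(Y_i)\,\psi(Y_i;\eta_\ell^*,f_{\mathrm{bg}}^{(\ell)})$ gives asymptotic variance
\[
\Sigma=\E_f\!\Big[\big(f^{(\ell)}/f\big)^2(Y)\,\psi\psi^\top\Big]
=\E_{f^{(\ell)}}\!\Big[\big(f^{(\ell)}/f\big)(Y)\,\psi\psi^\top\Big],
\]
which carries an extra likelihood--ratio factor relative to $I^{(\ell)}(\eta_\ell^*)=\E_{f^{(\ell)}}[\psi\psi^\top]$. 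The information identity of Assumption~\ref{ass:diff-info-stage} equates $I^{(\ell)}$ with minus the expected Hessian under $f^{(\ell)}$; it does not make $\Sigma=I^{(\ell)}(\eta_\ell^*)$, so your sandwich $I^{-1}\Sigma I^{-1}$ does not collapse to $I^{-1}$ by that identity alone. The paper avoids this exactly by postulating the weighted--score CLT with variance $I^{(\ell)}(\eta_\ell^*)$ in Assumption~\ref{ass:score-clt-stage}; if you want to prove rather than assume it, you must either show the self--normalized weights behave so that the effective variance is $I^{(\ell)}$, or accept a sandwich covariance. A smaller slip: Assumption~\ref{ass:plugin-oracle-stage} is a bare negligibility assumption, not an orthogonality structure, and the relevant KDE control at stage $\ell$ is Assumption~\ref{ass:kde-window-stage}, not Assumption~\ref{ass:kde-sup}.
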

\newpage
\begin{algorithm}[H]
\caption{Stage--$\ell$ FFT accelerated EM (sequential peak extraction)}
\label{alg:seq-fft-plugin-em}
\begin{algorithmic}[1]
\Require Data $\{Y_i\}_{i=1}^n$; stage $\ell\ge1$; previous components $(\hat\alpha_j,\hat\theta_j)_{j<\ell}$ (empty if $\ell=1$); kernel $\mathbb K$ with bandwidth $h_n$; FFT grid $\{x_j\}_{j=0}^{M-1}$ (spacing $\Delta$); tolerance \texttt{tol}; maximum iterations \texttt{maxit}.
\Ensure Stage--$\ell$ estimates $(\hat\alpha_\ell,\hat\theta_\ell)$ and background FFT KDE $\{\tilde f_\ell^{\mathrm{FFT}}(x_j)\}_{j=0}^{M-1}$.

\State \textbf{Residual weights}
\State $\hat f_{1:(\ell-1)}(y)\gets \sum_{j<\ell}\hat\alpha_j f_0(y;\hat\theta_j)$, 
       $\hat\rho_{\ell-1}\gets 1-\sum_{j<\ell}\hat\alpha_j$.
\State Raw residual scores
       $\hat v_i^{(\ell)}\gets \max\{0,\,1-\hat f_{1:(\ell-1)}(Y_i)/\max(\hat f_{1:(\ell-1)}(Y_i),\varepsilon)\}$.
\State Normalised residual weights
       $\hat w_i^{(\ell)}\gets\hat v_i^{(\ell)}/\sum_k\hat v_k^{(\ell)}$.

\Statex
\State \textbf{Initialization}
\State Choose $\theta_\ell^{(0)}$ and
       $\alpha_\ell^{(\ell,0)}\in(0,1)$.
\State Set $r_i^{(\ell,0)}\gets s_i^{(\ell,0)}\gets \tfrac12$, $t\gets0$.

\Repeat
  \State \textbf{Background FFT KDE (stage $\ell$)}
  \State $\omega_i^{(\ell,t)}\gets
         \hat w_i^{(\ell)} s_i^{(\ell,t)}\Big/\sum_k \hat w_k^{(\ell)} s_k^{(\ell,t)}$.
  \State Bin $\{(Y_i,\omega_i^{(\ell,t)})\}$ on $\{x_j\}$ and convolve with $\mathbb K_{h_n}$ via FFT
         to get $\{\tilde f_\ell^{(\mathrm{FFT},t)}(x_j)\}$.
  \State Interpolate linearly to obtain $\tilde f_\ell^{(\mathrm{FFT},t)}(Y_i)$ for all $i$.

  \Statex
  \State \textbf{E--step (stage $\ell$)}
  \For{$i=1,\dots,n$}
    \State $f_i^{(\ell,t)}\gets
        \alpha_\ell^{(\ell,t)} f_0(Y_i;\theta_\ell^{(t)})
       +(1-\alpha_\ell^{(\ell,t)})\tilde f_\ell^{(\mathrm{FFT},t)}(Y_i)$.
    \State $r_i^{(\ell,t+1)}\gets
       \alpha_\ell^{(\ell,t)} f_0(Y_i;\theta_\ell^{(t)})/f_i^{(\ell,t)}$,
       \quad $s_i^{(\ell,t+1)}\gets 1-r_i^{(\ell,t+1)}$.
  \EndFor

  \Statex
  \State \textbf{M--step (stage $\ell$)}
  \State $\alpha_\ell^{(\ell,t+1)}\gets\sum_{i=1}^n \hat w_i^{(\ell)} r_i^{(\ell,t+1)}$.
  \State $\theta_\ell^{(t+1)}\in\arg\max_{\theta\in\overline{\mathcal A}}
         \sum_{i=1}^n \hat w_i^{(\ell)} r_i^{(\ell,t+1)}\log f_0(Y_i;\theta)$.

  \Statex
  \State \textbf{Working plug--in log--likelihood}
  \State $\tilde{\mathcal L}_\ell^{(t+1)}\gets
         \sum_{i=1}^n \hat w_i^{(\ell)}
         \log\!\big\{
           \alpha_\ell^{(\ell,t+1)} f_0(Y_i;\theta_\ell^{(t+1)})
          +(1-\alpha_\ell^{(\ell,t+1)})\tilde f_\ell^{(\mathrm{FFT},t)}(Y_i)
         \big\}$.
  \State $t\gets t+1$.
\Until{$|\tilde{\mathcal L}_\ell^{(t)}-\tilde{\mathcal L}_\ell^{(t-1)}|<\texttt{tol}$ or $t=\texttt{maxit}$}

\State \textbf{Output (stage $\ell$)}
\State $\hat\theta_\ell\gets\theta_\ell^{(t)}$, \quad
       $\hat\alpha_\ell^{(\ell)}\gets\alpha_\ell^{(\ell,t)}$, \quad
       $\hat\alpha_\ell\gets\hat\alpha_\ell^{(\ell)}\big(1-\sum_{j<\ell}\hat\alpha_j\big)$,
       \quad $\tilde f_\ell^{\mathrm{FFT}}(x_j)\gets\tilde f_\ell^{(\mathrm{FFT},t)}(x_j)$.
\end{algorithmic}
\end{algorithm}

\section{Conclusion}
This work set out as a robust location estimation problem through likelihood–based mixture estimation, in a different spirit from classical techniques in robust statistics. We formalized a semiparametric contamination model, developed an FFT–accelerated plug–in EM algorithm that estimates the parametric peak while computing the unknown nonparametric background by a fast kernel density estimator. We have also proved consistency and asymptotic normality of the resulting estimates, both in a single–stage setting and in a sequential “peel–off’’ scheme that extracts multiple dominant modes one by one. Our estimates provide similar statistical large sample guarantees as the usual maximum likelihood estimates. Compared to vanilla EM, our proposed method leverages FFT–plug-in KDE to achieve near–linear or $(G\log G)$ complexity per iteration, making it practical for large-scale problems. Together, these features provide a principled and interpretable way to decompose complex and heterogeneous multi–modal distributions into a small number of well–identified peaks plus an unrestricted remainder, without sacrificing the familiar EM workflow that practitioners find easy to implement and extend.

%----------------------experiments----------------------

\section{Experiments}
We now illustrate the empirical performance of our method on a range of synthetic and real-world datasets. The synthetic examples aim to probe the ability of the algorithm to recover the dominant peak location under varying levels of contamination and multimodality, while keeping full control over the ground truth. We then turn to real datasets where the experimenter has no control over the underlying structure. We use these applications to assess the robustness of the FFT plug-in EM algorithm. These experiments provide a concrete picture of how the proposed approach behaves in finite samples and in practical settings. First we focus on the estimation of a single dominant component.

\subsection{Simulated datasets}
We first consider a mixture with a dominant component consisting of $60\%$ $N(10,1)$ and a contaminating component of $40\%$ $N(15,1)$. We generate $n = 500$ observations from this model and run Algorithm 1 on the resulting sample. The estimated centre of the dominant population is $\hat\mu = 9.98$, which is extremely close to the true centre $10$.
\begin{figure}[H]
  \centering
  \includegraphics[width=0.7\linewidth]{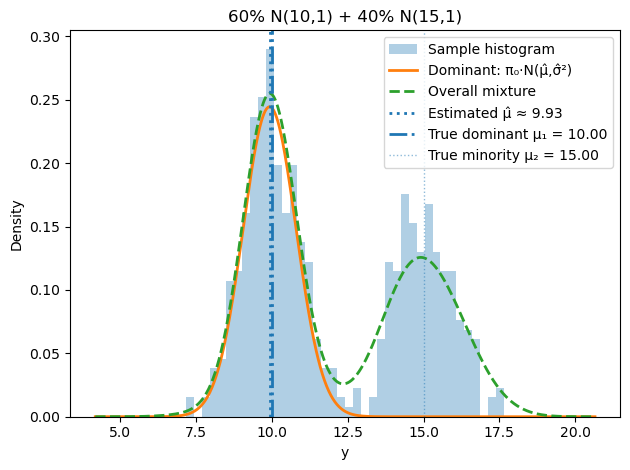} % no extension needed if PDF/PNG/JPG
  \caption{60-40 mixture of N(10,1) and N(15,1)}
  \label{exp1}
\end{figure}
Now we consider a mixture with a dominant component consisting of $60\%$ $N(10,16)$ -- a diffuse and ``spread-out" population and a relatively sharp contaminating component of $40\%$ $N(15,1)$. We sample $n = 1000$ observations from this model and run Algorithm 1 on the resulting sample. The estimated centre of the dominant population is $\hat\mu = 9.28$, which is remarkably close to the true centre $10$, despite the high variance of the primary population.
\begin{figure}[H]
  \centering
  \includegraphics[width=0.7\linewidth]{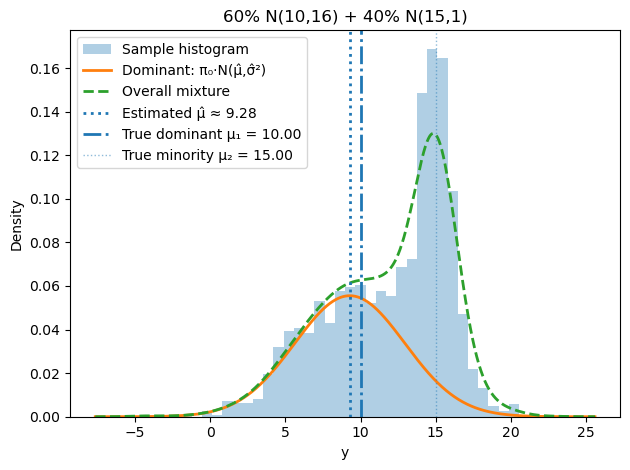} % no extension needed if PDF/PNG/JPG
  \caption{60-40 mixture of N(10,16) and N(15,1)}
  \label{exp2}
\end{figure}
Let us demonstrate our practical heuristic strategy now. Consider a mixture with a dominant component consisting of $60\%$ $N(10,1)$ and a diffuse contaminating component of $40\%$ $N(20,16)$. We again generate $n = 1000$ observations from this model and run Algorithm 1 on the resulting sample. The nonparametric part in the vanilla weighted EM algorithm \parencite{You2024sequential} absorbed the diffuse minority and estimated the centre of the dominant population as $\hat\theta = 19.91$, which is actually close to the centre of the contaminating component. While our practical method ran two simultaneous EM algorithms and returned $\hat\theta = 9.98$, which is indeed very close to the true location $10.$

\begin{figure}[H]
  \centering
  % first image
  \begin{minipage}[t]{0.48\linewidth}
    \centering
    \includegraphics[width=\linewidth]{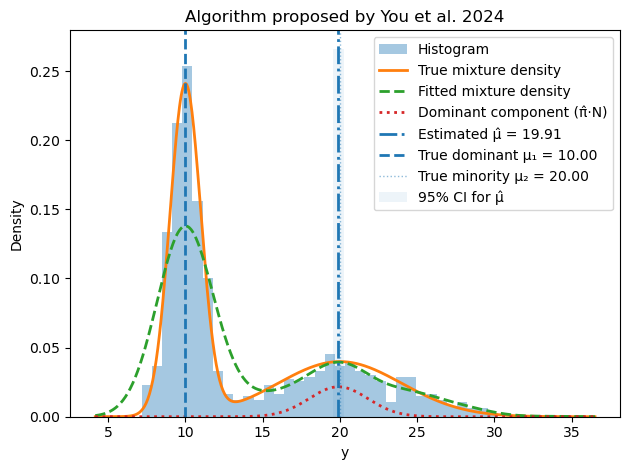}
    \caption{Flipped estimates in a 60-40 mixture of N(10,1) and N(20,16)}
    \label{exp4}
  \end{minipage}\hfill
  % second image
  \begin{minipage}[t]{0.48\linewidth}
    \centering
    \includegraphics[width=\linewidth]{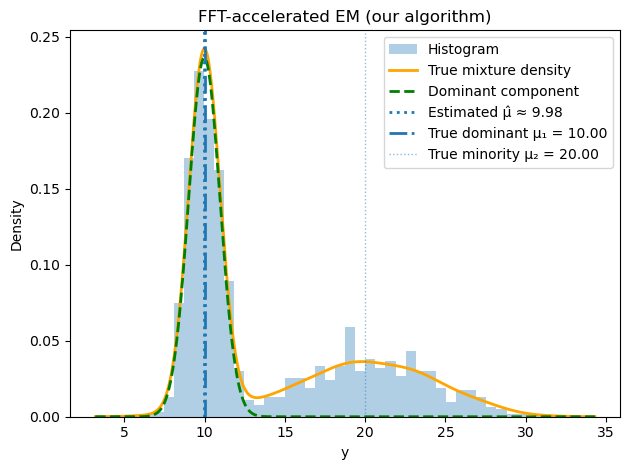}
    \caption{Correct estimates in a 60-40 mixture of N(10,1) and N(20,16)}
    \label{exp3,4}
  \end{minipage}
\end{figure}

A runtime performance for this setup is summarized in Table~\ref{tab:runtime}. All simulations were run on a standard consumer-grade computer. Note that in all the runs, the vanilla EM was unable to identify the true dominant component.
\begin{table}[ht]
    \centering
    \caption{Runtime comparison (seconds) of FFT accelerated EM vs. vanilla EM.}
    \label{tab:runtime}
    \vspace{1.5mm}
    \begin{tabular}{rcc}
        \hline
        Sample size & FFT accelerated EM (s) & Vanilla EM (s) \\
        \hline
        $n=500$   & 0.049159   & 1.391762   \\
        $n=1000$  & 0.043346   & 6.263643   \\
        $n=2000$  & 0.059048   & 18.146281  \\
        $n=3000$  & 0.059223   & 38.745702  \\
        $n=5000$  & 0.113147   & 163.754303 \\
        $n=7000$  & 0.132501   & 253.051952 \\
        $n=10000$ & 0.342913   & 1145.869190 \\
        $n=12000$ & 0.220001   & 2148.406933 \\
        $n=15000$ & 0.315351   & 6035.796481 \\
        \hline
    \end{tabular}
\end{table}

Consider a mixture with a dominant component consisting of $60\%$ \emph{Cauchy} with centre $0$, scale $2$ and a contaminating component of $40\%$ \emph{Cauchy} with centre $10$, scale $1$. We generate $n = 500$ samples from this model. But now, we model the parametric part through a normal family in our working log-likelihood and run Algorithm 1 on the sample. The estimated centre of the dominant population is $\hat\mu = -0.25$, which is very close to the true centre $0$, even after only approximately modeling the dominant component through a Normal parametric family.
\begin{figure}[H]
  \centering
  \includegraphics[width=0.7\linewidth]{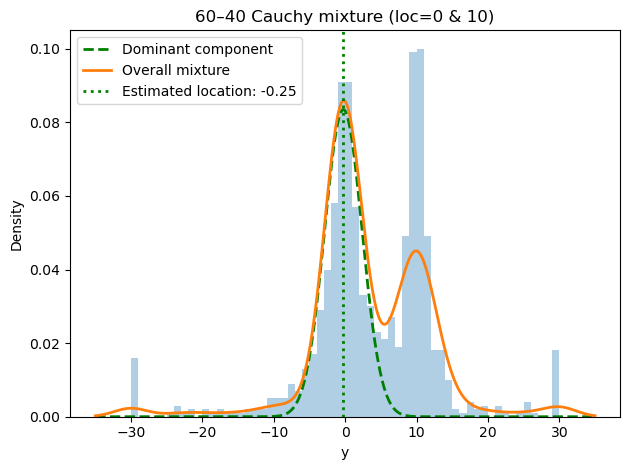}
  \caption{Cauchy mixture (60–40 at centers 0 \& 10, scales 2 \& 1)}
  \label{exp9}
\end{figure}
Again, we consider a mixture with a dominant component consisting of $60\%$ \emph{Cauchy} with centre $0$, scale $5$ (diffuse majority) and a contaminating component of $40\%$ \emph{Cauchy} with centre $10$, scale $1$ (tight minority). Generate $n = 1000$ observations from this model. Once again, we model the dominant component through a normal family in the working log-likelihood and run Algorithm 1 on the sample. The estimated centre of the dominant population is $\hat\mu = 1.08$, which is remarkably close to the true centre $0$, even when we only approximately model the larger component through a Normal family.
\begin{figure}[H]
  \centering
  \includegraphics[width=0.6\linewidth]{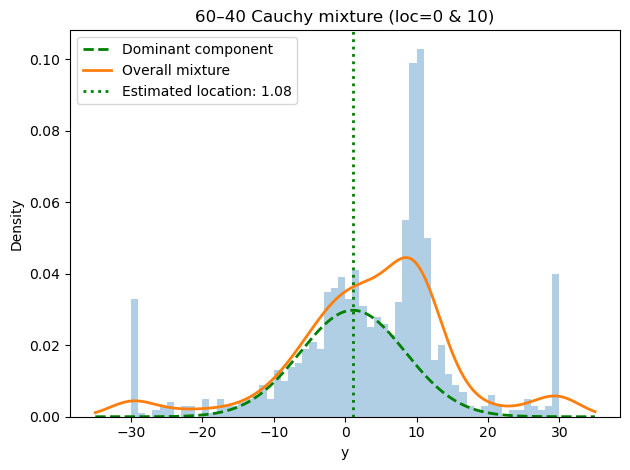} % no extension needed if PDF/PNG/JPG
  \caption{Cauchy mixture (60–40 at centers 0 \& 10, scales 5 \& 1)}
  \label{exp10}
\end{figure}

\subsection{Real datasets}
This dataset records the lengths of Snapper fish, originally collected by \cite{cassie1954some}.
The purpose of the data is to distinguish among the possible age groups to which an individual fish might belong—specifically, whether it comes from the current year's spawning cohort or from that of the previous year.
The empirical distribution is visibly bimodal, with modes located near 5 and 7.75 inches.  
The histogram shows a more pronounced peak around 5 inches, suggesting that fish from the current year's spawning are represented in greater proportion than those from the earlier cohort.

Our algorithm estimates the average length of the dominant population to be $5.385$ inches, which is very close to estimates calculated using more sophisticated techniques in other robust statistics literature.
\begin{figure}[H]
  \centering
  \includegraphics[width=0.7\linewidth]{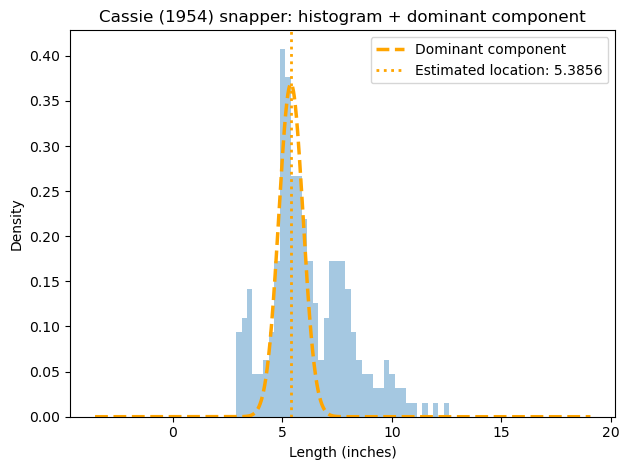} % no extension needed if PDF/PNG/JPG
  \caption{Snapper data. Estimated mean 5.3856 inches}
  \label{exp5}
\end{figure}
The second dataset is Newcomb’s classic light–speed measurements, previously analyzed by \cite{Basu1998}, \cite{brown1993approximate}, and others.  
The data records the time (in nanoseconds) required for light to traverse 7442 meters at sea level, subtracted from a baseline value of 24800 nanoseconds.
The resulting histogram shows a prominent peak around 30, along with two well-known outliers at $-2$ and $-44$. We further shift the dataset by adding 44 to each observation so that all values become non-negative.
Clearly, the estimate from our algorithm is substantially close to the peak location of the larger population and coincides with other estimates from more sophisticated robust techniques.
\begin{figure}[H]
  \centering
  \includegraphics[width=0.7\linewidth]{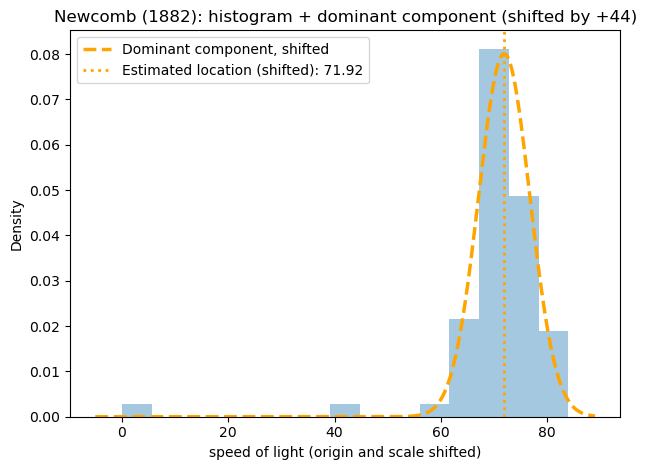} % no extension needed if PDF/PNG/JPG
  \caption{Newcomb light speed data. Estimated mean 71.922}
  \label{exp6}
\end{figure}
The next dataset contains the width–to–length ratios of $20$ beaded rectangles crafted by the Shoshoni Indians and used as decorative elements on clothing and blankets \parencite{larsen2005introduction}.  
Our interest lies in estimating the central tendency of these ratios.  
In particular, one may ask whether the Shoshoni exhibited a preference for \emph{golden rectangles}; see, for example, \cite{Livio2002GoldenRatio} for a discussion of golden rectangles.  
If such a preference existed, then the width–to–length ratios of their rectangles would be expected to cluster around the golden ratio, approximately \(0.618\).

Our algorithm returns the centre of the dominant population as $0.63,$ quite close to $0.618$, suggesting empirical evidence in favour of the suspected preference of the Shoshoni people towards golden rectangles.
\begin{figure}[H]
  \centering
  \includegraphics[width=0.7\linewidth]{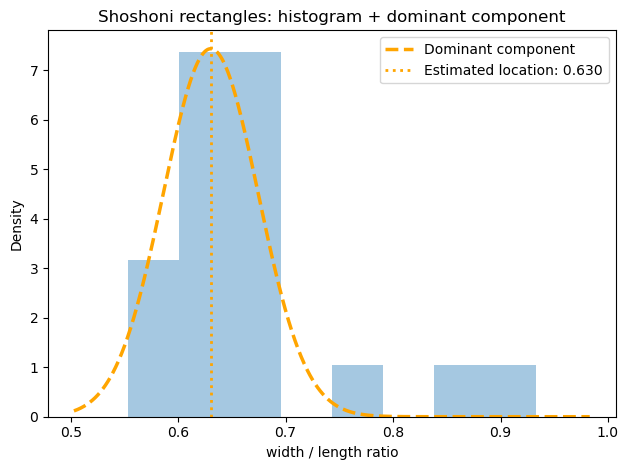} % no extension needed if PDF/PNG/JPG
  \caption{Ratio data. Estimated mean 0.630293}
  \label{exp7}
\end{figure}
The fourth dataset consists of rainfall measurements (in millimeters) recorded every fourth day during the winter months (June, July, and August) of 1981–83 in Melbourne, Australia \cite{staudte2011robust}.  
The dataset includes a substantial outlier of approximately 300 mm.  
For our analysis, we use a scale-transformed version of these observations.

The estimate from our algorithm is reasonably close to the peak location of the larger population, despite the presence of extreme outliers.
\begin{figure}[H]
  \centering
  \includegraphics[width=0.7\linewidth]{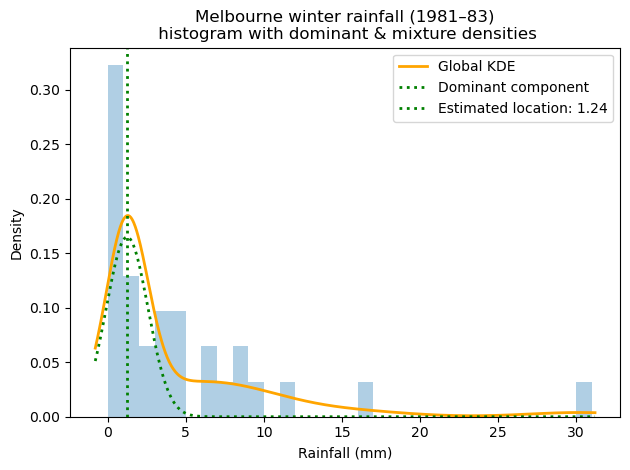} % no extension needed if PDF/PNG/JPG
  \caption{Rainfall data. Estimated mean 1.239 mm}
  \label{exp8}
\end{figure}

%\subsection{Convergence time comparison}

Now let us assess the performance of our algorithm for sequential extraction of multiple significant peaks. As before, we run our algorithm on both simulated and real world datasets.

\subsection{Simulated datasets}
We first simulate $500$ samples from a $3$-component gaussian mixture $0.45\:\mathcal N(-6,1)  + 0.3\;\mathcal{N}(0,2) + 0.25\;\mathcal{N}(8,2.5).$ Our sequential extraction scheme produces 3 peaks with centres $5.972, \; -0.063, \; 7.994$ which are very close to to the true peak locations.

\begin{figure}[H]
  \centering
  \includegraphics[width=0.7\linewidth]{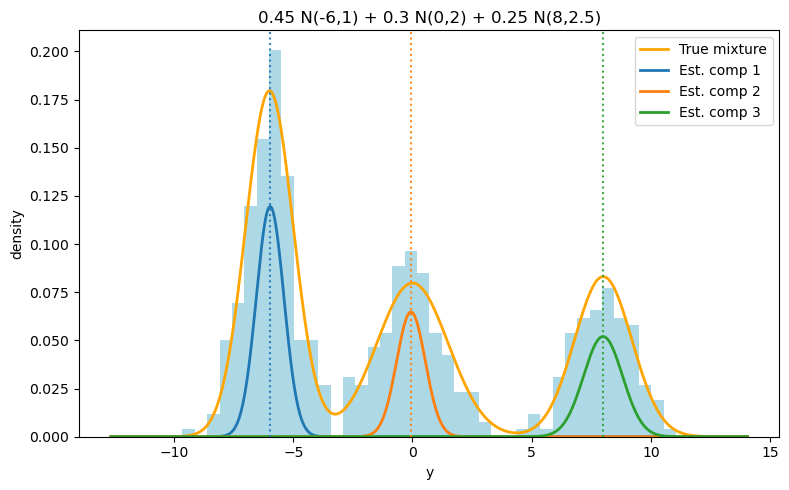} % no extension needed if PDF/PNG/JPG
  \caption{45-30-25 mixture of N(-6,1), N(0,2), N(8,2.5)}
  \label{exp11}
\end{figure}

To examine a more complicated scenario, where the peaks are not very well separated and with considerable overlap between components, we simulate $1000$ samples from a $3$-component gaussian mixture $0.3\:\mathcal N(-3,1.5)  + 0.4\;\mathcal{N}(0,1) + 0.25\;\mathcal{N}(3,1.5).$ Our sequential extraction scheme produces 3 peaks with centres $-2.866, \; 0.080, \; 2.651$. Remarkably close to the true locations inspite of significant overlap between the individual components.

\begin{figure}[H]
  \centering
  \includegraphics[width=0.7\linewidth]{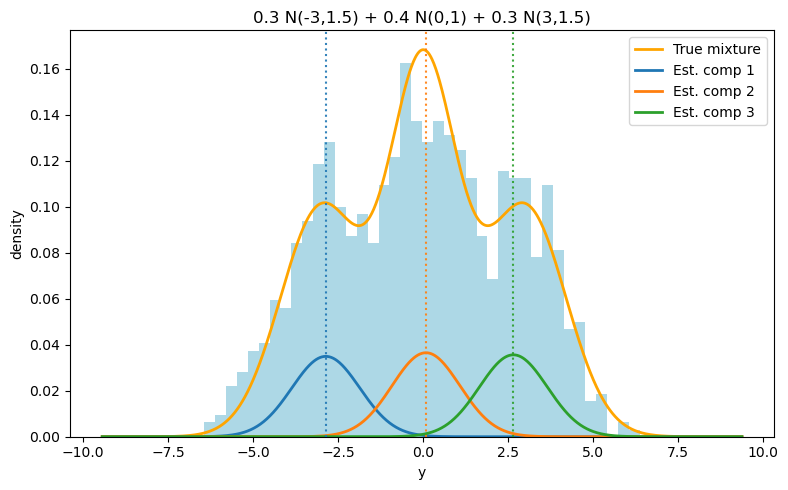} % no extension needed if PDF/PNG/JPG
  \caption{30-40-30 mixture of N(-3,1.5), N(0,1), N(3,1.5)}
  \label{exp12}
\end{figure}

Now let us see how our algorithm performs where there are even more components. We simulate $10000$ samples from a $5$-component gaussian mixture $0.15\; \mathcal N(-6,1) + 0.20\; \mathcal  N(-3,1) + 0.25\; \mathcal  N(0,1) + 0.20\; \mathcal  N(3,1) + 0.20\; \mathcal  N(6,1).$ Our algorithm produces 5 peaks with centres $-6.184, \; 3.254, \; 0.013, \; 2.994, \; 6.089$, which once again are reasonably close to the true locations.

\begin{figure}[H]
  \centering
  \includegraphics[width=0.7\linewidth]{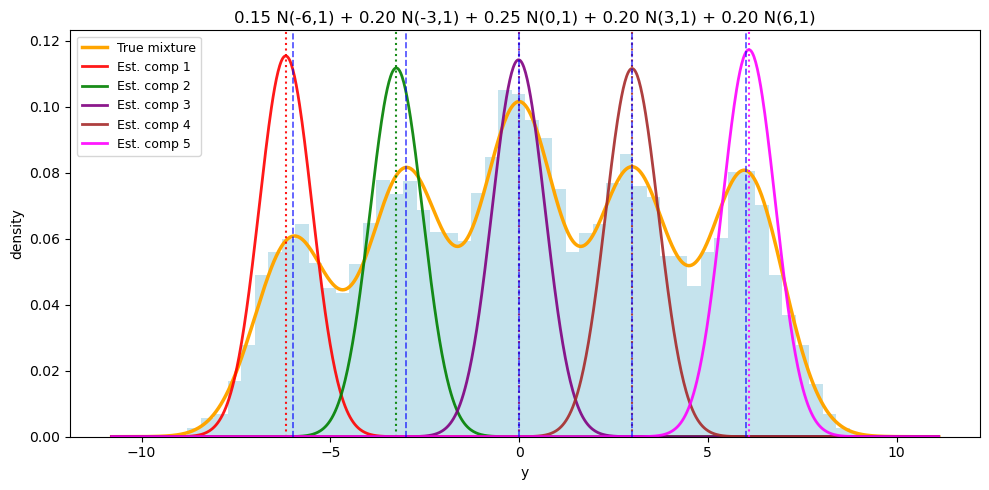} % no extension needed if PDF/PNG/JPG
  \caption{15-20-25-20-20 mixture of N(-6,1), N(-3,1), N(0,1), N(3,1), N(6,1) }
  \label{exp13}
\end{figure}

\subsection{Real datasets}
The classic Iris data set~\parencite{fisher1936iris}, contains 150 flowers with 50 observations from each of three species (\emph{Iris setosa}, \emph{Iris versicolor}, \emph{Iris virginica}) and four measurements in centimetres (sepal length, sepal width, petal length, petal width); focusing on petal length, the species-wise empirical means and standard deviations are approximately $(\mu,\sigma) = (1.46, 0.17)$ for \emph{setosa}, $(4.26, 0.47)$ for \emph{versicolor}, and $(5.55, 0.55)$ for \emph{virginica}, so that the pooled marginal distribution of petal length is a trimodal mixture with three well-separated peaks corresponding to the three species.
 On applying our sequential algorithm on this dataset, we recover three distinct peaks with centers at $1.449$, $4.493$ and $5.600$, which suggest the existence of three distinct species in the dataset. These estimates are also quite accurate and robust representatives of the empirical means observed in nature even though we have estimated these from a relatively small sample.

\begin{figure}[H]
  \centering
  \includegraphics[width=0.7\linewidth]{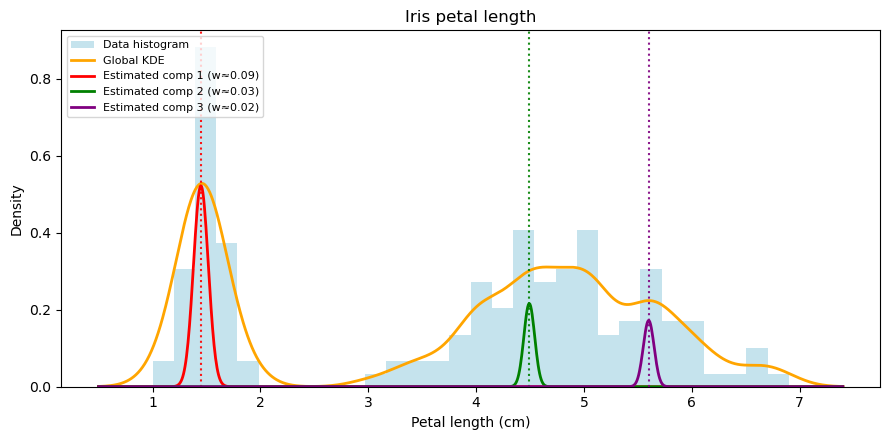} % no extension needed if PDF/PNG/JPG
  \caption{Multiple component extraction in Iris dataset}
  \label{exp14}
\end{figure}

We now return to the snapper dataset. In the current-year cohort there remains a significant fraction of fish from previous spawning seasons, which are noticeably larger in size. Although the overall population has declined due to migration and fishing practices, a non‑negligible proportion of these older fish is still present in the sample. Applying our sequential scheme to this dataset, we identified two pronounced modes at 
$5.342$ and 
$7.499$, corresponding respectively to the newer and older spawning groups.

\begin{figure}[H]
  \centering
  \includegraphics[width=0.7\linewidth]{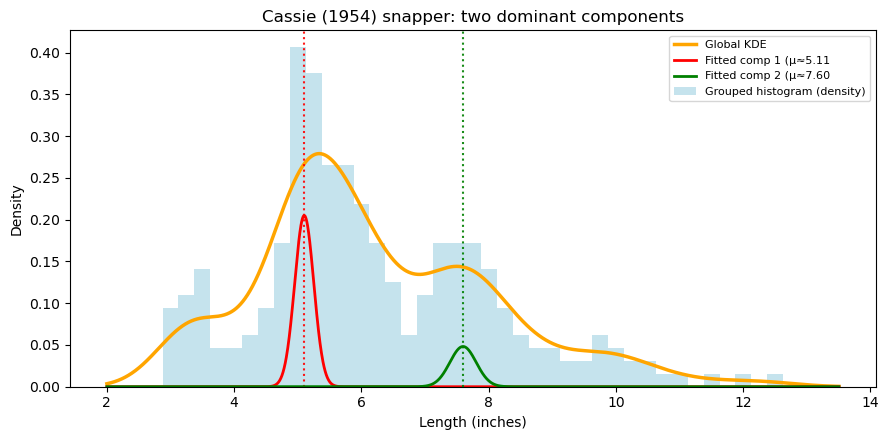} % no extension needed if PDF/PNG/JPG
  \caption{Snapper data: 2 significant components}
  \label{exp15}
\end{figure}

\appendix
\section*{Proofs and supporting results}
\subsection*{Proof of Lemma~\ref{lem:fft-binning-interp-0based} }
\paragraph{Binning error.}
Fix $y\in\R$ and set $g_y(t):=\mathbb K_{h_n}(y-t)$.
Then $g_y\in C^2$ with $\|g_y''\|_\infty=\|\mathbb K''\|_\infty/h_n^3$.
Since $Y_i=x_{j(i)}+\alpha_i\Delta_n\in[x_{j(i)},x_{j(i)+1}]$, the $C^2$ linear--interpolation remainder gives
\[
\big|g_y(Y_i)-(1-\alpha_i)g_y(x_{j(i)})-\alpha_i g_y(x_{j(i)+1})\big|
\ \le\ \frac{\Delta_n^2}{8}\,\frac{\|\mathbb K''\|_\infty}{h_n^3}.
\]
Multiplying by $\omega_i\ge0$ and summing over $i$ (using $\sum_i\omega_i=1$) yields
\[
\big\|\tilde f^{\mathrm{bin}}_{1,n}-\tilde f^{\mathrm{exact}}_{1,n}\big\|_\infty
\ \le\ \frac{1}{8}\,\|\mathbb K''\|_\infty\,\frac{\Delta_n^2}{h_n^3}.
\tag{B}
\]

\paragraph{Exact FFT on grid.}
With zero padding, the FFT computes the linear convolution on indices $0,\dots,M-1$, so
$\tilde f^{\mathrm{FFT}}_{1,n}(x_m)=\tilde f^{\mathrm{bin}}_{1,n}(x_m)$ for all $m$.

\paragraph{Off--grid interpolation at $Y_i$.}
Let $F(y):=\tilde f^{\mathrm{bin}}_{1,n}(y)=\sum_{m=0}^{M-1} c[m]\mathbb K_{h_n}(y-x_m)$.
Then $F\in C^2$ with
\[
F''(y)=\sum_{m=0}^{M-1} c[m]\,\mathbb K_{h_n}''(y-x_m)
\quad\Rightarrow\quad
\|F''\|_\infty\le \sum_m c[m]\|\mathbb \K_{h_n}''\|_\infty=\frac{\|\mathbb K''\|_\infty}{h_n^3}.
\]
On $[x_{j(i)},x_{j(i)+1}]$, the linear interpolant at $Y_i$ equals
\[
\widehat F(Y_i)=(1-\alpha_i)F(x_{j(i)})+\alpha_i F(x_{j(i)+1}),
\]
with interpolation error
\[
|F(Y_i)-\widehat F(Y_i)|\ \le\ \frac{\Delta_n^2}{8}\,\frac{\|\mathbb K''\|_\infty}{h_n^3}.
\tag{E}
\]
Because $\tilde f^{\mathrm{FFT}}_{1,n}(x_m)=F(x_m)$, our definition
$\tilde f^{\mathrm{FFT}}_{1,n}(Y_i)=\widehat F(Y_i)$ gives
\[
|\tilde f^{\mathrm{FFT}}_{1,n}(Y_i)-\tilde f^{\mathrm{bin}}_{1,n}(Y_i)|\ \le\ \frac{1}{8}\,\|\mathbb K''\|_\infty\,\frac{\Delta_n^2}{h_n^3}.
\]

\paragraph{Triangle inequality.}
Combine (B) and (E):
\[
|\tilde f^{\mathrm{FFT}}_{1,n}(Y_i)-\tilde f^{\mathrm{exact}}_{1,n}(Y_i)|
\ \le\ \frac{1}{8}\,\|\mathbb K''\|_\infty\,\frac{\Delta_n^2}{h_n^3}
      + \frac{1}{8}\,\|\mathbb K''\|_\infty\,\frac{\Delta_n^2}{h_n^3}
\ =\ \frac{1}{4}\,\|\mathbb K''\|_\infty\,\frac{\Delta_n^2}{h_n^3}.
\]
Taking $\max_i$ or $\sup_y$ gives the claim.

\subsection*{Proof of Proposition~\ref{prop:fft-em-iterates-final}}
Introduce latent indicators $Z_i\in\{0,1\}$ ($Z_i=1$ for the parametric component).
The complete--data log--likelihood is
\[
\ell_c(\pi_0,\theta,f_1)=\sum_{i=1}^n\Big\{
Z_i[\log\pi_0+\log f_0(Y_i;\theta)]
+(1-Z_i)[\log(1-\pi_0)+\log f_1(Y_i)]\Big\}.
\]
Given $(\theta^{(t)},\pi_0^{(t)})$ and the current FFT background
$\tilde f^{\mathrm{FFT},(t)}_{1,n}$, the E--step computes
\[
r_i^{(t)}=\E[Z_i\mid Y_i]=
\frac{\pi_0^{(t)} f_0(Y_i;\theta^{(t)})}
     {\pi_0^{(t)} f_0(Y_i;\theta^{(t)})+(1-\pi_0^{(t)})\tilde f^{\mathrm{FFT},(t)}_{1,n}(Y_i)},
\quad s_i^{(t)}=1-r_i^{(t)}.
\]
Taking conditional expectation of $\ell_c$ yields
\[
Q^{(t)}(\pi_0,\theta,f_1)=\frac1n\sum_{i=1}^n
\Big\{ r_i^{(t)}\,[\log\pi_0+\log f_0(Y_i;\theta)]
+s_i^{(t)}\,[\log(1-\pi_0)+\log f_1(Y_i)]\Big\}.
\]
Maximizing $Q^{(t)}$ in $\pi_0\in(0,1)$ with $\pi_1=1-\pi_0$ gives
$\sum_i\big(\frac{r_i^{(t)}}{\pi_0}-\frac{s_i^{(t)}}{1-\pi_0}\big)=0$, hence
$\pi_0^{(t+1)}=\frac1n\sum_i r_i^{(t)}$.
Maximizing in $\theta$ gives the weighted MLE
$\theta^{(t+1)}\in\arg\max_\theta \frac1n\sum_i r_i^{(t)}\log f_0(Y_i;\theta)$,
equivalently the weighted score equation stated.

For the nonparametric part, we adopt the sieve class
$\mathcal F_{h_n,\Delta_n}=\{y\mapsto \sum_{m=0}^{M-1} c[m]\K_{h_n}(y-x_m): c[m]\ge0,\sum_m c[m]=1\}$
and construct the next background by smoothing the normalized nonparametric responsibilities:
first map $\{\omega_i^{(t)}\}$ to grid coefficients $c^{(t)}[\cdot]$ via linear binning, then
compute the grid convolution by zero--padded FFT, and finally use linear interpolation at $Y_i$
in the next E--step. This completes the EM iteration with the FFT plug--in background.

\subsection*{Consistency proof and supporting lemmas}
\begin{assumption}\label{ass:model-id}
\leavevmode
\begin{enumerate}\setlength\itemsep{2pt}
\item[(i)] $f(y):=\pi_0 f_0(y;\theta)+\pi_1 f_1(y)$ satisfies
$f(y)>0$ for all $y\in\R$.
\item[(ii)] $(y,\theta)\mapsto f_0(y;\theta)$ is continuous on $\R\times\mathcal C$.
\item[(iii)] For every $M<\infty$ there exists $c(M)>0$ such that
\[
\inf_{\substack{|y|\le M\\ \theta\in\mathcal C}} f_0(y;\theta)\;\ge\; c(M).
\]
\item[(iv)] There exists $B<\infty$ with
\(
\sup_{\theta\in\mathcal C,\,y} f_0(y;\theta)\le B
\)
and
\(
\sup_{y} f_1^*(y)\le B.
\)
\item[(v)] The population log–likelihood
\[
L(\theta):=\E\Big[\log\{\pi_0 f_0(Y;\theta)+\pi_1 f_1(Y)\}\Big],\qquad
\theta\in\mathcal C,
\]
is continuous on $\mathcal C$ and uniquely maximised at $\theta^*$.
\item[(vi)] There exists $H:\R\to[0,\infty)$ with
$\E[H(Y)]<\infty$ such that
\[
\sup_{\theta\in\mathcal C}
\big|\log\{\pi_0 f_0(Y;\theta)+\pi_1 f_1^*(Y)\}\big|
\;\le\; H(Y)\quad a.s.
\]
\end{enumerate}
\end{assumption}

% ------------------------------------------------------------
% KERNEL, BANDWIDTH, FFT GRID
% ------------------------------------------------------------

\begin{assumption}\label{ass:K-h-delta}
$\mathbb K$ is even, bounded, Lipschitz and $C^2$, with
\[
\int \mathbb K=1,\qquad \int u \mathbb K(u)\,du=0,\qquad \int u^2| \mathbb K(u)|\,du<\infty,
\qquad \| \mathbb K''\|_\infty<\infty.
\]
Let $\K_h(u):=h^{-1}\K(u/h)$ and assume
\[
h_n\downarrow0,
\qquad
\frac{n h_n}{\log(1/h_n)}\ \longrightarrow\ \infty .
\]
For the FFT/binned implementation, use a uniform grid of spacing $\Delta_n\downarrow0$ and define
\[
\varepsilon_n\;:=\;\frac{\| \mathbb K''\|_\infty}{4}\,\frac{\Delta_n^2}{h_n^3}\ \longrightarrow\ 0.
\]
Let $\tilde f_{1,n}^{\mathrm{exact}}$ be the weighted KDE and
$\tilde f_{1,n}^{\mathrm{FFT}}$ its binned/FFT approximation. Assume the standard $C^2$ bound
\[
\sup_{y\in\R}\big|\tilde f_{1,n}^{\mathrm{FFT}}(y)-\tilde f_{1,n}^{\mathrm{exact}}(y)\big|
\;\le\; \varepsilon_n.
\]
\end{assumption}

% ------------------------------------------------------------
% PILOT KDE STRUCTURE AND GROWING–WINDOW RATE
% ------------------------------------------------------------

\begin{assumption}\label{ass:kde-sup}
There exist nonnegative (possibly data–dependent) weights $\hat\omega_1,\dots,\hat\omega_n$ with
$\sum_{i=1}^n\hat\omega_i=1$ such that
\[
\tilde f_{1,n}^{\mathrm{exact}}(y)
\;:=\;\sum_{i=1}^n \hat\omega_i\,\mathbb K_{h_n}(y-Y_i),
\qquad y\in\R.
\]
There exists a deterministic $D_n\uparrow\infty$ for which
\[
\sup_{|y|\le D_n}
\big|\tilde f_{1,n}^{\mathrm{exact}}(y)-f_1^*(y)\big|
\;=\;
O_p\!\Big(h_n^2+\sqrt{\tfrac{\log(1/h_n)}{n h_n}}\Big),
\]
and hence, by Assumption~\ref{ass:K-h-delta},
\[
\sup_{|y|\le D_n}
\big|\tilde f_{1,n}^{\mathrm{FFT}}(y)-f_1^*(y)\big|
\;=\;
O_p\!\Big(h_n^2+\sqrt{\tfrac{\log(1/h_n)}{n h_n}}+\varepsilon_n\Big).
\]

\emph{Remark.} Such an assumption is common in the literature, for example -- \cite{Hansen2008}, \cite{GineGuillou2002}, \cite{EinmahlMason2005}, \cite{Tsybakov2009}, \cite{Jiang2017}.
\end{assumption}

For convenience, write
\[
r_n\;:=\;h_n^2+\sqrt{\frac{\log(1/h_n)}{n h_n}}+\varepsilon_n.
\]

% ------------------------------------------------------------
% WINDOW / TAIL BALANCE
% ------------------------------------------------------------

\begin{assumption}\label{ass:window-nocf}
For the $D_n$ in Assumption~\ref{ass:kde-sup},
\[
\frac{r_n}{c(D_n)}\ \longrightarrow\ 0,
\qquad
\Pr(|Y|>D_n)\,\log(1/h_n)\ \longrightarrow\ 0.
\]
\end{assumption}

\begin{assumption}\label{ass:log-f0-envelope}
There exists an $H_0:\R\to[0,\infty)$ with $\E[H_0(Y)]<\infty$ such that
\[
\sup_{\theta\in\mathcal C} \big|\log f_0(Y;\theta)\big| \;\le\; H_0(Y)\quad a.s.
\]
\end{assumption}

% ------------------------------------------------------------
% ORACLE ULLN
% ------------------------------------------------------------

\begin{lemma}\label{lem:ULLN-nocf}
Under Assumption~\ref{ass:model-id},
\[
\sup_{\theta\in\mathcal C}
\left|
\frac1n\sum_{i=1}^n
\log\{\pi_0 f_0(Y_i;\theta)+\pi_1 f_1^*(Y_i)\}
- \mathcal L(\theta)
\right|
\ \xrightarrow{p}\ 0.
\]
\end{lemma}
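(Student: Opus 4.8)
The plan is to read the display as a Glivenko--Cantelli (uniform law of large numbers) statement for the single function class $\{m_\theta:\theta\in\mathcal C\}$, where $m_\theta(y):=\log\{\pi_0 f_0(y;\theta)+\pi_1 f_1^*(y)\}$ and $\mathcal L(\theta)=\E[m_\theta(Y)]$, and to prove it by the classical bracketing-through-compactness argument (as in \cite{vaart1998}). The three ingredients supplied by Assumption~\ref{ass:model-id} are exactly what this argument needs: pointwise continuity of $\theta\mapsto m_\theta(y)$ for every $y$ (since $f(y)>0$ by (i) places the argument of $\log$ in the open positive half-line, and $f_0(y;\cdot)$ is continuous by (ii)); an integrable envelope $H$ with $\sup_{\theta\in\mathcal C}|m_\theta(Y)|\le H(Y)$ and $\E[H(Y)]<\infty$ by (vi); and compactness of $\mathcal C\subset\R$.

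First I would introduce, for $\theta\in\mathcal C$ and $\rho>0$, the upper and lower bracket functions
\[
\bar m_{\theta,\rho}(y):=\sup_{\theta'\in\mathcal C,\,|\theta'-\theta|<\rho} m_{\theta'}(y),
\qquad
\underline m_{\theta,\rho}(y):=\inf_{\theta'\in\mathcal C,\,|\theta'-\theta|<\rho} m_{\theta'}(y).
\]
These are measurable (the sup/inf may be taken over a countable dense subset of the ball by continuity of $m_{\theta'}(y)$ in $\theta'$) and are dominated in absolute value by $H$. By pointwise continuity, $\bar m_{\theta,\rho}(y)\downarrow m_\theta(y)$ and $\underline m_{\theta,\rho}(y)\uparrow m_\theta(y)$ as $\rho\downarrow0$, so dominated convergence (with envelope $H$) gives $\E[\bar m_{\theta,\rho}(Y)]\downarrow\mathcal L(\theta)$ and $\E[\underline m_{\theta,\rho}(Y)]\uparrow\mathcal L(\theta)$. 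Hence for any $\varepsilon>0$ and each $\theta$ I can choose $\rho_\theta>0$ with $\E[\bar m_{\theta,\rho_\theta}(Y)]-\E[\underline m_{\theta,\rho_\theta}(Y)]<\varepsilon$.

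Next I would exploit compactness: the open balls $\{B(\theta,\rho_\theta)\}_{\theta\in\mathcal C}$ cover $\mathcal C$, so finitely many $B(\theta_k,\rho_k)$, $k=1,\dots,N$ (writing $\rho_k:=\rho_{\theta_k}$), already cover it. For any $\theta$, choosing $k$ with $\theta\in B(\theta_k,\rho_k)$ gives the pointwise sandwich $\underline m_{\theta_k,\rho_k}\le m_\theta\le\bar m_{\theta_k,\rho_k}$, so both $n^{-1}\sum_i m_\theta(Y_i)$ and $\mathcal L(\theta)$ lie between the matching bracket averages and expectations. Combining this with $\E[\bar m_{\theta_k,\rho_k}]-\E[\underline m_{\theta_k,\rho_k}]<\varepsilon$ yields
\[
\Big|\tfrac1n\sum_{i=1}^n m_\theta(Y_i)-\mathcal L(\theta)\Big|
\ \le\ \max_{1\le k\le N}\Big(\big|\tfrac1n\sum_{i=1}^n \bar m_{\theta_k,\rho_k}(Y_i)-\E\,\bar m_{\theta_k,\rho_k}(Y)\big|+\big|\tfrac1n\sum_{i=1}^n \underline m_{\theta_k,\rho_k}(Y_i)-\E\,\underline m_{\theta_k,\rho_k}(Y)\big|\Big)+\varepsilon.
\]
Each of the $2N$ terms inside the maximum is an empirical average of a single integrable function, so the ordinary strong law of large numbers sends it to $0$ almost surely; letting $n\to\infty$ and then $\varepsilon\downarrow0$ delivers almost sure uniform convergence, which in particular gives the stated convergence in probability.

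There is no deep obstacle here; the argument is the standard compact-parameter uniform LLN, and the only steps requiring genuine care are (a) the interchange of limit and expectation for the brackets, which is precisely where the integrable envelope of (vi) and the pointwise continuity from (i)--(ii) are indispensable, and (b) the measurability of the sup/inf brackets, handled by separability of $\mathcal C\subset\R$ together with continuity of $m_\theta$ in $\theta$. Compactness of $\mathcal C$ is what reduces the uniform claim to finitely many scalar laws of large numbers; everything else is routine.
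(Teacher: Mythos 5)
Your argument is correct and is essentially the paper's own proof unpacked: the paper simply verifies continuity of $\theta\mapsto\log\{\pi_0 f_0(y;\theta)+\pi_1 f_1^*(y)\}$ on the compact set $\mathcal C$ plus the integrable envelope $H$ from Assumption~\ref{ass:model-id}(vi) and then cites the dominated Glivenko--Cantelli theorem (\cite[Thm.~19.4]{vaart1998}), whereas you reproduce that theorem's standard bracketing-by-compactness proof in full. No gaps; the ingredients you isolate (pointwise continuity, envelope, compactness, measurability of the sup/inf brackets via separability) are exactly the hypotheses the cited theorem requires.
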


\begin{proof}
For each $y$, $\theta\mapsto\log\{\pi_0 f_0(y;\theta)+\pi_1 f_1^*(y)\}$ is continuous on
compact $\mathcal C$ and dominated by $H(Y)$ with $\E H(Y)<\infty$; see
\cite[Thm~19.4]{vaart1998}.
\end{proof}

\begin{proposition}[Uniform convergence of the FFT plug–in criterion]
\label{prop:UC-nocf}
We work under Assumptions~\ref{ass:model-id}, \ref{ass:K-h-delta}, \ref{ass:kde-sup},
\ref{ass:window-nocf} and \ref{ass:log-f0-envelope}.
Let
\[
\tilde {\mathcal{L}_n}(\theta)
:=\frac1n\sum_{i=1}^n
\log\!\Big\{\pi_0 f_0(Y_i;\theta)+\pi_1\,\tilde f_{1,n}^{\mathrm{FFT}}(Y_i)\Big\},
\qquad \theta\in\mathcal C,
\]
and
\(
L(\theta):=\E\log\{\pi_0 f_0(Y;\theta)+\pi_1 f_1^*(Y)\}.
\)
Then
\[
\sup_{\theta\in\mathcal C}\big|\tilde {\mathcal{L}}_n(\theta)-\mathcal{L}(\theta)\big|\ \xrightarrow{p}\ 0.
\]
\end{proposition}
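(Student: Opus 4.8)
The plan is to interpose an \emph{oracle} empirical criterion that substitutes the true background $f_1^*$ for its FFT plug--in estimate, and then to control the two resulting pieces separately. First I would define
\[
\bar{\mathcal L}_n(\theta):=\frac1n\sum_{i=1}^n\log\{\pi_0 f_0(Y_i;\theta)+\pi_1 f_1^*(Y_i)\},
\]
so that the triangle inequality gives
\[
\sup_{\theta\in\mathcal C}\big|\tilde{\mathcal L}_n(\theta)-\mathcal L(\theta)\big|
\ \le\
\sup_{\theta\in\mathcal C}\big|\tilde{\mathcal L}_n(\theta)-\bar{\mathcal L}_n(\theta)\big|
+\sup_{\theta\in\mathcal C}\big|\bar{\mathcal L}_n(\theta)-\mathcal L(\theta)\big|.
\]
The second term is precisely the oracle uniform LLN of Lemma~\ref{lem:ULLN-nocf}, hence $o_p(1)$, so all the real work lies in the plug--in discrepancy (the first term).

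For the plug--in term I would use the elementary bound $|\log a-\log b|\le|a-b|/\min(a,b)$ for $a,b>0$, applied pointwise with $a=\pi_0 f_0(Y_i;\theta)+\pi_1\tilde f_{1,n}^{\mathrm{FFT}}(Y_i)$ and $b=\pi_0 f_0(Y_i;\theta)+\pi_1 f_1^*(Y_i)$. Then $|a-b|=\pi_1|\tilde f_{1,n}^{\mathrm{FFT}}(Y_i)-f_1^*(Y_i)|$, and since the KDE and $f_1^*$ are nonnegative, $\min(a,b)\ge\pi_0 f_0(Y_i;\theta)$. I would then split the average over $i$ at the window $|Y_i|\le D_n$ (\emph{body}) versus $|Y_i|>D_n$ (\emph{tail}). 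On the body, Assumption~\ref{ass:model-id}(iii) yields $f_0(Y_i;\theta)\ge c(D_n)$ uniformly in $\theta$, and Assumption~\ref{ass:kde-sup} yields $\sup_{|y|\le D_n}|\tilde f_{1,n}^{\mathrm{FFT}}-f_1^*|=O_p(r_n)$, so the body contribution is at most $(\pi_1/\pi_0)\,r_n/c(D_n)$, which vanishes by Assumption~\ref{ass:window-nocf}.

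The tail is where I expect the main obstacle, and the one place where the KDE's lack of a uniform lower bound forces cruder reasoning. There I would abandon the difference structure and bound each logarithm separately by an integrable-plus-growing envelope. Because the weights sum to one and $\mathbb K$ is bounded, $\tilde f_{1,n}^{\mathrm{FFT}}\le\|\mathbb K\|_\infty/h_n+\varepsilon_n=O(1/h_n)$, while $f_0\le B$ by Assumption~\ref{ass:model-id}(iv); hence $\log a\le C+\log(1/h_n)$. From below, $a\ge\pi_0 f_0(Y_i;\theta)$ combined with the log--envelope Assumption~\ref{ass:log-f0-envelope} gives $-\log a\le\log(1/\pi_0)+H_0(Y_i)$, so $|\log a|\lesssim\log(1/h_n)+H_0(Y_i)$; an analogous but bounded--above estimate holds for $|\log b|$ using $f_1^*\le B$. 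Summing over the tail then yields, uniformly in $\theta$,
\[
\frac1n\!\!\sum_{|Y_i|>D_n}\!\!|\log a-\log b|
\ \lesssim\ \log\!\tfrac{1}{h_n}\cdot\frac1n\sum_{i}\mathbf 1\{|Y_i|>D_n\}
+\frac1n\sum_{i}H_0(Y_i)\,\mathbf 1\{|Y_i|>D_n\}.
\]

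To finish, I would show both tail averages are $o_p(1)$. The first has mean $\log(1/h_n)\,\Pr(|Y|>D_n)\to0$ by the rate coupling in Assumption~\ref{ass:window-nocf}, and the second has mean $\E[H_0(Y)\mathbf 1\{|Y|>D_n\}]\to0$ by dominated convergence, since $\E H_0<\infty$ and $D_n\uparrow\infty$; Markov's inequality then upgrades both to convergence in probability. Combining the body and tail bounds shows the plug--in discrepancy is $o_p(1)$, which together with the oracle LLN term completes the proof. The delicate point throughout is that the crude tail envelope inflates each log by $\log(1/h_n)$, so the entire argument hinges on the coupling $\Pr(|Y|>D_n)\,\log(1/h_n)\to0$ built into Assumption~\ref{ass:window-nocf}; verifying that this single condition simultaneously tames the growing log bound and the shrinking tail probability is the heart of the matter.
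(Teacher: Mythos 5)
Your proposal is correct and follows essentially the same route as the paper: the same oracle/plug--in decomposition handled by Lemma~\ref{lem:ULLN-nocf}, the same mean--value bound for the log on the body window where $f_0\ge c(D_n)$, and the same crude tail envelopes of size $\log(1/h_n)+H_0(Y)$ killed by the coupling $\Pr(|Y|>D_n)\log(1/h_n)\to0$ together with dominated convergence and Markov. The only cosmetic difference is that the paper invokes the oracle envelope $H(Y)$ from Assumption~\ref{ass:model-id}(vi) for the $|\log b|$ term rather than re-deriving a bound from $f_1^*\le B$, which changes nothing.
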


\begin{proof}
Write $\ell_\theta^u(y):=\log\{\pi_0 f_0(y;\theta)+\pi_1 u(y)\}$,
and decompose
\[
\tilde L_n(\theta)-L(\theta)
= R_{1,n}(\theta)+R_{2,n}(\theta),
\qquad
R_{1,n}(\theta):=\frac1n\sum_{i=1}^n\big(\ell_\theta^{\tilde f}(Y_i)-\ell_\theta^{f_1^*}(Y_i)\big),
\quad
R_{2,n}(\theta):=\frac1n\sum_{i=1}^n\ell_\theta^{f_1^*}(Y_i)-L(\theta),
\]

with $\tilde f:=\tilde f_{1,n}^{\mathrm{FFT}}$.
By Lemma~\ref{lem:ULLN-nocf} (oracle ULLN), $\sup_\theta|R_{2,n}(\theta)|\to_p 0$.
We control $R_{1,n}$ via a body/tail split at the deterministic window $M_n\uparrow\infty$
from Assumption~\ref{ass:window-nocf}:
\[
R_{1,n}(\theta)
=
\underbrace{\frac1n\sum_{i=1}^n
\big(\ell_\theta^{\tilde f}(Y_i)-\ell_\theta^{f_1^*}(Y_i)\big)\1_{\{|Y_i|\le M_n\}}}_{=:A_{n,M_n}(\theta)}
+
\underbrace{\frac1n\sum_{i=1}^n
\big(\ell_\theta^{\tilde f}(Y_i)-\ell_\theta^{f_1^*}(Y_i)\big)\1_{\{|Y_i|> M_n\}}}_{=:B_{n,M_n}(\theta)}.
\]

\emph{Body:}
On $\{|Y_i|\le M_n\}$, Assumption~\ref{ass:model-id}(iii) gives
$\pi_0 f_0(Y_i;\theta)\ge \pi_0 c(M_n)$, hence by the mean value theorem for $\log$,
\[
\big|\ell_\theta^{\tilde f}(Y_i)-\ell_\theta^{f_1^*}(Y_i)\big|
\le \frac{\pi_1}{\pi_0 c(M_n)}\,\big|\tilde f(Y_i)-f_1^*(Y_i)\big|.
\]

Therefore, using the trivial averaging inequality
\(
\frac1n\sum_{i=1}^n |a_i|\1_{\{|Y_i|\le M_n\}}
\le \sup_{|y|\le M_n}|a(y)|
\)

with $a(y)=\tilde f(y)-f_1^*(y)$, we obtain
\[
\sup_{\theta\in\mathcal C}|A_{n,M_n}(\theta)|
\le \frac{\pi_1}{\pi_0 c(M_n)}\,\sup_{|y|\le M_n}\big|\tilde f(y)-f_1^*(y)\big|
= O_p\!\Big(\frac{r_n}{c(M_n)}\Big)
= o_p(1),
\]

by Assumptions~\ref{ass:kde-sup} and \ref{ass:window-nocf}.

\emph{Tail:}
Fix $\theta$ and $y$. Since $\tilde f(y) \ge 0$, we have the upper bound
\[
\ell_\theta^{\tilde f}(y)
=\log\{\pi_0 f_0(y;\theta)+\pi_1 \tilde f(y)\}
\le \log\{\pi_0 B+\pi_1(\| \mathbb K\|_\infty h_n^{-1}+\varepsilon_n)\}
\le C_0+\log(1/h_n)
\]

for all large $n$ by Assumptions~\ref{ass:model-id}(iv) and \ref{ass:K-h-delta}.
We also have the lower bound
\(
\ell_\theta^{\tilde f}(y)\ge \log\{\pi_0 f_0(y;\theta)\}.
\)
Combining,
\[
\big|\ell_\theta^{\tilde f}(y)\big|
\le \max\!\big\{C_0+\log(1/h_n),\, -\log(\pi_0 f_0(y;\theta))\big\}
\le C_1+\log(1/h_n)+\big|\log f_0(y;\theta)\big|,
\]

with $C_1$ absorbing $C_0$ and $|\log\pi_0|$.\\

By Assumption~\ref{ass:log-f0-envelope},
\(
\sup_{\theta\in\mathcal C}|\log f_0(Y;\theta)|\le H_0(Y)
\)

and hence
\[
\sup_{\theta\in\mathcal C}\big|\ell_\theta^{\tilde f}(Y)\big|
\le C_1+\log(1/h_n)+H_0(Y).
\]

For the oracle term, Assumption~\ref{ass:model-id} gives
\(
\sup_{\theta\in\mathcal C}\big|\ell_\theta^{f_1^*}(Y)\big|\le H(Y).
\)
Therefore,
\[
\sup_{\theta\in\mathcal C}|B_{n,M_n}(\theta)|
\le \frac1n\sum_{i=1}^n \Big(C_1+\log(1/h_n)+H_0(Y_i)+H(Y_i)\Big)\1_{\{|Y_i|>M_n\}}.
\]

Take expectations and use LLN and Markov inequality:
\[
\E\!\left[\frac1n\sum_{i=1}^n (H_0(Y_i)+H(Y_i))\1_{\{|Y_i|>M_n\}}\right]
= \E[(H_0(Y)+H(Y))\1_{\{|Y|>M_n\}}]\;\longrightarrow\;0
\]

by dominated convergence since $H_0,H\in L^1(g)$ and $M_n\uparrow\infty$.

Also, by Assumption~\ref{ass:window-nocf},
\(
\Pr(|Y|>M_n)\log(1/h_n)\to 0.
\)
As $n^{-1}\sum_{i=1}^n \1_{\{|Y_i|>M_n\}}=P_n(|Y|>M_n)=\Pr(|Y|>M_n)+o_p(1)$,
we conclude
\[
\sup_{\theta\in\mathcal C}|B_{n,M_n}(\theta)|
= o_p(1).
\]

Combining the body and tail bounds,
\(
\sup_{\theta\in\mathcal C}|R_{1,n}(\theta)|\to_p 0.
\)
Together with the oracle ULLN for $R_{2,n}$ this yields
\(
\sup_{\theta\in\mathcal C}|\tilde {\mathcal{L}_n}(\theta)-\mathcal L(\theta)|\to_p 0.
\)
\end{proof}

% ------------------------------------------------------------
% CONSISTENCY OF THE FFT PLUG–IN MLE
% ------------------------------------------------------------

\subsection*{Proof of Theorem~\ref{thm:consistency-fft-nocf} (Consistency)}
By Proposition~\ref{prop:UC-nocf},
\(
\sup_{\theta\in\mathcal C}|\tilde {\mathcal{L}}_n(\theta)-\mathcal L(\theta)|\to_p 0.
\)
Assumption~\ref{ass:model-id}(v) yields uniqueness and continuity of the maximiser on compact
$\mathcal C$. Now apply Wald’s argmax theorem.
\\~\\
\subsection*{CLT assumptions and supporting lemmas}
Write
\[
\ell(y;\theta):=\log\{\pi_0 f_0(y;\theta)+\pi_1 f_1^*(y)\},\qquad
\dot\ell(y;\theta):=\nabla_\theta \ell(y;\theta),\qquad
\ddot\ell(y;\theta):=\nabla_\theta^2 \ell(y;\theta),
\]
and
\[
S_n(\theta):=\frac1n\sum_{i=1}^n \dot\ell(Y_i;\theta),\qquad
H_n(\theta):=\frac1n\sum_{i=1}^n \ddot\ell(Y_i;\theta).
\]
We have
\[
\tilde {\mathcal L}_n(\theta)
:=\frac1n\sum_{i=1}^n
\log\!\Big\{\pi_0 f_0(Y_i;\theta)+\pi_1\,\tilde f_{1,n}^{\mathrm{FFT}}(Y_i)\Big\},
\qquad \theta\in\mathcal C,
\]
with maximiser $\hat\theta_n\in\arg\max_{\theta\in\mathcal C}\tilde L_n(\theta)$.
By Theorem~\ref{thm:consistency-fft-nocf}, $\hat\theta_n\to_p\theta^*$.

Recall
\[
r_n:=h_n^2+\sqrt{\frac{\log(1/h_n)}{n h_n}}+\varepsilon_n,
\]
and the window $D_n\uparrow\infty$ from Assumption~\ref{ass:window-nocf}.

% ------------------------------------------------------------
% EXTRA ASSUMPTIONS FOR THE CLT
% ------------------------------------------------------------

\begin{assumption}
\label{ass:diff-oracle-nocf}
For each $y$, $\theta\mapsto f_0(y;\theta)$ is twice continuously differentiable
on $\mathcal C$, and differentiation under the integral sign is valid
for $\ell,\dot\ell,\ddot\ell$. There exist envelopes $H_1,H_2\in L^2(f)$ such that,
for all $\theta\in\mathcal C$,
\[
\|\dot\ell(Y;\theta)\|\le H_1(Y),\qquad
\|\ddot\ell(Y;\theta)\|\le H_2(Y).
\]

Moreover, for each $M<\infty$,
\[
\sup_{\substack{|y|\le M\\ \theta\in\mathcal C}}
\Big(\|\nabla_\theta f_0(y;\theta)\|+\|\nabla_\theta^2 f_0(y;\theta)\|\Big)<\infty.
\]

The Fisher information
\[
I(\theta^*):=\E_f\!\big[\dot\ell(Y;\theta^*)\dot\ell(Y;\theta^*)^\top\big]
=-\,\E_f\!\big[\ddot\ell(Y;\theta^*)\big]
\]

exists, is finite, and is positive definite.
\end{assumption}

\begin{assumption}
\label{ass:plugin-tail-nocf}
There exist measurable functions $H_{1,n},H_{2,n}:\R\to[0,\infty)$ such that,
for all $\theta\in\mathcal C$,
\[
\Big\|\nabla_\theta\log\{\pi_0 f_0(Y;\theta)+\pi_1 \tilde f_{1,n}^{\mathrm{FFT}}(Y)\}\Big\|
\le H_{1,n}(Y),
\]
\[
\Big\|\nabla_\theta^2\log\{\pi_0 f_0(Y;\theta)+\pi_1 \tilde f_{1,n}^{\mathrm{FFT}}(Y)\}\Big\|
\le H_{2,n}(Y),
\]
and, with $D_n$ as in Assumption~\ref{ass:window-nocf},
\[
\sqrt{n}\,\E_f\!\big[(H_{1,n}(Y)+H_1(Y))\,\1_{\{|Y|>D_n\}}\big]\ \longrightarrow\ 0,
\qquad
\E_f\!\big[(H_{2,n}(Y)+H_2(Y))\,\1_{\{|Y|>D_n\}}\big]\ \longrightarrow\ 0.
\]
\end{assumption}

\begin{assumption}
\label{ass:targeted-nocf}
Let
\[
\Phi(y):=\frac{\pi_0\pi_1\,\nabla_\theta f_0(y;\theta^*)}
               {\{\pi_0 f_0(y;\theta^*)+\pi_1 f_1^*(y)\}^2}\in\R,
\]
and define the \emph{windowed} linear functional
\[
\Lambda_n:=\frac1n\sum_{i=1}^n \Phi(Y_i)\,
\big(\tilde f_{1,n}^{\mathrm{FFT}}(Y_i)-f_1^*(Y_i)\big)\,\1_{\{|Y_i|\le D_n\}}\in\R .
\]
Then $\sqrt{n}\,\|\Lambda_n\|\to_p 0$.
\end{assumption}

\begin{assumption}
\label{ass:remainder-sep-nocf}
With $r_n$ and $D_n$ as above,
\[
\frac{\sqrt{n}\,r_n^2}{c(D_n)^3}\ \longrightarrow\ 0,
\]
where $c(M)$ is from Assumption~\ref{ass:model-id}(iii).
\end{assumption}

% ------------------------------------------------------------
% ORACLE SCORE CLT AND HESSIAN ULLN
% ------------------------------------------------------------

\begin{lemma}[Oracle score CLT]\label{lem:oracle-clt-nocf}
Under Assumptions~\ref{ass:model-id}, and \ref{ass:diff-oracle-nocf},
\[
\sqrt{n}\,S_n(\theta^*)\ \rightsquigarrow\ \mathcal N\big(0,\,I(\theta^*)\big).
\]
\end{lemma}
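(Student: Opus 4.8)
The plan is to recognize $\sqrt{n}\,S_n(\theta^*)=n^{-1/2}\sum_{i=1}^n\dot\ell(Y_i;\theta^*)$ as a normalized sum of i.i.d.\ random vectors and to invoke the classical multivariate Lindeberg--L\'evy CLT. The only substantive work is to verify the two hypotheses of that theorem: that the summands are centered and that they have a finite covariance equal to $I(\theta^*)$. Both follow from Assumption~\ref{ass:diff-oracle-nocf} together with the oracle structure of $\ell$.

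First I would establish the score identity $\E_f[\dot\ell(Y;\theta^*)]=0$. The key observation is that, because the oracle criterion uses the \emph{true} background $f_1^*$, the map $\theta\mapsto \exp\{\ell(y;\theta)\}=\pi_0 f_0(y;\theta)+\pi_1 f_1^*(y)$ is a genuine parametric family whose member at $\theta=\theta^*$ coincides with the data-generating density $f$. Writing
\[
\E_f[\dot\ell(Y;\theta^*)]
=\int \frac{\pi_0\,\nabla_\theta f_0(y;\theta^*)}{\pi_0 f_0(y;\theta^*)+\pi_1 f_1^*(y)}\,f(y)\,dy
=\pi_0\,\nabla_\theta\!\int f_0(y;\theta^*)\,dy=0,
\]
where the denominator cancels against $f(y)=\pi_0 f_0(\cdot;\theta^*)+\pi_1 f_1^*$ and the interchange of gradient and integral is licensed by the differentiation-under-the-integral-sign clause of Assumption~\ref{ass:diff-oracle-nocf} (using $\int f_0(y;\theta)\,dy\equiv 1$).

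Second I would check the finite second moment. By the envelope bound in Assumption~\ref{ass:diff-oracle-nocf}, $\|\dot\ell(Y;\theta^*)\|\le H_1(Y)$ with $H_1\in L^2(f)$, so $\E_f\|\dot\ell(Y;\theta^*)\|^2\le \E_f[H_1(Y)^2]<\infty$; this yields a finite covariance matrix which, by the definition of $I(\theta^*)$ in the same assumption, equals $\E_f[\dot\ell(Y;\theta^*)\dot\ell(Y;\theta^*)^\top]=I(\theta^*)$. With centered i.i.d.\ summands of finite covariance $I(\theta^*)$, the multivariate CLT then delivers $\sqrt{n}\,S_n(\theta^*)\rightsquigarrow\mathcal N\big(0,I(\theta^*)\big)$.

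I do not expect any genuine obstacle: once the two moment conditions are in place this is a textbook i.i.d.\ CLT. The one conceptual point worth flagging is that the score-mean-zero identity hinges entirely on the oracle structure---plugging in the true $f_1^*$ is what makes $\exp\{\ell(\cdot;\theta^*)\}$ equal to $f$, so that the usual Bartlett-type cancellation goes through even though the background is otherwise unmodeled. This is precisely why the lemma is stated for the oracle rather than the plug-in score; the discrepancy between the plug-in and oracle scores is quarantined and controlled separately through Assumptions~\ref{ass:plugin-tail-nocf}, \ref{ass:targeted-nocf}, and \ref{ass:remainder-sep-nocf} in the proof of the main CLT (Theorem~\ref{thm:clt-fft-nocf}).
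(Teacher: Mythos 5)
Your proposal is correct and follows essentially the same route as the paper, which simply notes that the $\dot\ell(Y_i;\theta^*)$ are i.i.d.\ with mean zero and covariance $I(\theta^*)$, bounds the second moment by $\E[H_1(Y)^2]<\infty$, and applies the multivariate CLT. The only difference is that you explicitly verify the score-mean-zero identity via the cancellation $\pi_0 f_0(\cdot;\theta^*)+\pi_1 f_1^*=f$, a step the paper asserts without proof; this is a useful clarification but not a different argument.
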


\begin{proof}
$\dot\ell(Y_i;\theta^*)$ are i.i.d.\ with mean $0$, covariance $I(\theta^*)$,
and $\E\|\dot\ell(Y;\theta^*)\|^2\le\E[H_1(Y)^2]<\infty$. Apply the multivariate
Lindeberg--Feller CLT.
\end{proof}

\begin{lemma}\label{lem:oracle-hess-ulln-nocf}
Under Assumptions \ref{ass:model-id} and ~\ref{ass:diff-oracle-nocf},
\[
\sup_{\theta\in\mathcal C}
\big\|H_n(\theta)-\E_f[\ddot\ell(Y;\theta)]\big\|\ \xrightarrow{p}\ 0.
\]
In particular, by continuity of $\theta\mapsto \E_f[\ddot\ell(Y;\theta)]$
and Assumption~\ref{ass:diff-oracle-nocf}, we have
$H_n(\theta^*)\to_p -I(\theta^*)$.
\end{lemma}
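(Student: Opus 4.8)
The plan is to reduce this to the same uniform law of large numbers that delivered Lemma~\ref{lem:ULLN-nocf}, now applied to the matrix-valued map $\theta\mapsto\ddot\ell(\cdot;\theta)$ instead of to $\ell$ itself. Working entry-by-entry, I would check the two hypotheses of the ULLN of \cite[Thm~19.4]{vaart1998}: for almost every $y$ the map $\theta\mapsto\ddot\ell(y;\theta)$ is continuous on the compact set $\mathcal C$, and the family $\{\ddot\ell(\cdot;\theta):\theta\in\mathcal C\}$ is dominated by a single $f$-integrable envelope.

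First I would establish pointwise continuity. By Assumption~\ref{ass:model-id}(i), $f(y)=\pi_0 f_0(y;\theta)+\pi_1 f_1^*(y)>0$, so the denominator appearing in $\ell$ is bounded away from zero; by Assumption~\ref{ass:diff-oracle-nocf} the map $\theta\mapsto f_0(y;\theta)$ is twice continuously differentiable and differentiation under the integral sign is valid. Expanding $\ddot\ell$ via the chain and quotient rules exhibits it as a rational expression in $f_0$, $\nabla_\theta f_0$, $\nabla_\theta^2 f_0$ with strictly positive denominator, hence continuous in $\theta$ for each fixed $y$. For the envelope, Assumption~\ref{ass:diff-oracle-nocf} supplies $H_2\in L^2(f)$ with $\|\ddot\ell(Y;\theta)\|\le H_2(Y)$ for all $\theta\in\mathcal C$; since $f$ is a probability density we have $L^2(f)\subset L^1(f)$, so $\E_f[H_2(Y)]<\infty$ and every entry of $\ddot\ell$ is dominated by the integrable $H_2$. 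Applying the ULLN entrywise (equivalently, to $\|\cdot\|$) then yields $\sup_{\theta\in\mathcal C}\|H_n(\theta)-\E_f[\ddot\ell(Y;\theta)]\|\xrightarrow{p}0$.

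For the ``in particular'' assertion I would use the same envelope a second time: dominated convergence gives continuity of $\theta\mapsto\E_f[\ddot\ell(Y;\theta)]$ on $\mathcal C$, and evaluating the uniform bound at the fixed point $\theta^*$ gives $H_n(\theta^*)\xrightarrow{p}\E_f[\ddot\ell(Y;\theta^*)]$, which equals $-I(\theta^*)$ by the information identity asserted in Assumption~\ref{ass:diff-oracle-nocf}. I do not expect a genuine obstacle here: the statement is a routine ULLN, and the only care required is the bookkeeping of passing from scalar to matrix-valued summands, which is immediate by treating each coordinate separately or by working with a fixed norm. In effect the entire burden is discharged by Assumption~\ref{ass:diff-oracle-nocf}, which hands us both the uniform Hessian envelope $H_2$ and the twice-differentiability needed for continuity.
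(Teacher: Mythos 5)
Your proposal is correct and follows essentially the same route as the paper: both verify pointwise continuity of $\theta\mapsto\ddot\ell(y;\theta)$ on the compact $\mathcal C$ and the integrable envelope $H_2$ (via $L^2(f)\subset L^1(f)$), then invoke the dominated Glivenko--Cantelli theorem of \cite[Thm.~19.4]{vaart1998}, and conclude the specialization at $\theta^*$ from continuity of $\theta\mapsto\E_f[\ddot\ell(Y;\theta)]$ and the information identity in Assumption~\ref{ass:diff-oracle-nocf}. Your write-up merely makes explicit the continuity bookkeeping (quotient rule, strictly positive denominator from Assumption~\ref{ass:model-id}(i)) that the paper leaves implicit.
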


\begin{proof}
For each $y$, $\theta\mapsto\ddot\ell(y;\theta)$ is continuous on compact
$\mathcal C$, and $\|\ddot\ell(Y;\theta)\|\le H_2(Y)$ with
$\E H_2(Y)<\infty$. A dominated Glivenko--Cantelli theorem for parametric
classes (e.g.\ \cite[Thm.~19.4]{vaart1998}) yields the uniform LLN.
Continuity of $\theta\mapsto\E[\ddot\ell(Y;\theta)]$ then gives the
specialization at $\theta^*$.
\end{proof}

% ------------------------------------------------------------
% PLUG–IN VS ORACLE SCORE AND HESSIAN
% ------------------------------------------------------------

\begin{lemma}\label{lem:plugin-diff-nocf}
Let
\[
\tilde S_n(\theta):=\frac1n\sum_{i=1}^n
\nabla_\theta\log\{\pi_0 f_0(Y_i;\theta)+\pi_1 \tilde f_{1,n}^{\mathrm{FFT}}(Y_i)\},\quad
\tilde H_n(\theta):=\frac1n\sum_{i=1}^n
\nabla_\theta^2\log\{\pi_0 f_0(Y_i;\theta)+\pi_1 \tilde f_{1,n}^{\mathrm{FFT}}(Y_i)\}.
\]
Under Assumptions~\ref{ass:model-id},
\ref{ass:K-h-delta}, \ref{ass:kde-sup}, \ref{ass:window-nocf}, \ref{ass:diff-oracle-nocf}, \ref{ass:plugin-tail-nocf},
\ref{ass:targeted-nocf} and \ref{ass:remainder-sep-nocf},
\[
\sqrt{n}\,\big\|\tilde S_n(\theta^*)-S_n(\theta^*)\big\|\ \xrightarrow{p}\ 0,
\qquad
\sup_{\theta\in\mathcal C}\big\|\tilde H_n(\theta)-H_n(\theta)\big\|\ \xrightarrow{p}\ 0.
\]
\end{lemma}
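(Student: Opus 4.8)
The plan is to derive both statements from a single observation: the plug-in score and Hessian differ from their oracle counterparts only through the substitution of $f_1^*$ by $\tilde f:=\tilde f_{1,n}^{\mathrm{FFT}}$ inside the mixture density $g^u_\theta(y):=\pi_0 f_0(y;\theta)+\pi_1 u(y)$, and to organize everything by a body/tail split at the window $D_n$ of Assumption~\ref{ass:window-nocf}. Writing the scores through $\nabla_\theta\log g^u_\theta=\pi_0\nabla_\theta f_0/g^u_\theta$ gives the exact identity
\[
\tilde S_n(\theta^*)-S_n(\theta^*)
=\frac1n\sum_{i=1}^n \pi_0\pi_1\,\nabla_\theta f_0(Y_i;\theta^*)\,
\frac{f_1^*(Y_i)-\tilde f(Y_i)}{g^{\tilde f}_{\theta^*}(Y_i)\,g^{f_1^*}_{\theta^*}(Y_i)} .
\]
Every discrepancy is thus linear in $f_1^*-\tilde f$ up to the denominator, which is exactly the quantity controlled in sup-norm on the body at rate $r_n$ (Assumption~\ref{ass:kde-sup}) and bounded below by $\pi_0 c(D_n)$ there (Assumption~\ref{ass:model-id}(iii)). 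The same algebra applied to $\nabla_\theta^2\log g^u_\theta=\pi_0\nabla^2_\theta f_0/g^u_\theta-\pi_0^2\nabla_\theta f_0\nabla_\theta f_0^\top/(g^u_\theta)^2$ expresses $\tilde H_n-H_n$ through $1/g^u$ and $1/(g^u)^2$ differences, which a mean-value step in the argument $u$ renders proportional to $(f_1^*-\tilde f)$ with inverse-density weights of order at most $c(D_n)^{-3}$.

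For the score I would split at $D_n$ and linearize the denominator on the body. Replacing $1/(g^{\tilde f}g^{f_1^*})$ by $1/(g^{f_1^*})^2$ reproduces exactly the windowed functional $\Lambda_n$ of Assumption~\ref{ass:targeted-nocf} (up to sign), for which $\sqrt n\|\Lambda_n\|\to_p 0$ is assumed outright; the replacement error carries the extra factor $\pi_1(f_1^*-\tilde f)/\{g^{\tilde f}(g^{f_1^*})^2\}$, so the residual body term is genuinely quadratic in $f_1^*-\tilde f$. Bounding it by $\sup_{|y|\le D_n}|f_1^*-\tilde f|^2=O_p(r_n^2)$, using $\|\nabla_\theta f_0\|\le (B/\pi_0)H_1$ with $\tfrac1n\sum_i H_1(Y_i)=O_p(1)$ (Assumption~\ref{ass:diff-oracle-nocf}) and the lower bound $g\ge\pi_0 c(D_n)$, gives $O_p\!\big(r_n^2/c(D_n)^3\big)$, so after the $\sqrt n$ inflation it is $o_p(1)$ by the separation condition Assumption~\ref{ass:remainder-sep-nocf}. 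On the tail the integrand is dominated by $H_{1,n}+H_1$ (Assumptions~\ref{ass:plugin-tail-nocf} and \ref{ass:diff-oracle-nocf}), so $\sqrt n$ times the tail average has expectation $\sqrt n\,\E[(H_{1,n}+H_1)\1_{\{|Y|>D_n\}}]\to 0$, and Markov yields $o_p(1)$. Summing the three pieces gives $\sqrt n\|\tilde S_n(\theta^*)-S_n(\theta^*)\|\to_p 0$.

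For the Hessian the argument is structurally identical but without the $\sqrt n$ inflation, while now needing uniformity in $\theta\in\mathcal C$. On the body the mean-value bound gives $\sup_{|y|\le D_n,\,\theta}\|\ddot\ell^{\tilde f}-\ddot\ell^{f_1^*}\|\le C\,c(D_n)^{-3}\sup_{|y|\le D_n}|f_1^*-\tilde f|=O_p(r_n/c(D_n)^3)$, where $C$ collects the local regularity of $f_0$ from Assumption~\ref{ass:diff-oracle-nocf}. Since the KDE variance term forces $\sqrt n\,r_n\to\infty$, dividing $\sqrt n\,r_n^2/c(D_n)^3\to0$ by $\sqrt n\,r_n$ yields $r_n/c(D_n)^3\to0$, so the body part is $o_p(1)$ uniformly in $\theta$. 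On the tail the integrand is dominated uniformly in $\theta$ by $H_{2,n}+H_2$, and $\E[(H_{2,n}+H_2)\1_{\{|Y|>D_n\}}]\to0$ by Assumption~\ref{ass:plugin-tail-nocf}, giving $o_p(1)$ by Markov. Combining proves $\sup_{\theta}\|\tilde H_n(\theta)-H_n(\theta)\|\to_p 0$.

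The main obstacle is the $\sqrt n$-rate for the score: crude sup-norm control fails because $\sqrt n\,r_n\to\infty$, so one cannot simply bound $\sqrt n$ times the body discrepancy. The resolution is the von~Mises-type linearization that isolates the leading linear functional as the assumed-negligible $\Lambda_n$ and certifies that the remaining body term is second order in $f_1^*-\tilde f$; the only delicate balance is then $\sqrt n\,r_n^2/c(D_n)^3\to0$ fighting the inverse-density blow-up on the growing window—precisely what Assumptions~\ref{ass:targeted-nocf} and \ref{ass:remainder-sep-nocf} are engineered to supply. The Hessian, carrying no $\sqrt n$ factor and needing only $r_n/c(D_n)^3\to0$ on the body, is comparatively routine once the same linearization is in place.
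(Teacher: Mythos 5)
Your proof is correct and follows essentially the same route as the paper's: a body/tail split at $D_n$, tail control via the envelopes $H_{1,n}+H_1$ (resp.\ $H_{2,n}+H_2$) and Markov, and on the body a linearization in the density argument around $f_1^*$ that isolates the assumed-negligible functional $\Lambda_n$ and leaves a quadratic remainder of order $r_n^2/c(D_n)^3$ killed by Assumption~\ref{ass:remainder-sep-nocf}. Your explicit algebraic identity for $\tilde S_n-S_n$ and your deduction that $r_n/c(D_n)^3\to 0$ (from $\sqrt{n}\,r_n\to\infty$ and Assumption~\ref{ass:remainder-sep-nocf}) merely make concrete what the paper phrases as Fr\'echet differentiability and ``an analogous expansion'' for the Hessian.
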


\begin{proof}
Write $\tilde s_i(\theta)$ and $s_i(\theta)$ for the summand scores, and split
with the window $D_n$:
\[
\tilde S_n(\theta^*)-S_n(\theta^*)
=\underbrace{\frac1n\sum_{i=1}^n(\tilde s_i-s_i)\,\1_{\{|Y_i|\le D_n\}}}_{=:B_n}
 +\underbrace{\frac1n\sum_{i=1}^n(\tilde s_i-s_i)\,\1_{\{|Y_i|>D_n\}}}_{=:T_n}.
\]
By Assumption~\ref{ass:plugin-tail-nocf},
\(
\|\tilde s_i-s_i\|\le H_{1,n}(Y_i)+H_1(Y_i)
\),
so
\[
\sqrt{n}\,\|T_n\|
\le \sqrt{n}\,\frac1n\sum_{i=1}^n (H_{1,n}(Y_i)+H_1(Y_i))\,\1_{\{|Y_i|>D_n\}}.
\]
Define
\[
Z_n:=\sqrt{n}\,\frac1n\sum_{i=1}^n (H_{1,n}(Y_i)+H_1(Y_i))\,\1_{\{|Y_i|>D_n\}}.
\]
Then
\(
\E[Z_n]=\sqrt{n}\,\E[(H_{1,n}+H_1)\1_{\{|Y|>D_n\}}]\to0
\)
by Assumption~\ref{ass:plugin-tail-nocf}, so $Z_n=o_p(1)$ by Markov.
Thus $\sqrt{n}\,\|T_n\|=o_p(1)$. The Hessian tail is analogous with
$H_{2,n},H_2$ and no $\sqrt{n}$ factor.

For $|Y_i|\le D_n$, the map
\(
u\mapsto\nabla_\theta\log\{\pi_0 f_0(Y_i;\theta^*)+\pi_1 u(Y_i)\}
\)
is twice Fr\'echet--differentiable near $u=f_1^*$ with first derivative
$D_u[\nabla_\theta\log](y)\cdot v=-\Phi(y)\,v(y)$ and second derivative bounded, on $\{|y|\le D_n\}$, by $C/c(D_n)^3$. With
$\Delta_n(y):=\tilde f_{1,n}^{\mathrm{FFT}}(y)-f_1^*(y)$,
\[
\tilde s_i(\theta^*)-s_i(\theta^*)
=-\Phi(Y_i)\,\Delta_n(Y_i)+R_i,\qquad
\|R_i\|\lesssim \frac{\Delta_n(Y_i)^2}{c(D_n)^3}.
\]

Thus
\[
B_n
= -\,\Lambda_n
  + \frac1n\sum_{i=1}^n R_i\,\1_{\{|Y_i|\le D_n\}}.
\]

By Assumption~\ref{ass:targeted-nocf}, $\sqrt{n}\,\|\Lambda_n\|\to_p0$.
By Assumption~\ref{ass:kde-sup},
$\sup_{|y|\le D_n}|\Delta_n(y)|=O_p(r_n)$, hence
\[
\sqrt{n}\cdot\frac1n\sum_{i=1}^n \|R_i\|\,\1_{\{|Y_i|\le D_n\}}
=O_p\!\Big(\frac{\sqrt{n}\,r_n^2}{c(D_n)^3}\Big)=o_p(1)
\]

by Assumption~\ref{ass:remainder-sep-nocf}. This proves the score claim.
The Hessian claim is obtained by an analogous expansion with second
derivatives.
\end{proof}

% ------------------------------------------------------------
% MAIN CLT
% ------------------------------------------------------------

\subsection*{Proof of Theorem~\ref{thm:clt-fft-nocf} (CLT)}
Consistency gives $\hat\theta_n\to_p\theta^*$. The first--order condition
$\tilde S_n(\hat\theta_n)=0$ and a Taylor expansion imply
\[
0=\tilde S_n(\theta^*)+\tilde H_n(\tilde\theta_n)(\hat\theta_n-\theta^*),
\]
so
\[
\sqrt{n}(\hat\theta_n-\theta^*)
=-\,\tilde H_n(\tilde\theta_n)^{-1}\,\sqrt{n}\,\tilde S_n(\theta^*).
\]
Decompose
\[
\sqrt{n}\,\tilde S_n(\theta^*)
=\sqrt{n}\,S_n(\theta^*)+\sqrt{n}\,(\tilde S_n(\theta^*)-S_n(\theta^*)).
\]
By Lemma~\ref{lem:oracle-clt-nocf}, 
\[\sqrt{n}\,S_n(\theta^*)\Rightarrow\mathcal N(0,I(\theta^*))\]
By Lemma~\ref{lem:plugin-diff-nocf},
\[\sqrt{n}\,(\tilde S_n(\theta^*)-S_n(\theta^*))=o_p(1)\]
and
\[\sup_{\theta}\|\tilde H_n(\theta)-H_n(\theta)\|\to_p 0\]
By Lemma~\ref{lem:oracle-hess-ulln-nocf} and the continuity of
$\theta\mapsto\E[\ddot\ell(Y;\theta)]$, we also have
$H_n(\theta)\to_p \E[\ddot\ell(Y;\theta)]$ uniformly and, with $\tilde\theta_n\to_p\theta^*$, we have
\(
\tilde H_n(\tilde\theta_n)\to_p -I(\theta^*).
\)
Now Slutsky's theorem yields the claim.
\subsection*{Consistency proof and supporting lemmas (sequential scheme)}

% ------------------------------------------------------------
% Assumptions for consistency
% ------------------------------------------------------------

\begin{assumption}
\label{ass:model-id-stage}
\leavevmode
\begin{enumerate}\setlength\itemsep{2pt}
\item[(i)] $f_0(\cdot;\theta)$ is a Lebesgue density for each $\theta\in\mathcal C$,
$f_{\mathrm{bg}}^{(\ell)}$ is a Lebesgue density, and $f^{(\ell)}(y)>0$ for all $y$.
\item[(ii)] $(y,\theta)\mapsto f_0(y;\theta)$ is continuous on
$\R\times\mathcal C$.
\item[(iii)] For every $M<\infty$ there exists $c_\ell(M)>0$ such that
\[
\inf_{\substack{|y|\le M\\ \theta\in\mathcal C}}
f_0(y;\theta)\ge c_\ell(M).
\]
\item[(iv)] There exists $B_\ell<\infty$ with
$\sup_{\theta\in\mathcal C,\,y} f_0(y;\theta)\le B_\ell$ and
$\sup_y f_{\mathrm{bg}}^{(\ell)}(y)\le B_\ell$.
\item[(v)] The oracle criterion $L^{(\ell)}$ is continuous on
$\mathcal C\times\Pi_{\min}$ and uniquely maximised at
$\eta_\ell=(\theta_\ell,\alpha_\ell^{(\ell)})$.
\item[(vi)] (Oracle log--envelope.) There exists $H^{(\ell)}\in L^1(f^{(\ell)})$ such that
\[
\sup_{\theta\in\mathcal C,\;\alpha\in\Pi_{\min}}
\big|\ell^{(\ell)}\big(Y;\eta,f_{\mathrm{bg}}^{(\ell)}\big)\big|
\le H^{(\ell)}(Y)\quad a.s.
\]
\end{enumerate}
\end{assumption}

\begin{assumption}
\label{ass:kde-window-stage}
Let $\K$ be even, bounded, Lipschitz and $C^2$ with
$\int \K=1$, $\int u\K(u)\,du=0$, $\int u^2|\K(u)|\,du<\infty$ and
$\|\K''\|_\infty<\infty$. Define $\K_{h_n}(u):=h_n^{-1}\K(u/h_n)$ and assume
$h_n\downarrow0$, $n h_n/\log(1/h_n)\to\infty$. Let $\tilde f_{\ell,n}^{\mathrm{exact}}$
denote the exact weighted KDE with weights $\widehat w_i^{(\ell)}$:
\[
\tilde f_{\ell,n}^{\mathrm{exact}}(y)
:=\sum_{i=1}^n \widehat w_i^{(\ell)} \mathbb K_{h_n}(y-Y_i),
\]
and $\tilde f_{\ell,n}^{\mathrm{FFT}}$ its FFT approximation on a grid of
spacing $\Delta_n\downarrow0$. Suppose there exists a deterministic window
$M_n\uparrow\infty$ such that
\[
\sup_{|y|\le M_n}
\big|\tilde f_{\ell,n}^{\mathrm{exact}}(y)-f_{\mathrm{bg}}^{(\ell),*}(y)\big|
=O_p\!\Big(h_n^2+\sqrt{\tfrac{\log(1/h_n)}{n h_n}}\Big),
\]
and
\[
\sup_{y\in\R}
\big|\tilde f_{\ell,n}^{\mathrm{FFT}}(y)-\tilde f_{\ell,n}^{\mathrm{exact}}(y)\big|
\le \varepsilon_n,\qquad \varepsilon_n\to0.
\]
In particular, for
\[
r_n:=h_n^2+\sqrt{\tfrac{\log(1/h_n)}{n h_n}}+\varepsilon_n,
\]
we have
\[
\sup_{|y|\le M_n}
\big|\tilde f_{\ell,n}^{\mathrm{FFT}}(y)-f_{\mathrm{bg}}^{(\ell),*}(y)\big|
=O_p(r_n).
\]
\end{assumption}

\begin{assumption}
\label{ass:window-stage}
With $M_n$ and $r_n$ as in Assumption~\ref{ass:kde-window-stage}, we have
\[
\frac{r_n}{c_\ell(M_n)}\ \longrightarrow\ 0,
\qquad
\Pr_{f^{(\ell)}}(|Y|>M_n)\,\log(1/h_n)\ \longrightarrow\ 0.
\]
\end{assumption}

\begin{assumption}
\label{ass:weights-stage}
\leavevmode
\begin{enumerate}\setlength\itemsep{2pt}
\item[(i)] For every measurable $\Phi:\R\to\R$ with
$\E_{f^{(\ell)}}|\Phi(Y)|<\infty$,
\[
P_n^{(\ell)}\Phi
:=\sum_{i=1}^n \widehat w_i^{(\ell)}\Phi(Y_i)
\ \xrightarrow{p}\ \E_{f^{(\ell)}}[\Phi(Y)].
\]
\item[(ii)] There exists $C_\ell<\infty$ such that
\(
0\le \widehat w_i^{(\ell)}\le C_\ell/n
\)
for all $i$ and all large $n$.
\end{enumerate}
\end{assumption}

\begin{assumption}
\label{ass:log-env-plugin-stage}
There exists $H^{(\ell)}_{\mathrm{plug}}\in L^1(f^{(\ell)})$ such that, for all
$\eta\in\mathcal C\times\Pi_{\min}$,
\[
\big|\ell^{(\ell)}\big(Y;\eta,\tilde f_{\ell,n}^{\mathrm{FFT}}\big)\big|
\le H^{(\ell)}_{\mathrm{plug}}(Y)\quad a.s.\ \text{for all large $n$},
\]
and
\(
\E_{f^{(\ell)}}[H^{(\ell)}_{\mathrm{plug}}(Y)\,\mathbf 1_{\{|Y|>M_n\}}]\to0.
\)
\end{assumption}

% ------------------------------------------------------------
% Uniform convergence and consistency at stage ℓ
% ------------------------------------------------------------

\begin{proposition}
\label{prop:UC-stage}
Under Assumptions~\ref{ass:model-id-stage}--\ref{ass:log-env-plugin-stage},
\[
\sup_{\eta\in\mathcal C\times\Pi_{\min}}
\big|\tilde {\mathcal{L}}_n^{(\ell)}(\eta) - \mathcal L^{(\ell)}(\eta)\big|
\ \xrightarrow{p}\ 0.
\]
\end{proposition}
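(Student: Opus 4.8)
The plan is to follow the architecture of the single--stage Proposition~\ref{prop:UC-nocf}, adapting it to the two new features of the sequential setting: the empirical average is replaced by the data--dependent weighted measure $P_n^{(\ell)}=\sum_i \widehat w_i^{(\ell)}\delta_{Y_i}$, and the criterion must be controlled uniformly over the joint parameter $\eta=(\theta,\alpha)$ on the compact set $\mathcal C\times\Pi_{\min}$ rather than over $\theta$ alone. Writing $\ell^{(\ell)}_u(y;\eta):=\log\{\alpha f_0(y;\theta)+(1-\alpha)u(y)\}$ and $\tilde f:=\tilde f_{\ell,n}^{\mathrm{FFT}}$, I would first split
\[
\tilde{\mathcal L}_n^{(\ell)}(\eta)-\mathcal L^{(\ell)}(\eta)
=\underbrace{\sum_{i=1}^n \widehat w_i^{(\ell)}\big(\ell^{(\ell)}_{\tilde f}(Y_i;\eta)-\ell^{(\ell)}_{f_{\mathrm{bg}}^{(\ell)}}(Y_i;\eta)\big)}_{R_{1,n}(\eta)}
+\underbrace{\Big(\sum_{i=1}^n \widehat w_i^{(\ell)}\ell^{(\ell)}_{f_{\mathrm{bg}}^{(\ell)}}(Y_i;\eta)-\mathcal L^{(\ell)}(\eta)\Big)}_{R_{2,n}(\eta)},
\]
and then show $\sup_\eta|R_{1,n}|\to_p 0$ and $\sup_\eta|R_{2,n}|\to_p 0$ separately.

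For the oracle term $R_{2,n}$ --- the one genuinely new ingredient, since the single--stage proof simply invoked the unweighted dominated Glivenko--Cantelli theorem --- I would establish a \emph{weighted} uniform law of large numbers over $\mathcal C\times\Pi_{\min}$. The ingredients are: compactness of $\mathcal C\times\Pi_{\min}$; joint continuity of $\eta\mapsto\ell^{(\ell)}_{f_{\mathrm{bg}}^{(\ell)}}(y;\eta)$; the integrable envelope $H^{(\ell)}\in L^1(f^{(\ell)})$ from Assumption~\ref{ass:model-id-stage}(vi); the weighted pointwise LLN $P_n^{(\ell)}\Phi\to_p\E_{f^{(\ell)}}\Phi$ for fixed integrable $\Phi$ (Assumption~\ref{ass:weights-stage}(i)); and the uniform weight bound $\widehat w_i^{(\ell)}\le C_\ell/n$ (Assumption~\ref{ass:weights-stage}(ii)). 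I would cover $\mathcal C\times\Pi_{\min}$ by finitely many cells of radius $\delta$, bound the oscillation of $\ell^{(\ell)}_{f_{\mathrm{bg}}^{(\ell)}}$ within each cell by a modulus $\rho_\delta(y)$ with $\E_{f^{(\ell)}}\rho_\delta(Y)\downarrow0$ as $\delta\downarrow0$, and control the weighted empirical oscillation $P_n^{(\ell)}\rho_\delta$ through the pointwise LLN applied to $\rho_\delta$; the weight bound guarantees no single observation dominates the weighted modulus. Applying the weighted pointwise LLN at the finitely many net centres and letting $\delta\downarrow0$ then yields $\sup_\eta|R_{2,n}|\to_p0$.

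For the plug--in term $R_{1,n}$ I would reuse the body/tail split at the deterministic window $M_n\uparrow\infty$ of Assumption~\ref{ass:window-stage}. On the body $\{|Y_i|\le M_n\}$, since $\alpha\ge\underline\alpha_\ell$ and $f_0(Y_i;\theta)\ge c_\ell(M_n)$ (Assumption~\ref{ass:model-id-stage}(iii)), the argument of the logarithm is at least $\underline\alpha_\ell\,c_\ell(M_n)$, so the mean value theorem for $\log$ together with $\sum_i\widehat w_i^{(\ell)}=1$ gives
\[
\sup_{\eta}\Big|\sum_i \widehat w_i^{(\ell)}\big(\ell^{(\ell)}_{\tilde f}-\ell^{(\ell)}_{f_{\mathrm{bg}}^{(\ell)}}\big)\1_{\{|Y_i|\le M_n\}}\Big|
\le \frac{1}{\underline\alpha_\ell\,c_\ell(M_n)}\sup_{|y|\le M_n}\big|\tilde f(y)-f_{\mathrm{bg}}^{(\ell)}(y)\big|
=O_p\!\Big(\tfrac{r_n}{c_\ell(M_n)}\Big)=o_p(1),
\]
using the uniform KDE rate $O_p(r_n)$ of Assumption~\ref{ass:kde-window-stage} and $r_n/c_\ell(M_n)\to0$ from Assumption~\ref{ass:window-stage}. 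On the tail $\{|Y_i|>M_n\}$ I would dominate both log--likelihood pieces by the envelopes $H^{(\ell)}_{\mathrm{plug}}$ (Assumption~\ref{ass:log-env-plugin-stage}) and $H^{(\ell)}$, reducing the tail to weighted sums of the form $P_n^{(\ell)}\big(H\,\1_{\{|Y|>M_n\}}\big)$. Since the window moves with $n$, I would close these by a truncate--then--limit argument: for fixed $M$ and all $n$ with $M_n\ge M$ one has $H\,\1_{\{|Y|>M_n\}}\le H\,\1_{\{|Y|>M\}}$, so Assumption~\ref{ass:weights-stage}(i) gives $P_n^{(\ell)}(H\,\1_{\{|Y|>M\}})\to_p\E_{f^{(\ell)}}[H\,\1_{\{|Y|>M\}}]$, which is made arbitrarily small by dominated convergence as $M\uparrow\infty$, forcing the weighted tail to $0$ in probability.

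The hard part will be the weighted uniform law of large numbers for $R_{2,n}$: because the weights are random and data--dependent, the classical dominated Glivenko--Cantelli theorem cited in the single--stage proof does not apply verbatim, and one must verify that the equicontinuity (finite--net) step survives reweighting. The uniform weight bound $\widehat w_i^{(\ell)}\le C_\ell/n$ is precisely what rescues this step, since it forces the weighted empirical modulus $P_n^{(\ell)}\rho_\delta$ to be comparable to a constant multiple of the ordinary empirical modulus, so the standard covering argument goes through. The remaining pieces ($R_{1,n}$ body and tail) are direct transcriptions of the single--stage argument with $\pi_0\,c(M_n)$ replaced by $\underline\alpha_\ell\,c_\ell(M_n)$ and the empirical average replaced by the weighted measure $P_n^{(\ell)}$.
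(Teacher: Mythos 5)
Your proposal is correct and follows essentially the same route as the paper's proof: the identical decomposition into $R_{1,n}+R_{2,n}$, the same body/tail split at $M_n$ with the mean value theorem and the lower bound $\underline\alpha_\ell\,c_\ell(M_n)$ on the body, and the same finite-net/envelope argument for the weighted oracle term $R_{2,n}$ (which you spell out in more detail than the paper does). The only substantive deviation is in the tail of $R_{1,n}$, where you dominate the plug-in log-likelihood by the envelope $H^{(\ell)}_{\mathrm{plug}}$ and close with a truncate-then-limit argument, whereas the paper uses the crude growth bound $C_0+\log(1/h_n)$ together with $P_n^{(\ell)}(|Y|>M_n)\log(1/h_n)\to_p0$; your variant is, if anything, the cleaner use of Assumption~\ref{ass:log-env-plugin-stage} and correctly handles the fact that the weighted LLN is stated only for fixed integrands.
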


\begin{proof}
Fix $\eta=(\theta,\alpha)$ and abbreviate
\(
\ell_\eta^u(y):=\ell^{(\ell)}(y;\eta,u).
\)

Write
\[
\tilde {\mathcal{L}}_n^{(\ell)}(\eta)-\mathcal L^{(\ell)}(\eta)
=R_{1,n}(\eta)+R_{2,n}(\eta),
\]

where
\[
R_{1,n}(\eta)
:=P_n^{(\ell)}\big(\ell_\eta^{\tilde f_{\ell,n}^{\mathrm{FFT}}}
                 -\ell_\eta^{f_{\mathrm{bg}}^{(\ell),*}}\big),
\quad
R_{2,n}(\eta)
:=P_n^{(\ell)}\big(\ell_\eta^{f_{\mathrm{bg}}^{(\ell),*}}\big)
  -\E_{f^{(\ell)}}\big[\ell_\eta^{f_{\mathrm{bg}}^{(\ell),*}}(Y)\big].
\]

\emph{Control of $R_{1,n}$.}
Let $M_n$ be as in Assumption~\ref{ass:kde-window-stage} and write
$R_{1,n}=A_{n,M_n}+B_{n,M_n}$ with
\[
A_{n,M_n}(\eta)
:=P_n^{(\ell)}\big(\ell_\eta^{\tilde f_{\ell,n}^{\mathrm{FFT}}}
                  -\ell_\eta^{f_{\mathrm{bg}}^{(\ell),*}}\big)\mathbf 1_{\{|Y|\le M_n\}},
\]
\[
B_{n,M_n}(\eta)
:=P_n^{(\ell)}\big(\ell_\eta^{\tilde f_{\ell,n}^{\mathrm{FFT}}}
                  -\ell_\eta^{f_{\mathrm{bg}}^{(\ell),*}}\big)\mathbf 1_{\{|Y|>M_n\}}.
\]

On $\{|Y|\le M_n\}$, Assumption~\ref{ass:model-id-stage}(iii) and
$\alpha\in\Pi_{\min}$ imply
\(
\alpha f_0(Y;\theta)\ge \underline\alpha_\ell,c_\ell(M_n).
\)
By the mean value theorem,
\[
\big|\ell_\eta^{\tilde f_{\ell,n}^{\mathrm{FFT}}}(Y)
     -\ell_\eta^{f_{\mathrm{bg}}^{(\ell),*}}(Y)\big|
\le \frac{1-\underline\alpha_\ell}{\underline\alpha_\ell,c_\ell(M_n)}
     \big|\tilde f_{\ell,n}^{\mathrm{FFT}}(Y)-f_{\mathrm{bg}}^{(\ell),*}(Y)\big|
\]

on $\{|Y|\le M_n\}$. Using the weighted analogue of the trivial averaging
inequality,
\[
\Big|\sum_{i=1}^n \widehat w_i^{(\ell)}
        \Delta(Y_i)\mathbf 1_{\{|Y_i|\le M_n\}}\Big|
\le \sup_{|y|\le M_n}|\Delta(y)|,
\]

with $\Delta(y)=\tilde f_{\ell,n}^{\mathrm{FFT}}(y)-f_{\mathrm{bg}}^{(\ell),*}(y)$,
we obtain
\[
\sup_{\eta}|A_{n,M_n}(\eta)|
\le \frac{1-\underline\alpha_\ell}{\underline\alpha_\ell,c_\ell(M_n)}
    \sup_{|y|\le M_n}
    \big|\tilde f_{\ell,n}^{\mathrm{FFT}}(y)-f_{\mathrm{bg}}^{(\ell),*}(y)\big|
=O_p\!\Big(\frac{r_n}{c_\ell(M_n)}\Big)=o_p(1),
\]

by Assumptions~\ref{ass:kde-window-stage} and \ref{ass:window-stage}.

For the tail $B_{n,M_n}$, note that
\[
\big|\ell_\eta^{\tilde f_{\ell,n}^{\mathrm{FFT}}}(Y)
     -\ell_\eta^{f_{\mathrm{bg}}^{(\ell),*}}(Y)\big|
\le \big|\ell_\eta^{\tilde f_{\ell,n}^{\mathrm{FFT}}}(Y)\big|
    +\big|\ell_\eta^{f_{\mathrm{bg}}^{(\ell),*}}(Y)\big|,
\]

and that
\(
\sup_{\eta,y}\ell_\eta^{\tilde f_{\ell,n}^{\mathrm{FFT}}}(y)
\le C_0+\log(1/h_n)
\)
for all large $n$ by Assumptions~\ref{ass:model-id-stage}(iv) and
\ref{ass:kde-window-stage}, while
\(
\sup_{\eta}|\ell_\eta^{f_{\mathrm{bg}}^{(\ell),*}}(Y)|
\le H^{(\ell)}(Y)
\)
by Assumption~\ref{ass:model-id-stage}(vi).
Using $\sum_i\widehat w_i^{(\ell)}\le C_\ell$,
\[
\sup_{\eta}|B_{n,M_n}(\eta)|
\lesssim \big(C_0+\log(1/h_n)\big)\,P_n^{(\ell)}(|Y|>M_n)
       + P_n^{(\ell)}\big(H^{(\ell)}(Y)\mathbf 1_{\{|Y|>M_n\}}\big).
\]

By Assumption~\ref{ass:weights-stage}(i) applied to indicators and $H^{(\ell)}$,
and Assumption~\ref{ass:window-stage},
\[
\big(C_0+\log(1/h_n)\big)\,P_n^{(\ell)}(|Y|>M_n)
= o_p(1),\qquad
P_n^{(\ell)}\big(H^{(\ell)}(Y)\mathbf 1_{\{|Y|>M_n\}}\big)
\to_p 0.
\]

Thus $\sup_{\eta}|B_{n,M_n}(\eta)|=o_p(1)$.

\emph{Control of $R_{2,n}$.}
By Assumptions~\ref{ass:model-id-stage}(vi) and \ref{ass:weights-stage}(i),
for each fixed $\eta$,
\[
R_{2,n}(\eta)=P_n^{(\ell)}\big(\ell_\eta^{f_{\mathrm{bg}}^{(\ell),*}}\big)
             -\E_{f^{(\ell)}}\big[\ell_\eta^{f_{\mathrm{bg}}^{(\ell),*}}(Y)\big]
\to_p0.
\]

Continuity of $\eta\mapsto\ell_\eta^{f_{\mathrm{bg}}^{(\ell),*}}(y)$, compactness of
$\mathcal C\times\Pi_{\min}$, and the envelope $H^{(\ell)}$ imply,
by a finite $\varepsilon$--net argument, that
\(
\sup_{\eta}|R_{2,n}(\eta)|\to_p 0.
\)

Combining these bounds and taking the supremum over $\eta$ yields the claim.
\end{proof}
\subsection*{Proof of Theorem~\ref{thm:consistency-stage} (stage-$\ell$ consistency)}
By Proposition~\ref{prop:UC-stage},
\(
\sup_{\eta\in\mathcal C\times\Pi_{\min}}
|\tilde {\mathcal{L}}_n^{(\ell)}(\eta) - \mathcal L^{(\ell)}(\eta)|\to_p0.
\)
By Assumption~\ref{ass:model-id-stage}(v),
$L^{(\ell)}$ is continuous on the compact set
$\mathcal C\times\Pi_{\min}$ with a unique maximiser $\eta_\ell^*$.
Wald's argmax theorem yields $\hat\eta_{\ell,n}\to_p\eta_\ell^*$.

\subsection*{CLT proof and supporting lemmas}
\begin{assumption}
\label{ass:diff-info-stage}
For each $y$, $\eta\mapsto\ell^{(\ell)}(y;\eta,f_{\mathrm{bg}}^{(\ell),*})$ is twice
continuously differentiable on $\mathcal C\times\Pi_{\min}$ and
differentiation under the integral sign is valid. Define the oracle score and
Hessian
\[
S_n^{(\ell)}(\eta)
:=\sum_{i=1}^n \widehat w_i^{(\ell)}
  \nabla_\eta\ell^{(\ell)}\big(Y_i;\eta,f_{\mathrm{bg}}^{(\ell),*}\big),
\quad
H_n^{(\ell)}(\eta)
:=\sum_{i=1}^n \widehat w_i^{(\ell)}
  \nabla_\eta^2\ell^{(\ell)}\big(Y_i;\eta,f_{\mathrm{bg}}^{(\ell),*}\big).
\]

There exist envelopes $H_{1}^{(\ell)},H_{2}^{(\ell)}\in L^2(f^{(\ell)})$ with
\[
\|\nabla_\eta\ell^{(\ell)}(Y;\eta,f_{\mathrm{bg}}^{(\ell),*})\|\le H_{1}^{(\ell)}(Y),\quad
\|\nabla_\eta^2\ell^{(\ell)}(Y;\eta,f_{\mathrm{bg}}^{(\ell),*})\|
\le H_{2}^{(\ell)}(Y)
\]

for all $\eta\in\mathcal C\times\Pi_{\min}$, and
\[
\sup_{\eta\in\mathcal C\times\Pi_{\min}}
\big\|H_n^{(\ell)}(\eta)
    -\E_{f^{(\ell)}}[\nabla_\eta^2\ell^{(\ell)}(Y;\eta,f_{\mathrm{bg}}^{(\ell),*})]\big\|
\ \xrightarrow{p}\ 0.
\]

Moreover, the information matrix
\[
I^{(\ell)}(\eta_\ell^*)
:=\E_{f^{(\ell)}}\big[
   \nabla_\eta\ell^{(\ell)}(Y;\eta_\ell^*)
   \nabla_\eta\ell^{(\ell)}(Y;\eta_\ell^*)^\top
 \big]
=-\,\E_{f^{(\ell)}}\big[\nabla_\eta^2\ell^{(\ell)}(Y;\eta_\ell^*)\big]
\]

exists, is finite, and is positive definite.
\end{assumption}

\begin{assumption}
\label{ass:score-clt-stage}
Under the weights $\widehat w_i^{(\ell)}$ and $f^{(\ell)}$, the oracle score
satisfies
\[
\sqrt{n}\,S_n^{(\ell)}(\eta_\ell^*)
\ \rightsquigarrow\ \mathcal N\big(0,I^{(\ell)}(\eta_\ell^*)\big).
\]
\end{assumption}

\begin{assumption}
\label{ass:plugin-oracle-stage}
Let
\[
\tilde S_n^{(\ell)}(\eta)
:=\sum_{i=1}^n \widehat w_i^{(\ell)}
  \nabla_\eta\ell^{(\ell)}\big(Y_i;\eta,\tilde f_{\ell,n}^{\mathrm{FFT}}\big),
\quad
\tilde H_n^{(\ell)}(\eta)
:=\sum_{i=1}^n \widehat w_i^{(\ell)}
  \nabla_\eta^2\ell^{(\ell)}\big(Y_i;\eta,\tilde f_{\ell,n}^{\mathrm{FFT}}\big).
\]

Assume
\[
\sqrt{n}\,\big\|\tilde S_n^{(\ell)}(\eta_\ell^*)-S_n^{(\ell)}(\eta_\ell^*)\big\|
\ \xrightarrow{p}\ 0,
\qquad
\sup_{\eta\in\mathcal C\times\Pi_{\min}}
\big\|\tilde H_n^{(\ell)}(\eta)-H_n^{(\ell)}(\eta)\big\|
\ \xrightarrow{p}\ 0.
\]
\end{assumption}

\subsection*{Proof of Theorem~\ref{thm:clt-stage} (stage-$\ell$ CLT)}
By Theorem~\ref{thm:consistency-stage}, $\hat\eta_{\ell,n}\to_p\eta_\ell^*$.
The first--order condition for the maximiser is
$\tilde S_n^{(\ell)}(\hat\eta_{\ell,n})=0$. Let $\tilde\eta_{\ell,n}$ lie on
the line segment between $\hat\eta_{\ell,n}$ and $\eta_\ell^*$. A Taylor
expansion of $\tilde S_n^{(\ell)}$ around $\eta_\ell^*$ gives
\[
0=\tilde S_n^{(\ell)}(\eta_\ell^*)
  +\tilde H_n^{(\ell)}(\tilde\eta_{\ell,n})(\hat\eta_{\ell,n}-\eta_\ell^*),
\]
so
\[
\sqrt{n}(\hat\eta_{\ell,n}-\eta_\ell^*)
=-\,\tilde H_n^{(\ell)}(\tilde\eta_{\ell,n})^{-1}
   \sqrt{n}\,\tilde S_n^{(\ell)}(\eta_\ell^*).
\]

Decompose
\[
\sqrt{n}\,\tilde S_n^{(\ell)}(\eta_\ell^*)
=\sqrt{n}\,S_n^{(\ell)}(\eta_\ell^*)
 +\sqrt{n}\,\big(\tilde S_n^{(\ell)}(\eta_\ell^*)-S_n^{(\ell)}(\eta_\ell^*)\big).
\]

By Assumption~\ref{ass:score-clt-stage},
$\sqrt{n}\,S_n^{(\ell)}(\eta_\ell^*)\rightsquigarrow
 \mathcal N(0,I^{(\ell)}(\eta_\ell^*))$.
 
By Assumption~\ref{ass:plugin-oracle-stage},
$\sqrt{n}(\tilde S_n^{(\ell)}(\eta_\ell^*)-S_n^{(\ell)}(\eta_\ell^*))\to_p0$ and
\(
\sup_{\eta}\|\tilde H_n^{(\ell)}(\eta)-H_n^{(\ell)}(\eta)\|\to_p0.
\)
Assumption~\ref{ass:diff-info-stage} implies
$H_n^{(\ell)}(\eta)\to_p -I^{(\ell)}(\eta)$ uniformly on
$\mathcal C\times\Pi_{\min}$, so
$\tilde H_n^{(\ell)}(\tilde\eta_{\ell,n})\to_p -I^{(\ell)}(\eta_\ell^*)$.
By Slutsky's theorem,
\[
\sqrt{n}(\hat\eta_{\ell,n}-\eta_\ell^*)
\ \rightsquigarrow\ \mathcal N\big(0,I^{(\ell)}(\eta_\ell^*)^{-1}\big).
\]

\printbibliography

\end{document}